\renewcommand{\leq}{\leqslant}
\renewcommand{\ge}{\geqslant}
\newcommand {\matl}{\left[ \begin{matrix}}
\newcommand {\matr}{\end{matrix}\right]}
\newcommand {\Exp}{ \mathbb E }
\newcommand{\kl}{D_{\operatorname{KL}}}
\newcommand{\tv}{D_{\operatorname{TV}}}
\DeclareMathOperator*{\Expw}{\Exp}
\theoremstyle{plain}
\newtheorem{theorem}{Theorem}[section]
\newtheorem{proposition}[theorem]{Proposition}
\newtheorem{lemma}[theorem]{Lemma}
\newtheorem{corollary}[theorem]{Corollary}
\theoremstyle{definition}
\newtheorem{definition}[theorem]{Definition}
\newtheorem{assumption}[theorem]{Assumption}
\theoremstyle{remark}
\newtheorem{remark}[theorem]{Remark}
\definecolor{codegreen}{rgb}{0,0.6,0}
\definecolor{codegray}{rgb}{0.5,0.5,0.5}
\definecolor{codepurple}{rgb}{0.58,0,0.82}
\definecolor{backcolour}{rgb}{0.95,0.95,0.92}
\lstdefinestyle{mystyle}{
    backgroundcolor=\color{backcolour},   
    commentstyle=\color{codegreen},
    keywordstyle=\color{magenta},
    numberstyle=\tiny\color{codegray},
    stringstyle=\color{codepurple},
    basicstyle=\footnotesize,
    breakatwhitespace=false,         
    breaklines=true,                 
    captionpos=b,                    
    keepspaces=true,                 
    numbers=left,                    
    numbersep=5pt,                  
    showspaces=false,                
    showstringspaces=false,
    showtabs=false,                  
    tabsize=2
}
\date{\today}
\crefname{assumption}{assumption}{assumptions}
\Crefname{assumption}{Assumption}{Assumptions}
\crefname{theorem}{theorem}{theorems}
\Crefname{theorem}{Theorem}{Theorems}
\crefname{figure}{fig.}{}
\Crefname{figure}{Fig.}{}
\title{\textbf{Huber-robust likelihood ratio tests \\ for composite nulls and alternatives}}
\author[1]{Aytijhya Saha}
\author[2]{Aaditya Ramdas}
\affil[1]{Massachusetts Institute of Technology.  \texttt{aytijhya@mit.edu}}
\affil[2]{Carnegie Mellon University. \texttt{aramdas@cmu.edu}}
\begin{document}
\date{}

\date{}

\maketitle

\begin{abstract}
We propose an e-value based framework for testing arbitrary composite nulls against composite alternatives, when an $\epsilon$ fraction of the data can be arbitrarily corrupted. Our tests are inherently sequential, being valid at arbitrary data-dependent stopping times, but they are new even for fixed sample sizes, giving type-I error control without any regularity conditions. We first prove that least favourable distribution (LFD) pairs, when they exist, yield optimal e-values for testing arbitrary composite nulls against composite alternatives. Then we show that if an LFD pair exists for some composite null and alternative, then the LFDs of Huber’s $\epsilon$-contamination or total variation (TV) neighborhoods around that specific pair form the optimal LFD pair for the corresponding robustified composite hypotheses. Furthermore, where LFDs do not exist, we develop new robust composite tests for general settings.
Our test statistics are a nonnegative supermartingale under the (robust) null, even under a sequentially adaptive (non-i.i.d.) contamination model where the conditional distribution of each observation given the past data lies within an $\epsilon$ TV ball of some distribution in the original composite null. When LFDs exist, our supermartingale grows to infinity exponentially fast under any distribution in the ($\epsilon$ TV-corruption of the) alternative at the optimal rate. When LFDs do not exist, we provide an asymptotic growth rate analysis, showing that as $\epsilon \to 0$, the exponent converges to the corresponding Kullback–Leibler divergence, recovering the classical optimal non-robust rate.
Simulations validate the theory and demonstrate reasonable practical performance.
\end{abstract}

\section{Introduction}

In statistical hypothesis testing, the assumption that hypothesized models are perfectly specified is often far from reality. 
Real-world data rarely conforms perfectly to our idealized models, making it crucial to develop robust testing methodologies that can withstand small --- and potentially adversarial --- deviations from the idealized models.

Let $\mathcal{M}$ denote the set of all probability distributions over some measurable space $(\Omega, \mathcal{A})$. We consider both batch and sequential tests, but using the latter as a way of getting to the former.
In the former setting, we observe a batch of data $X_1,\dots,X_n$ from some unknown distribution $Q \in \mathcal M$. In the latter setting, we sequentially observe data points $X_1,X_2,\cdots$ from $Q$. 

Given $\alpha \in (0,1)$, this paper will consider the general problem of designing a level-$\alpha$ test for the null hypothesis
$Q \in H_0$ vs.\  the alternative hypothesis that $Q \in H_1$, for some given sets $H_0, H_1 \subset \mathcal M$, when an $\epsilon$ fraction of the data can be arbitrarily corrupted by an adversary or, more generally, when the true data distribution lies within an $\epsilon$ neighborhood of the hypothesized models. 
Our tests are valid (control type-1 error at $\alpha$) at arbitrary stopping times;  these yield batch tests at fixed times as a special case. 

Our advances build on two threads of the literature: classic papers in robustness and recent advances in composite null testing using \emph{e-values}. First, for a simple point null and alternative, Huber \cite{huber1965robust} constructed a robust sequential likelihood ratio test. This will be a central starting point for the current paper. While Huber constructed least favourable distribution (LFD) pair to design optimal tests in both finite sample size and sequential settings, he did not provide a way to construct ``power-one" sequential tests that can control the type-1 error at a prespecified level $\alpha$. 

The second thread of literature that is relevant involves recent fundamental advances in (non-robust) composite hypothesis testing. For example, 
Grünwald et al. \cite{grunwald2020safe} introduces growth-rate optimal (GRO) e-variables
for testing problems with composite null and alternative.
The universal inference method~\cite{wasserman2020universal} used an \emph{e-value} to propose a level-$\alpha$ test for any composite null without requiring any regularity conditions, but requiring reference measures. Recently, Larsson et al.~\cite{larsson2024numeraire} constructed the log-optimal numeraire e-value based on the \emph{Reverse Information Projection} (RIPr), which also does not require any regularity conditions or even reference measures, and is always more powerful than universal inference. We will introduce the history and definition of the RIPr in detail later, but it has been of great interest in information theory, and more recently, statistics, for several decades~\cite{li1999mix,grunwald2020safe,lardy2023universal}. Since the numeraire always dominates the universal inference e-value in the non-robust setting, we only extend the numeraire to the robust setting. We use the robust numeraire to construct a nonnegative supermartingale under the (contaminated) null, that is easy to threshold to get a level-$\alpha$ test. 

To summarize, this work combines and significantly extends the techniques in~\cite{huber1965robust,grunwald2020safe} and~\cite{larsson2024numeraire} to yield new fixed-sample and sequential robust tests for general composite nulls and alternatives.  Beyond  testing, the proposed supermartingales can also directly serve as building blocks for  Huber-robust change detection \cite{shin2022detectors} and localization \cite{saha2025post-detection}.

The rest of this introduction recaps Huber's idea, introduces e-values and test supermartingales, discusses related work in more detail and delineates our contributions relative to these.


\subsection{Huber's proposal for a simple null versus simple alternative and our modification}

Consider sequentially testing 
$Q=P_0$ versus $Q=P_1$, for some given $P_0,P_1 \in \mathcal M$. By this, we mean that we observe a stream of data from $P_0^\infty$ or from $P_1^\infty$. Let $p_0$ and $p_1$ be the respective densities with respect to some dominating measure $\mu$. The classical sequential probability ratio test (by Wald) for this testing problem is not robust: a single factor $p_1(X_j)/p_0(X_j)$ equal or close to $0$ or $\infty$ may ruin the entire nonnegative martingale $T_n=\prod_{i=1}^n\frac{p_1(X_i)}{p_0(X_i)}$.

Huber \cite{huber1965robust} formalized the problem of robustly testing a simple  $P_0$ against a simple  $P_1$ by assuming that the true underlying distribution $Q$ lies in some appropriate neighborhood of either of the idealized models $P_0$ or $P_1$. To account for the possibility of small deviations from the idealized models $P_j$, he expanded the hypotheses to:
\begin{equation}
\label{contamination-model}
    H^{\epsilon}_j=\{Q\in\mathcal{M}: Q=(1-\epsilon)P_j+\epsilon H, H\in\mathcal{M}\},
\end{equation}
or, more generally, to
\begin{equation}
\label{tv-model}
    H^{\epsilon}_j=\{ Q \in \mathcal{M} : \tv(P_j, Q) \leq \epsilon \},
\end{equation}
where $j=0,1$ and  $\tv$ denotes the total variation distance. Thus, a robust test of $P_0$ versus $P_1$ is effectively a test of the composite null $H^{\epsilon}_0$ against the composite alternative $H^{\epsilon}_1$. Note that \eqref{tv-model} is strictly larger than \eqref{contamination-model}. He proposed risk-optimal tests $\phi$ for the following three notions of optimality:
\begin{equation}
\begin{aligned}
\label{eq:risk-optimality}
    &\min_\phi \max_{j=0,1} \sup_{Q_j\in H^{\epsilon}_j} R(Q_i,\phi)\\
     &\min_\phi \sup_{Q_1\in H^{\epsilon}_1} R(Q_1,\phi) 
        \;\; \text{s.t. } \sup_{Q_0\in H^{\epsilon}_0} R(v_0,\phi)\leq \alpha\\
        &\min_\phi \sup_{(Q_0,Q_1)\in H^{\epsilon}_0\times H^{\epsilon}_1}
        \big[ \pi_0 R(Q_0,\phi) + \pi_1 R(Q_1,\phi)\big].
\end{aligned}
\end{equation}

He solved the preceding problems by looking for a ``least favorable distribution pair" in terms of risk, where the risk $R$ for a hypothesis test $\phi$ is defined as
\begin{equation}
\label{risk}
      R(P_0, \phi) = C_0 P_{0}(\phi \text{ accepts } H_1),        ~~
        R(P_1, \phi) = C_1 P_{1}(\phi \text{ accepts } H_0).
\end{equation}
The notion of a least favorable distribution pair in terms of risk is formally defined below.
\begin{definition}
    A pair $(P_0^*, P_1^*) \in \mathcal{P}_0 \times \mathcal{P}_1$ (having densities $p_0^*$ and $p_1^*$ with respect to some common dominating measure) is called \textit{least favorable distribution (LFD) pair} in terms of risk $R$ for testing $\mathcal{P}_0$ vs. $\mathcal{P}_1$, if for every likelihood ratio test $\phi$ between $P_0^*$ and $P_1^*$,
    \[
       R(P_j^*,\phi) \geq R(P_j,\phi), ~\text{ for all } P_j \in \mathcal{P}_j,~ j=0,1.
    \]
\end{definition}

Huber derived the LFD pair 
$(Q_{0,\epsilon},Q_{1,\epsilon})$ in $H^{\epsilon}_0\times H^{\epsilon}_1$, such that any probability ratio test of $Q_{0,\epsilon}$ vs.\ $Q_{1,\epsilon}$ is precisely the optimal test for $H^{\epsilon}_0$ vs.\ $H^{\epsilon}_1$.

He showed that their likelihood ratio $\frac{q_{1,\epsilon}(X)}{q_{0,\epsilon}(X)}=\max\{c^{\prime},\min\{c^{\prime\prime},\frac{p_1(X)}{p_0(X)}\}\}$ is a truncation of the original likelihood ratio $\frac{p_1(X)}{p_0(X)}$, at some specific values of the $c^{\prime}$ and $c^{\prime\prime}$, which depend on $\epsilon, P_0$ and $P_1$. And hence, Huber's truncated likelihood ratio test is the optimal solution for the finite-sample problems defined in \eqref{eq:risk-optimality}. He also considered (two-sided) sequential tests with the stopping time defined below:
\begin{equation}
    N=\inf\left\{n\geq 0:S_n\leq a \text{ or } S_n\geq b \right\},
\end{equation}
where $a<b$ are fixed numbers. The testing procedure is to stop at stage $N$ and reject $H^{\epsilon}_0$ if $S_n\geq b$ and accept $H^{\epsilon}_0$ if $S_n\leq a$. Huber \cite{huber1965robust} proved that  $Q_{j,\epsilon}\in H_{j}^\epsilon, j=0,1$ are ``least favorable distribution pair"in the sequential setting as well, i.e,
\begin{equation}
    Q_{0,\epsilon}[S_n\geq b]=\sup\{Q[S_n\geq b]: Q\in H_0^\epsilon\} ,  \text{ and }
\end{equation}
\begin{equation}
    Q_{1,\epsilon}[S_n\leq a]=\sup\{Q[S_n\leq a]: Q\in H_1^\epsilon\} .
\end{equation}
and hence, the test is optimal for type-I and type-II error probabilities. 

\noindent \textbf{Our level-$\alpha$, anytime-valid variant of Huber's test:} Huber does not describe how exactly to calculate the thresholds if a targeted type-I error of $\alpha$ is specified. 
We prove an interesting general result (that holds true even outside the context of this paper) that the likelihood ratio of the LFD pair is actually an e-value for the composite null $H_0^\epsilon$. Further, it can also be shown to be growth rate optimal (log-optimal). That allows us 
to construct
a level-$\alpha$ test, which has power one. As a side benefit, it will also be an ``anytime-valid" test \cite{ramdas2022game} for this problem, to be defined later.  We will actually construct an ``anytime-valid e-value" (or an \emph{e-process}, to be defined later) that is a continuous measure of evidence against the null, which can be thresholded at $1/\alpha$ to yield a level-$\alpha$ test. Though we do not expand on the following point further, we note in passing that this e-process can also be converted to an ``anytime-valid p-value'', which would allow us to continuously monitor the data throughout the experiment, report a p-value at any stopping time, and also reject the null at a level $\alpha$ when that p-value drops below $\alpha$. A fixed sample size test can be obtained by simply stopping monitoring at a fixed time $n$.

This framework is an offshoot of Robbins' power-one sequential tests (or one-sided sequential probability ratio tests). The framework is rooted in the construction of “test supermartingales”~\cite{shafer2019game} --- or, more generally, “e-processes” \cite{ramdas2022testing} --- which can be interpreted as the wealth of a gambler playing a stochastic game; these are introduced below. To achieve anytime-valid inference, we construct a test supermartingale under $H^{\epsilon}_0$, ensuring its validity at arbitrary data-dependent stopping times, accommodating continuous monitoring and analysis of accumulating data, and optional stopping or continuation. In later sections, we extend our method to composite nulls and composite alternatives, leveraging the techniques of ``predicable plug-in'' and ``Reverse Information Projection'', which are well-known yet sophisticated tools employed in the non-robust setting.

{Note that $\epsilon$ should be chosen apriori, according to how suspicious one is of the data. Huber's model does not allow $\epsilon$ to be chosen based on the data.  The larger the $\epsilon$, the more robust the procedure will be to adversarial corruptions (stronger type-1 error guarantee), but the more conservative the tests will be (weaker power guarantee).}

Before we delve into the details of our method, it is crucial to discuss what test (super)martingales are and how they play a key role in constructing sequential anytime-valid tests. 

\subsection{Background on e-values, test supermartingales, one-sided tests, growth rate and consistency}
An e-variable is a nonnegative random variable that has expectation at most one under the null. Mathematically, a nonnegative random variable $B$ is an e-variable for the null $P \in H_0^\epsilon$ if $\mathbb E_P(B)\leq 1$, for all $P\in H_0^\epsilon$. The value
realized by an e-variable (that is, its instantiation) is called an e-value.

An integrable process $M \equiv \{M_n\}$ 
that is adapted to a filtration $\mathscr{F} \equiv \{\mathscr{F}_n\}_{n\geq 0}$,  is called a \emph{martingale} for $H_0^\epsilon$ if 
\begin{equation}\label{eq:martingale}
   \mathbb{E}_{{P}}[M_n \mid \mathscr{F}_{n-1}] = M_{n-1} , \forall P\in H_0^\epsilon  
\end{equation}
 for all $n\geq 1$. $M$ is called a \emph{supermartingale} for $H_0^\epsilon$  if for all $n\geq 1$,
\begin{equation}\label{eq:supermartingale}
   \mathbb{E}_{{P}}[M_n \mid \mathscr{F}_{n-1}] \leq M_{n-1}, \forall P\in H_0^\epsilon .
\end{equation}
Crucially, $M$ is called a \emph{test (super)martingale} for $H_0^{\epsilon}$ if it is a (super)martingale \emph{for every $ P \in H_0^{\epsilon}$}, and if it is nonnegative with $M_0=1$. A stopping time $\tau$ is a nonnegative integer-valued random variable such that $\{\tau \leq n\} \in \mathscr{F}_{n}$ for each $n\in \mathbb N$.
 Denote by $\mathcal{T}$ the set of all stopping times, including ones that may never stop.

Ville's inequality \cite{ville1939etude} implies that the test (super)martingale satisfies 
\begin{equation}
    \sup_{P\in H_0^{\epsilon}}P(\exists n\in \mathbb N :M_n\geq 1/\alpha)\leq \alpha.
\end{equation}
See \cite[Lemma 1]{howard2020time} for a short proof. 
The above equation is equivalent to $P(M_\tau\geq 1/\alpha)\leq \alpha, \forall \tau \in \mathcal{T}, P \in H_0^{\epsilon}$; see \cite[Lemma 3]{howard2021time}.
This ensures that if 
 we stop and reject the null at the stopping time
\begin{equation}\label{eq:stopping}
    \tau_{\alpha}=\inf\left\{n\geq 1 : M_n\geq 1/{\alpha}\right\},
\end{equation}
it results in a level-$\alpha$ sequential test, meaning that if the null is true, the probability that it ever stops falsely rejecting the null is at most $\alpha$ (under the null, $ \tau_{\alpha} = \infty$ with probability $1-\alpha$). 

The above definition of a sequential test  differs from Wald’s original ideas \cite{wald1945sequential}. In the latter, the null hypothesis might eventually be accepted or rejected, with a predetermined stopping rule based on the desired Type-I and Type-II error rates. In contrast, our framework aligns with Robbins' ``power-one tests'' \cite{darling1968some,robbins1970statistical}, or one-sided tests, where one only specifies a target Type-I error level and we only stop for rejecting the null, but never stop for acceptance. Such a test is called consistent if it is (asymptotically) power one under any alternative $P \in H^1_\epsilon$.

A test (super)martingale is called consistent (or power one) for $H_1^{\epsilon}$, if $\lim_{n\to \infty} M_n = \infty$ almost surely for any $ P \in H_1^{\epsilon}$, meaning that under the alternative, it accumulates infinite evidence against the null in the limit and eventually crosses any finite threshold for rejection. As Kelly noted in his seminal work~\cite{kelly1956new}, the evidence grows exponentially fast in i.i.d.\ settings, so one can define the ``growth rate'' of $M$ is defined as $\inf_{ P \in H_1^{\epsilon}} \liminf_{n \to \infty} \log M_n/n$~\cite{shafer2021testing,grunwald2020safe,waudby2023estimating}. A positive growth rate implies consistency. The test (super)martingale with the largest possible growth rate is called log-optimal, an optimality notion advocated by many influential researchers, like \cite{breiman1961optimal}, \cite{cover1987log}, \cite{shafer2021testing} and \cite{grunwald2020safe} amongst others.

If $\epsilon=0$ and $H_0 =\{P_0\}, H_1=\{P_1\}$, then we are back to the standard non-robust testing of $P_0$ against $P_1$ (except that we desire a one-sided, or power-one, sequential test). In this setting, 
\begin{equation}\label{eq:sprt}
T_n=\prod_{i=1}^n\frac{p_1(X_i)}{p_0(X_i)}
\end{equation}
is a test martingale for $P_0$. Recalling \eqref{eq:stopping}, we may reject the null at the stopping time $\inf\left\{n\geq 1 : T_n\geq 1/{\alpha}\right\}$ to get a level $\alpha$ sequential test. The growth rate of the test reduces to $\mathbb E_{P_1}\log\frac{p_1(X)}{p_0(X)} =\kl(P_1,P_0)$. And it is the optimal growth rate for testing $P_0$ vs.\ $P_1$ \cite{shafer2021testing}, a result with roots in \cite{kelly1956new} and \cite{breiman1961optimal}.

\subsection{Related Work}

The formalization of robust statistics can be traced back to the mid-20th century. Early pioneers in the development of the concept include
\cite{tukey1960survey,huber1965robust,huber1968robust}. Prominent ideas in robust estimation include, among others, M-estimators, trimming, and influence functions. The
use of M-estimators in robust statistical procedures dates back to \cite{huber1964robust}, which achieves
robustness by curbing and bounding the influence that individual data points can make on the
statistics. On the other hand, trimming refers
to the practice of directly discarding outliers \cite{anscombe1960rejection}, and trimmed means have long
been known to be robust \cite{bickel1965some}.  More recent developments extend this framework by incorporating data-driven methods, using Wasserstein, MMD, and entropic-regularized Wasserstein distances. For example,  \cite{gao2018robust} constructs uncertainty
sets under the null and the alternative distributions, which are sets centered around
the empirical distribution defined via the Wasserstein metric. \cite{wang2022data} proposes a data-driven approach to robust hypothesis testing using
Sinkhorn uncertainty sets. \cite{magesh2023robust} uses moment-constrained uncertainty sets. \cite{sun2023kernel} introduces a kernel-based approach to robust hypothesis testing, where they construct tests that optimize worst-case performance within an MMD uncertainty set. The main difference of our work from all of these is that we consider (a generalization of) Huber's original $\epsilon$-contamination model as our notion of robustness, which is akin to using the TV distance.

Recently, \cite{park2023robust} proposed a robust version of the universal inference \cite{wasserman2020universal} approach for constructing valid confidence sets under weak regularity conditions, despite possible model misspecification. However, their notion of robustness differs from ours, and they test whether $P_0$ or $P_1$ is closer to the true data-generating distribution, { meaning that before contamination it is a simple versus simple hypothesis test while we consider composite classes (even before contamination).} \cite{wang2023huber} introduced Huber-robust confidence sequences leveraging supermartingales.
The work of \cite{agrawal2024crimed} is related to ours, as they investigated the multi-armed bandit problem under the Huber corruption model and briefly discussed its connection to the mean testing problem in their Appendix. They analyzed the minimizer of a function involving KL divergences, which is related to RIPr used in our methodology. However, their analysis is restricted to the Gaussian setting with known variance, whereas our framework and theory are general. Furthermore, their work addresses only the $\epsilon$ corruption model~\eqref{contamination-model}, while our approach accommodates the more general settings including $\epsilon$-TV neighborhoods~\eqref{tv-model}.

Sequential hypothesis testing has a long-standing history, beginning with the sequential probability ratio test (SPRT) of \cite{wald1945sequential}. There have been a few studies in the literature that address the feasibility of robustifying
sequential tests, mostly notably the works by \cite{huber1965robust} and by \cite{quang1985robust}, both of them
robustifying the likelihood ratio by censoring.
\cite{quang1985robust} considered the sequential testing of two distributions $P_{-\epsilon}$ and $P_{\epsilon}$, where these two distributions approach each other as $\epsilon\to 0$ and proved (under regularity assumptions) that the SPRT based on the least favorable pair of distributions 
 given by \cite{huber1965robust} is asymptotically least favorable for the expected sample size and is asymptotically minimax.
While these earlier tests yield valid inference at a particular prespecified
stopping rule, e-values and e-processes (as used in the current paper) have recently emerged as a promising tool to construct ``anytime-valid" tests \cite{ramdas2022game}. Further, these previous works only considered simple versus simple tests (when contamination is excluded), while we work on composite tests.

\subsection{Contributions}

We first show in general that the likelihood ratio of an LFD pair is an e-value, which is also
growth rate optimal (or, log optimal). As a consequence of the result, it follows that Huber's truncated likelihood ratio test statistic \cite{huber1965robust} is in fact already a test supermartingale for $H_0^\epsilon$. The supermartingale tools enable us to generalize the usual Huber contamination model, demonstrating that our sequential test retains type-I error validity even under sequentially adaptive contaminations under the null, as well as at arbitrary stopping times, thereby extending Huber's contributions in this case.

Next, we extend the framework to handle both composite nulls and composite alternative hypotheses (before contamination), a direction that, to the best of our knowledge, has not been pursued previously. We prove that when an LFD pair exists for a composite null and alternative, the LFDs of Huber’s $\epsilon$-contamination neighborhoods around that pair form the optimal LFD pair for the corresponding robustified hypotheses. Together with the earlier result, this enables the construction of optimal tests across all regimes: fixed-sample settings, two-sided (Wald-type) sequential settings, and anytime-valid testing scenarios.

Furthermore, where an LFD pair does not
exist for the pre-contamination composite null and composite alternative, we develop new robust composite tests for general settings, where we utilize very recent advances in composite null hypothesis testing, such as universal inference~\cite{wasserman2020universal} and the reverse information projection~\cite{grunwald2020safe,lardy2023universal,larsson2024numeraire}.  However, for this case, while we cannot prove finite sample log-optimality, we do analyze the growth rate under the alternative, which, as $\epsilon \to 0$, converges to the optimal non-robust growth rate (a certain KL divergence between null and alternative), under some assumptions. To justify these assumptions, we provide sufficient conditions under which they hold.
We emphasize that our robust tests control the type-I error at level $\alpha$ without requiring any assumptions on the pre-contamination composite null and composite alternative.

To summarize, this paper provides a relatively comprehensive set of methods for Huber-robust sequential hypothesis testing, with a theory that accurately predicts performance in experiments.


\subsection{Paper outline}
The rest of the paper is organized as follows. In \cref{sec:lfd-optimality}, we show that the likelihood ratio of the LFD pair is an e-value, and in fact the log-optimal e-value. In \Cref{lfd-composite}, we prove that if an LFD
pair exists for some composite null and alternative, then the LFDs of Huber’s
$\epsilon$-contamination neighborhoods around that pair form the optimal LFD pair
for the corresponding robustified hypotheses, yielding log-optimal e-values for this case as well. Finally, in \cref{sec:comp-general}, we consider general composite nulls and alternatives for which an LFD pair may not exist, and fixed $\epsilon$ optimality for tractable methods appears hard to prove, and demonstrate that our methods are at least asymptotically optimal as $\epsilon\to 0$.  \Cref{sec:expt} presents a comprehensive set of simulation studies that validate our theoretical findings and demonstrate the practical performance of our approach. This
article is concluded in \cref{sec:conc}. Omitted proofs of the theoretical results
are provided in \cref{a:proofs}.

\section{LFD pairs yield log-optimal e-values}
\label{sec:lfd-optimality}
In this section, we prove a very general result that if a least favorable distribution pair for testing any two hypotheses $\mathcal{P}_0$ vs.\ $\mathcal{P}_1$ exists, then their likelihood ratio yields optimal e-values. The implications are that the least favorable distribution pair that \cite{huber1965robust} defined for solving \eqref{eq:risk-optimality} also solves the problem of finding ``growth rate optimal in worst-case (GROW)" or ``log optimal" e-values. \cite{grunwald2020safe} introduced the latter notion  and showed their existence under several assumptions, including full support of distributions and convexity of $\mathcal{P}_0, \mathcal{P}_1$. Our theorem does not require such assumptions (but needs an LFD pair).

\begin{theorem}
\label{thm:lfd-optimal}
  If  $(P_0^*, P_1^*)$ is a least favorable distribution (LFD) pair in terms of the risk defined in \eqref{risk} for testing $\mathcal{P}_0$ vs.\ $\mathcal{P}_1$, then  $\frac{dP_{1}^*(X)}{dP_{0}^*(X)}$ is a GROW e-variable for testing $\mathcal{P}_0$ vs.\ $\mathcal{P}_1$. That is, if $\mathcal{E}(\mathcal{P}_0)$ denotes the set of all e-variables for $\mathcal{P}_0$, we have that $\frac{dP_{1}^*(X)}{dP_{0}^*(X)}\in\mathcal{E}(\mathcal{P}_0)$ and $$\sup_{E\in \mathcal{E}(\mathcal{P}_0)} \inf_{P_1\in \mathcal{P}_1}\mathbb{E}_{P_1}\!\left(\log E\right)=\inf_{P_1\in \mathcal{P}_1}\mathbb{E}_{P_1}\!\left(\log\tfrac{d P_1^*(X)}{d P_0^*(X)}\right)= \kl(P_1^*,P_0^*)=\inf_{(P_0,P_1)\in\mathcal{P}_0 \times \mathcal{P}_1}\kl(P_1,P_0).$$ 
\end{theorem}
\begin{proof}
    Since $(P_0^*, P_1^*)$ is an LFD pair for testing $\mathcal{P}_0$ vs.\ $\mathcal{P}_1$, for any $P_0\in\mathcal{P}_0,$ we have
\begin{align*}
  &{R(P_0, \phi) \leq R(P_0^*, \phi)} \\
  &{\implies C_0 {P_0}(\phi \text{ accepts } {H_1}) 
      \leq C_0 {P_0^*}(\phi \text{ accepts } {H_1})} \text{ for every LRT } \phi \text{ between } P_0^* \text{ and } P_1^*\\
  &{\implies {P_0}\!\left(\tfrac{d P_1^*}{d P_0^*}(X) > \eta\right) 
      \leq {P_0^*}\!\left(\tfrac{d P_1^*}{d P_0^*}(X) > \eta\right), \text{ for all }\eta }\\
  &{\implies \int_{\eta \geq 0}  {P_0}\!\left(\tfrac{d P_1^*}{d P_0^*}(X) > \eta\right)\, d\eta
      \leq
    \int_{\eta \geq 0}  {P_0^*}\!\left(\tfrac{d P_1^*}{d P_0^*}(X) > \eta\right)\, d\eta} \\
    &{\implies \mathbb{E}_{P_0}\!\left(\tfrac{d P_1^*}{d P_0^*}(X)\right)
      \leq
    \mathbb{E}_{P_0^*}\!\left(\tfrac{d P_1^*}{d P_0^*}(X)\right)
      = 1.}
\end{align*}
Therefore,  $\frac{dP_{1}^*(X)}{dP_{0}^*(X)}$ is an e-variable. Now, for the log-optimality part, we first show that $ \mathbb{E}_{P_1}\!\left(\log\tfrac{d P_1^*(X)}{d P_0^*(X)}\right)
      \geq
    \mathbb{E}_{P_1^*}\!\left(\log\tfrac{d P_1^*(X)}{d P_0^*(X)}\right)$. Since $(P_0^*, P_1^*)$ is the least favorable pair, for testing $\mathcal{P}_0$ vs.\ $\mathcal{P}_1$, for any $P_1\in\mathcal{P}_1$,
    
\begin{align*}
  &{R(P_1, \phi) \leq R(P_1^*, \phi)} \\
  &{\implies C_1 {P_1}(\phi \text{ accepts } {H_0}) 
      \leq C_1 {P_1^*}(\phi \text{ accepts } {H_0})}, \text{ for every LRT } \phi \text{ between } P_0^* \text{ and } P_1^*\\
  &{\implies {P_1}\!\left(\tfrac{d P_1^*}{d P_0^*}(X) \leq \eta\right) 
      \leq {P_1^*}\!\left(\tfrac{d P_1^*}{d P_0^*}(X) \leq \eta\right), \text{ for all }\eta }\\
      &{\implies   {P_1}\!\left(\log\tfrac{d P_1^*(X)}{d P_0^*(X)} > \log\eta\right)\,
      \geq
     {P_1^*}\!\left(\log\tfrac{d P_1^*(X)}{d P_0^*(X)} > \log\eta\right)\ } \text{ for all }\eta >0\\
  &{\implies \int_{t \geq 0}  {P_1}\!\left(\log\tfrac{d P_1^*(X)}{d P_0^*(X)} > t\right)\, dt
      \geq
    \int_{t \geq 0}  {P_1^*}\!\left(\log\tfrac{d P_1^*(X)}{d P_0^*(X)} > t\right)\, dt}.
\end{align*}
Similarly, for any $P_1\in\mathcal{P}_1$,
\begin{align*}
  &{R(P_1, \phi) \leq R(P_1^*, \phi)} \\
  &{\implies C_1 {P_1}(\phi \text{ accepts } {H_0}) 
      \leq C_1 {P_1^*}(\phi \text{ accepts } {H_0})}, \text{ for every LRT } \phi \text{ between } P_0^* \text{ and } P_1^*\\
  &{\implies {P_1}\!\left(\tfrac{d P_1^*}{d P_0^*}(X) < \eta\right) 
      \leq {P_1^*}\!\left(\tfrac{d P_1^*}{d P_0^*}(X) < \eta\right), \text{ for all }\eta }\\
      &{\implies   {P_1}\!\left(\log\tfrac{d P_1^*(X)}{d P_0^*(X)} < \log\eta\right)\,
      \leq
     {P_1^*}\!\left(\log\tfrac{d P_1^*(X)}{d P_0^*(X)} < \log\eta\right)\ } \text{ for all }\eta >0\\
  &{\implies \int_{t \geq 0}  {P_1}\!\left(-\log\tfrac{d P_1^*(X)}{d P_0^*(X)} > t\right)\, dt
      \leq
    \int_{t \geq 0}  {P_1^*}\!\left(-\log\tfrac{d P_1^*(X)}{d P_0^*(X)} > t\right)\, dt}.
\end{align*}
Therefore, for any $P_1\in\mathcal{P}_1$,
\begin{align*}
    \mathbb{E}_{P_1}\!\left(\log\tfrac{d P_1^*(X)}{d P_0^*(X)}\right) &=\int_{t \geq 0}  {P_1}\!\left(\log\tfrac{d P_1^*(X)}{d P_0^*(X)} > t\right)\, dt-\int_{t \geq 0}  {P_1}\!\left(-\log\tfrac{d P_1^*(X)}{d P_0^*(X)} > t\right)\, dt\\
    &\geq\int_{t \geq 0}  {P_1^*}\!\left(\log\tfrac{d P_1^*(X)}{d P_0^*(X)} > t\right)\, dt-\int_{t \geq 0}  {P_1^*}\!\left(-\log\tfrac{d P_1^*(X)}{d P_0^*(X)} > t\right)\, dt\\
     & =
    \mathbb{E}_{P_1^*}\!\left(\log\tfrac{d P_1^*(X)}{d P_0^*(X)}\right) = \kl(P_1^*,P_0^*).
\end{align*}
So, we have shown above that $ \inf_{P_1\in \mathcal{P}_1}\mathbb{E}_{P_1}\!\left(\log\tfrac{d P_1^*(X)}{d P_0^*(X)}\right)= \kl(P_1^*,P_0^*).$ Now, suppose, if possible, $E$ is an e-variable for $\mathcal{P}_0$, which satisfies 
$ \inf_{P_1\in \mathcal{P}_1}\mathbb{E}_{P_1}\!\left(\log E\right)> \kl(P_1^*,P_0^*).$ Then, we would have $\mathbb{E}_{P_1^*}\!\left(\log E\right)> \kl(P_1^*,P_0^*)$, this is a contradiction because $E$ is an e-variable for the simple null ${P}_0^*$ (against simple alternative $P_1^*$) as well and hence its maximum possible growth rate is $\kl(P_1^*,P_0^*)$.  Thus, we have proved
 $\sup_{E\in \mathcal{E}(\mathcal{P}_0)} \inf_{P_1\in \mathcal{P}_1}\mathbb{E}_{P_1}\!\left(\log E\right)=\inf_{P_1\in \mathcal{P}_1}\mathbb{E}_{P_1}\!\left(\log\tfrac{d P_1^*(X)}{d P_0^*(X)}\right)= \kl(P_1^*,P_0^*).$

If $\kl(P_1^*,P_0^*)>\inf_{(P_0,P_1)\in\mathcal{P}_0 \times \mathcal{P}_1} \kl(P_1,P_0)$, then $\kl(P_1^*,P_0^*)> \kl(P_1,P_0)$, for some $(P_0,P_1)\in\mathcal{P}_0 \times \mathcal{P}_1$, but then  $dP_1^*/dP_0^*$ is an e-value for $P_0$ vs $P_1$, whose growth rate 
$\mathbb{E}_{P_1}\!\left(\log\tfrac{d P_1^*(X)}{d P_0^*(X)}\right)\geq\inf_{P_1\in \mathcal{P}_1}\mathbb{E}_{P_1}\!\left(\log\tfrac{d P_1^*(X)}{d P_0^*(X)}\right)= \kl(P_1^*,P_0^*)> \kl(P_1,P_0)$, but
 $\kl(P_1,P_0)$ is largest possible growth rate for an e-value for $P_0$ vs $P_1$, which is a contradiction. 

 Therefore, we finally have 
$$\sup_{E\in \mathcal{E}(\mathcal{P}_0)} \inf_{P_1\in \mathcal{P}_1}\mathbb{E}_{P_1}\!\left(\log E\right)=\inf_{P_1\in \mathcal{P}_1}\mathbb{E}_{P_1}\!\left(\log\tfrac{d P_1^*(X)}{d P_0^*(X)}\right)= \kl(P_1^*,P_0^*)=\inf_{(P_0,P_1)\in\mathcal{P}_0 \times \mathcal{P}_1}\kl(P_1,P_0).$$
\end{proof}

\begin{remark}
   The last part of the theorem $\kl(P_1^*,P_0^*)=\inf_{(P_0,P_1)\in\mathcal{P}_0 \times \mathcal{P}_1}\kl(P_1,P_0)$ also follows from Corollary 1 of \cite{poor1980robust}, but we have a direct proof here.
\end{remark}

\begin{remark}
    It is easy to verify that LFD is symmetric around the null and alternative hypotheses. That is, $(P_0^*, P_1^*)$ is an LFD pair in terms of the risk defined in \eqref{risk} for testing $\mathcal{P}_0$ vs.\ $\mathcal{P}_1$ is equivalent to saying $(P_1^*, P_0^*)$ is an LFD pair in terms of the risk defined in \eqref{risk} for testing $\mathcal{P}_1$ vs.\ $\mathcal{P}_0$. And hence, we also have that
  $\frac{dP_{0}^*(X)}{dP_{1}^*(X)}$ is a GROW e-variable for testing $\mathcal{P}_1$ vs.\ $\mathcal{P}_0$, that is, $\frac{dP_{0}^*(X)}{dP_{1}^*(X)}\in\mathcal{E}(\mathcal{P}_1)$ and $$\sup_{E\in \mathcal{E}(\mathcal{P}_1)} \inf_{P_0\in \mathcal{P}_0}\mathbb{E}_{P_0}\!\left(\log E\right)=\inf_{P_0\in \mathcal{P}_0}\mathbb{E}_{P_0}\!\left(\log\tfrac{d P_0^*(X)}{d P_1^*(X)}\right)= \kl(P_0^*,P_1^*)=\inf_{(P_0,P_1)\in\mathcal{P}_0 \times \mathcal{P}_1}\kl(P_0,P_1),$$ where $\mathcal{E}(\mathcal{P}_1)$ is the set of all e-variables for $\mathcal{P}_1.$
\end{remark}

The above theorem allows us to construct level-$\alpha$ tests using Huber's LFD pairs for testing \eqref{contamination-model} or \eqref{tv-model}, which we elaborate on next. However, it is worth noting that the applicability of this theorem extends beyond these specific settings, for instance, to classical testing problems such as those involving monotone likelihood ratio (MLR) families or the $2 \times 2$ contingency table. Interestingly, \cite{grunwald2020safe} identified GROW e-variables in these latter settings, although their connection to LFDs was not explicitly recognized.

The log-optimality result in \Cref{thm:lfd-optimal} can be extended to other utilities and
divergences. For example, the following theorem shows how it yields the Rényi-optimal e-value, but the proof technique can likely be extended to other concave utilities. 

\begin{theorem}
\label{thm:lfd-optimal-renyi}
Let $\gamma>1$ and consider the utility function defined as $U(x)=x^{1-\gamma}/(1-\gamma)$.
  If  $(P_0^*, P_1^*)$ is a least favorable distribution (LFD) pair in terms of the risk defined in \eqref{risk} for testing $\mathcal{P}_0$ vs.\ $\mathcal{P}_1$, then  $$E^*:=\left(\frac{dP_{0}^*(X)}{dP_{1}^*(X)}\right)^{-\frac{1}{\gamma}}\Bigg/~\mathbb E_{P_{1}^*}\left[\left(\frac{dP_{0}^*(X)}{dP_{1}^*(X)}\right)^{1-\frac{1}{\gamma}}\right]$$ is a Rényi-GROW e-variable for testing $\mathcal{P}_0$ vs.\ $\mathcal{P}_1$. That is, if $\mathcal{E}(\mathcal{P}_0)$ denotes the set of all e-variables for $\mathcal{P}_0$, we have that $E^*\in\mathcal{E}(\mathcal{P}_0)$ and 
  \begin{align*}
      \sup_{E\in \mathcal{E}(\mathcal{P}_0)} \inf_{P_1\in \mathcal{P}_1}\mathbb{E}_{P_1}\!\left(U(E)\right)=\inf_{P_1\in \mathcal{P}_1}\mathbb{E}_{P_1}\!\left(U(E^*)\right)&= \mathbb{E}_{P_1^*}\!\left(U(E^*)\right)\\
      &=\frac{1}{1-\gamma}\exp\left({(1-\gamma)\displaystyle\inf_{(P_0,P_1)\in\mathcal{P}_0 \times \mathcal{P}_1}R_{1/\gamma}(P_1,P_0))}\right),
  \end{align*}
  where $R_{1/\gamma}(P_1,P_0)=\frac{\gamma}{1-\gamma}\log\mathbb{E}_{P_1}\left[\left(\frac{dP_1}{dP_0}\right)^{\frac{1}{\gamma}-1}\right]$ denotes Rényi divergence of order $1/\gamma$.
\end{theorem}
\begin{proof}
Since $(P_0^*, P_1^*)$ is the LFD pair for  $\mathcal{P}_0$ vs.\ $\mathcal{P}_1$, for any $P_0\in\mathcal{P}_0$,
\begin{align*}
  &{R(P_0, \phi) \leq R(P_0^*, \phi)} \\
  &{\implies C_1 {P_0}(\phi \text{ rejects } {H_0}) 
      \leq C_1 {P_1^*}(\phi \text{ rejects } {H_0})}, \text{ for every LRT } \phi \text{ between } P_0^* \text{ and } P_1^*\\
  &{\implies {P_0}\!\left(\tfrac{d P_1^*}{d P_0^*}(X) > \eta^{{\gamma}}\right) 
      \leq {P_0^*}\!\left(\tfrac{d P_1^*}{d P_0^*}(X) >\eta^{{\gamma}}\right), \text{ for all }\eta>0 }\\
      &{\implies   {P_0}\!\left(\left(\frac{dP_{0}^*(X)}{dP_{1}^*(X)}\right)^{-\frac{1}{\gamma}} > \eta\right)\,
      \leq
     {P_0^*}\!\left(\left(\frac{dP_{0}^*(X)}{dP_{1}^*(X)}\right)^{-\frac{1}{\gamma}}> \eta\right)\ } \text{ for all }\eta >0\\
  &{\implies \int_{\eta \geq 0}  {P_0}\!\left(\left(\frac{dP_{0}^*(X)}{dP_{1}^*(X)}\right)^{-\frac{1}{\gamma}} > \eta\right)\, d\eta
      \leq
   \int_{\eta \geq 0}  {P_0^*}\!\left(\left(\frac{dP_{0}^*(X)}{dP_{1}^*(X)}\right)^{-\frac{1}{\gamma}} > \eta\right)\, d\eta}\\
   &\implies\mathbb E_{P_0}\left[\left(\frac{dP_{0}^*(X)}{dP_{1}^*(X)}\right)^{-\frac{1}{\gamma}}\right]\leq \mathbb E_{P_0^*}\left[\left(\frac{dP_{0}^*(X)}{dP_{1}^*(X)}\right)^{-\frac{1}{\gamma}}\right]=E_{P_1^*}\left[\left(\frac{dP_{0}^*(X)}{dP_{1}^*(X)}\right)^{1-\frac{1}{\gamma}}\right].
\end{align*}
Therefore, $\mathbb E_{P_0}[E^*]\leq 1,$ for all $P_0\in\mathcal{P}_0$, proving that $E^*\in\mathcal{E}(\mathcal{P}_0)$.

Now, for the Rényi-optimality part, we first show that $ \mathbb{E}_{P_1}\!\left(U(E^*)\right)
      \geq
    \mathbb{E}_{P_1^*}\!\left(U(E^*)\right)$. Since $(P_0^*, P_1^*)$ is the LFD pair for $\mathcal{P}_0$ vs.\ $\mathcal{P}_1$, we have that for any $P_1\in\mathcal{P}_1$,
\begin{align*}
  &{R(P_1, \phi) \leq R(P_1^*, \phi)} \\
  &{\implies C_1 {P_1}(\phi \text{ accepts } {H_0}) 
      \leq C_1 {P_1^*}(\phi \text{ accepts } {H_0})}, \text{ for every LRT } \phi \text{ between } P_0^* \text{ and } P_1^*\\
  &{\implies {P_1}\!\left(\tfrac{d P_1^*}{d P_0^*}(X) < \eta^{\frac{-\gamma}{\gamma-1}}\right) 
      \leq {P_1^*}\!\left(\tfrac{d P_1^*}{d P_0^*}(X) < \eta^{-\frac{\gamma}{\gamma-1}}\right), \text{ for all }\eta>0 }\\
      &{\implies   {P_1}\!\left(\left(\frac{dP_{0}^*(X)}{dP_{1}^*(X)}\right)^{1-\frac{1}{\gamma}} > \eta\right)\,
      \leq
     {P_1^*}\!\left(\left(\frac{dP_{0}^*(X)}{dP_{1}^*(X)}\right)^{1-\frac{1}{\gamma}}> \eta\right)\ } \text{ for all }\eta >0\\
  &{\implies \int_{\eta \geq 0}  {P_1}\!\left(\left(\frac{dP_{0}^*(X)}{dP_{1}^*(X)}\right)^{1-\frac{1}{\gamma}} > \eta\right)\, d\eta
      \leq
   \int_{\eta \geq 0}  {P_1^*}\!\left(\left(\frac{dP_{0}^*(X)}{dP_{1}^*(X)}\right)^{1-\frac{1}{\gamma}} > \eta\right)\, d\eta}\\
   &\implies\mathbb E_{P_1}\left[\left(\frac{dP_{0}^*(X)}{dP_{1}^*(X)}\right)^{1-\frac{1}{\gamma}}\right]\leq \mathbb E_{P_1^*}\left[\left(\frac{dP_{0}^*(X)}{dP_{1}^*(X)}\right)^{1-\frac{1}{\gamma}}\right].
\end{align*}
Now, since $1-\gamma<0$,
\begin{align*}
     \mathbb{E}_{P_1}\!\left(U(E^*)\right)&=  \mathbb{E}_{P_1}\left[\left(\frac{dP_{0}^*(X)}{dP_{1}^*(X)}\right)^{1-\frac{1}{\gamma}}\right]\Bigg/~\left((1-\gamma)\left(\mathbb E_{P_{1}^*}\left[\left(\frac{dP_{0}^*(X)}{dP_{1}^*(X)}\right)^{1-\frac{1}{\gamma}}\right]\right)^{1-\gamma}
      \right)\\
      &\geq \mathbb{E}_{P_1^*}\left[\left(\frac{dP_{0}^*(X)}{dP_{1}^*(X)}\right)^{1-\frac{1}{\gamma}}\right]\Bigg/~\left((1-\gamma)\left(\mathbb E_{P_{1}^*}\left[\left(\frac{dP_{0}^*(X)}{dP_{1}^*(X)}\right)^{1-\frac{1}{\gamma}}\right]\right)^{1-\gamma}
      \right)\\
      &
    =\mathbb{E}_{P_1^*}\!\left(U(E^*)\right)\\
   & = \frac{\exp\left({(1-\gamma)R_{1/\gamma}(P_1^*,P_0^*))}\right)}{1-\gamma}\\
\end{align*}
So, we have shown above that $\inf_{P_1\in \mathcal{P}_1}\mathbb{E}_{P_1}\!\left(U(E^*)\right)= \mathbb{E}_{P_1^*}\!\left(U(E^*)\right).$

 Now, suppose, if possible, $E$ is an e-variable for $\mathcal{P}_0$, which satisfies 
$ \inf_{P_1\in \mathcal{P}_1}\mathbb{E}_{P_1}\!\left(U (E)\right)>\mathbb{E}_{P_1^*}\!\left(U (E^*)\right).$ Then, we would have $$\mathbb{E}_{P_1^*}\!\left(U(E)\right)> \mathbb{E}_{P_1^*}\!\left(U (E^*)\right)=\frac{\exp\left({(1-\gamma)R_{1/\gamma}(P_1^*,P_0^*))}\right)}{1-\gamma}.$$ However, Theorem 6.1 of \cite{larsson2024numeraire} gives us
$$\sup_{E\in\mathcal{E}(\mathcal{P}_0)}\mathbb{E}_{P_1^*}\!\left(U(E)\right)\leq\frac{\exp\left({(1-\gamma)R_{1/\gamma}(P_1^*,P_0^*))}\right)}{1-\gamma},$$
yielding a contradiction.
Therefore, we must have $$\sup_{E\in \mathcal{E}(\mathcal{P}_0)} \inf_{P_1\in \mathcal{P}_1}\mathbb{E}_{P_1}\!\left(U(E)\right)=\inf_{P_1\in \mathcal{P}_1}\mathbb{E}_{P_1}\!\left(U(E^*)\right)= \mathbb{E}_{P_1^*}\!\left(U(E^*)\right).$$
To complete the proof, we need to show that $$R_{1/\gamma}(P_1^*,P_0^*)=\inf_{(P_0,P_1)\in\mathcal{P}_0 \times \mathcal{P}_1}R_{1/\gamma}(P_1,P_0),$$ which follows from Corollary~1 of \cite{poor1980robust}. 
However, a direct way to prove it is the following: Suppose, for the sake of contradiction, $R_{1/\gamma}(P_1^*,P_0^*)>\inf_{(P_0,P_1)\in\mathcal{P}_0 \times \mathcal{P}_1}R_{1/\gamma}(P_1,P_0).$ 
Then, there exists $(P_0,P_1)\in\mathcal{P}_0 \times \mathcal{P}_1$ such that  $R_{1/\gamma}(P_1^*,P_0^*)>R_{1/\gamma}(P_1,P_0)$, which would imply 
\begin{align*}
    \mathbb{E}_{P_1}\!\left(U(E^*)\right)\geq\inf_{P_1\in \mathcal{P}_1}\mathbb{E}_{P_1}\!\left(U(E^*)\right)=\mathbb{E}_{P_1^*}\!\left(U(E^*)\right)&= \frac{\exp\left({(1-\gamma)R_{1/\gamma}(P_1^*,P_0^*))}\right)}{1-\gamma}\\
    &> \frac{\exp\left({(1-\gamma)R_{1/\gamma}(P_1,P_0))}\right)}{1-\gamma},
\end{align*}
which is a contradiction because for simple null $P_0$ and simple alternative $P_1$, Theorem 6.1 of \cite{larsson2024numeraire} gives us
$$\sup_{E\in\mathcal{E}({P}_0)}\mathbb{E}_{P_1}\!\left(U(E)\right)=\frac{\exp\left({(1-\gamma)R_{1/\gamma}(P_1,P_0))}\right)}{1-\gamma}.$$

Thus, we have finally proved 
 \begin{align*}
      \sup_{E\in \mathcal{E}(\mathcal{P}_0)} \inf_{P_1\in \mathcal{P}_1}\mathbb{E}_{P_1}\!\left(U(E)\right)=\inf_{P_1\in \mathcal{P}_1}\mathbb{E}_{P_1}\!\left(U(E^*)\right)&= \mathbb{E}_{P_1^*}\!\left(U(E^*)\right)\\
      &=\frac{1}{1-\gamma}\exp\left({(1-\gamma)\displaystyle\inf_{(P_0,P_1)\in\mathcal{P}_0 \times \mathcal{P}_1}R_{1/\gamma}(P_1,P_0))}\right).
  \end{align*} 
\end{proof}

\subsection{An adaptive contamination model for the null \texorpdfstring{$H^{\epsilon,\infty}_0$}{Lg} and alternative \texorpdfstring{$H^{\epsilon,\infty}_1$}{Lg}}
\label{subsec:adapt-contamination}

Our test retains its validity properties under the null hypothesis and growth rate under the alternative, even in the presence of an adaptive contamination model, where the data sequence $X_1,X_2,\cdots$ is generated such that the conditional distribution of $X_n$ given $X^{n-1}$ lies within $H^{\epsilon}_i$, where $H^{\epsilon}_i$ is some $\epsilon$ neighborhood around the hypothesized distribution $P_i$, for example, $\epsilon$-contamination or total variation balls defined in \eqref{contamination-model} or \eqref{tv-model}. We denote the set of all possible joint distributions of the sequence $X_1,X_2,\cdots$ under the adaptive contamination model as $H^{\epsilon,\infty}_i$. In the special case where the distribution of each $X_n$ is independent of the previously observed data, then 
\begin{equation}\label{eq:robust-null}
H^{\epsilon,\infty}_i = H^{\epsilon}_i \times H^{\epsilon}_i \times \cdots,
\end{equation}
but we emphasize that $H^{\epsilon,\infty}_i$ allows for non-i.i.d.\ sequentially adaptive adversarial contamination.
In words, while $P_i$ is fixed and known, the unknown contaminations may vary over time, and indeed, the contamination at each time $n$ can be influenced by the previously observed data. Formally, at each $n$, the adversary defines a measurable mapping $Q_n: \Omega^{n-1}\to H^{\epsilon}_i$ and
$X_n$ is drawn from the distribution $Q_n(x^{n-1}),$ given that $X^{n-1}=x^{n-1}$; here we use the notation $a^{n-1} = (a_1,\dots,a_{n-1})$ and $x^{n-1}$ denotes the realized value of the random vector $X^{n-1}$.

\subsection{Robust test for simple null vs. simple alternative}
\label{sec:lfd-contamination-simp}

\cite{huber1965robust} defined the LFD pair $(Q_{0,\epsilon}, Q_{1,\epsilon})$ for \eqref{contamination-model} by their densities $(q_{0,\epsilon}, q_{1,\epsilon})$ as follows:
\begin{align}
\label{eq:q0}
q_{0,\epsilon}(x) & = \begin{cases}
    \left(1-\epsilon\right) p_0(x), &   \text { for } p_1(x) / p_0(x)<c^{\prime \prime} , \\
 \frac{1}{c^{\prime \prime}}\left(1-\epsilon\right) p_1(x),  & \text { for } p_1(x) / p_0(x) \geq c^{\prime \prime}; \\
\end{cases}\\
\label{eq:q1}
q_{1,\epsilon}(x) & = \begin{cases} \left(1-\epsilon\right) p_1(x), &    \text { for } p_1(x) / p_0(x)>c^{\prime} , \\
c^{\prime}\left(1-\epsilon\right) p_0(x),& \text { for } p_1(x) / p_0(x) \leq c^{\prime}.
\end{cases}
\end{align}
The numbers $0 \leq c^{\prime}<c^{\prime \prime} \leq\infty$ are constants that are determined to ensure $q_{0,\epsilon}, q_{1,\epsilon}$ are probability densities:
\begin{align}
\label{eq:c2}
\left(1-\epsilon\right)\left\{P_0\left[p_1 / p_0<c^{\prime \prime}\right]+\frac{1}{c^{\prime \prime}} P_1\left[p_1 / p_0 \geq c^{\prime \prime}\right]\right\} & =1, \\
\label{eq:c1}
\left(1-\epsilon\right)\left\{P_1\left[p_1 / p_0>c^{\prime}\right]+c^{\prime} P_0\left[p_1 / p_0 \leq c^{\prime}\right]\right\} & =1.
\end{align}
We assume that $\epsilon$ is small enough so that $H^{\epsilon}_0\cap H^{\epsilon}_1=\emptyset.$ Then, the above ensures $ c^{\prime}<c^{\prime \prime}$ and 
$\frac{q_{1,\epsilon}(x)}{q_{0,\epsilon}(x)}=\max\{c^\prime,\min\{c^{\prime\prime},q_{1,\epsilon}(x)/q_{0,\epsilon}(x)\}\}$. Huber used  $R_{n,\epsilon}=\prod_{i=1}^n\frac{q_{1,\epsilon}(X_i)}{q_{0,\epsilon}(X_i)}$ for deriving risk optimal tests. Then we have the following as a corollary of Theorem 1 of \cite{huber1965robust} and our  \Cref{thm:lfd-optimal}.

\begin{corollary}
  $\{R_{n,\epsilon}\}_n$ is a test supermartingale for the adaptive null contamination model $H^{\epsilon,\infty}_0$.
Hence, recalling~\eqref{eq:stopping}, the stopping time
\begin{equation}
   \tau^*=\inf\{n : R_{n,\epsilon}\geq 1/\alpha\} 
\end{equation}
  at which we reject the null yields a level-$\alpha$ sequential test.  Moreover, it achieves the optimal growth rate under the adaptive alternative contamination model $H^{\epsilon,\infty}_1$, i.e.,  
  \begin{equation}
      \inf_{\mathbf{P}\in H^{\epsilon,\infty}_1}\mathbb{E}_{\mathbf{P}}\!\left(\log(R_{n,\epsilon})/{n}\right)=\inf_{{P}\in H^{\epsilon}_1}\mathbb{E}_{P}\!\left(\log\frac{q_{1,\epsilon}(X)}{q_{0,\epsilon}(X)}\right)= \kl(Q_{1,\epsilon},Q_{1,\epsilon}).
  \end{equation}
\end{corollary}
As discussed in \cite{huber1973minimax} and \cite[Chapter~10]{huber2004robust}, LFDs can be constructed for a broader class of models, of which \eqref{contamination-model} and \eqref{tv-model} arise as special cases.
 Consequently, the above result also extends to other types of robust models—such as those defined via $\epsilon$-neighborhoods around two distributions $P_0$ and $P_1$ in Kolmogorov, Lévy, Prohorov metrics, etc., when $\frac{p_1(x)}{p_0(x)}$ is monotone in $x$.

\section{Robust tests for composite nulls and alternatives when an LFD pair exists}
\label{lfd-composite}

In this section, we consider both the null $\mathcal P_0$ and the alternative $\mathcal P_1$ to be composite and assume that an LFD pair exists for testing between these two hypotheses.
To handle small deviations from the idealized models,
we test $\mathcal{P}^{\epsilon}_0$ vs.\ $\mathcal{P}^{\epsilon}_1$, where for~$j=0,1$,
\begin{align}
\label{contamination-model-comp}
\mathcal{P}^{\epsilon}_j=\bigcup_{P\in\mathcal{P}_j}B({P},{\epsilon}), ~~&B({P},{\epsilon})=\{Q\in\mathcal{M}: Q=(1-\epsilon)P+\epsilon H, H\in\mathcal{M}\} \text{ or }\\
\label{tv-model-comp}
&B({P},{\epsilon})=\{ Q \in \mathcal{M} : \tv(P, Q) \leq \epsilon \}.
\end{align}

\begin{theorem}
\label{thm:lfd-comp}
  Let $(P_0^*, P_1^*)$ be the least favorable pair in terms of risk defined in \eqref{risk} for testing $\mathcal{P}_0$ vs.\ $\mathcal{P}_1$ and $(Q_{0,\epsilon},Q_{1,\epsilon})$ be the least favorable pair in terms of risk for testing $B({P}_0^*,{\epsilon})$ vs.\ $B({P}_1^*,{\epsilon})$, which are $\epsilon$ neighborhoods around $P_0^*$ and $P_1^*$ respectively, as defined in \eqref{contamination-model-comp} or \eqref{tv-model-comp}. Then, for any fixed sample size $n$, $(Q_{0,\epsilon},Q_{1,\epsilon})$ is the least favorable pair in terms of risk for testing $\mathcal{P}_0^\epsilon$ vs.\ $\mathcal{P}_1^\epsilon$.
\end{theorem}
\begin{proof}

  Since  $(P_0^*, P_1^*)$ is the least favorable pair for testing $\mathcal{P}_0$ vs.\ $\mathcal{P}_1$ we have,
  $R(P_0^*,\phi)\geq R(P,\phi),$ $\forall  P\in \mathcal{P}_0,$ and for all  likelihood ratio test $\phi$ between ${P}_0^*$ and ${P}_1^*$,  which implies

  \begin{equation}
      P_0^*\left[\frac{p_1^*(X)}{p_0^*(X)}\geq \eta\right] \geq P\left[\frac{p_1^*(X)}{p_0^*(X)}\geq \eta\right] ~ \forall \eta \in\mathbb R, \forall  P\in \mathcal{P}_0.
  \end{equation}
From \cite{huber1965robust}, we know that $\frac{q_{1,\epsilon}}{q_{0,\epsilon}}=\max\{c_1,\min\{c_2,\frac{p_1^*}{p_0^*}\}\}$, for some $c_1\leq c_2$. So, for $c_1\leq \eta\leq c_2,$ 
    \begin{equation}
        P\left[\frac{q_{1,\epsilon}(X)}{q_{0,\epsilon}(X)}\geq \eta\right]= P\left[\frac{p_1^*(X)}{p_0^*(X)}\geq \eta\right] ~~\text{ and }~~  P_0^*\left[\frac{q_{1,\epsilon}(X)}{q_{0,\epsilon}(X)}\geq \eta\right]= P_0^*\left[\frac{p_1^*(X)}{p_0^*(X)}\geq \eta\right]
    \end{equation}
Hence, for any $c_1\leq \eta\leq c_2,$ 
  \begin{equation}
  \label{eq:p0-p}
    P_{0}^*\left[\frac{q_{1,\epsilon}(X)}{q_{0,\epsilon}(X)}\geq \eta\right] \geq P\left[\frac{q_{1,\epsilon}(X)}{q_{0,\epsilon}(X)}\geq \eta\right] ~  \forall  P\in \mathcal{P}_0.
\end{equation}
Also, note that $ P_{0}^*\left[\frac{q_{1,\epsilon}(X)}{q_{0,\epsilon}(X)}\geq \eta\right] = 0=P\left[\frac{q_{1,\epsilon}(X)}{q_{0,\epsilon}(X)}\geq \eta\right],$ for any  $ \eta> c_2$ and  $ P_{0}^*\left[\frac{q_{1,\epsilon}(X)}{q_{0,\epsilon}(X)}\geq \eta\right] = 1=P\left[\frac{q_{1,\epsilon}(X)}{q_{0,\epsilon}(X)}\geq \eta\right],$ for any $\eta<c_1$. Therefore, \eqref{eq:p0-p} holds for any $\eta\in\mathbb R$.

Now, for any $Q\in \mathcal{P}_{0}^\epsilon$, there exists $P^\prime\in\mathcal{P}_{0}$ such that $Q\in  B(P^\prime,\epsilon)$.

 For the contamination model \eqref{contamination-model-comp}, for any $\eta\in \mathbb R,$
\begin{align*}
    Q_{0,\epsilon}\left[\frac{q_{1,\epsilon}(X)}{q_{0,\epsilon}(X)}\geq \eta\right]&=\max_{P\in B(P_{0}^*,\epsilon)}~ P\left[\frac{q_{1,\epsilon}(X)}{q_{0,\epsilon}(X)}\geq \eta\right]\\
    &=\max_{H}\left[(1-\epsilon)P_{0}^*\left[\frac{q_{1,\epsilon}(X)}{q_{0,\epsilon}(X)}\geq \eta\right]+\epsilon H\left[\frac{q_{1,\epsilon}(X)}{q_{0,\epsilon}(X)}\geq \eta\right]\right]\\
    &=(1-\epsilon)P_{0}^*\left[\frac{q_{1,\epsilon}(X)}{q_{0,\epsilon}(X)}\geq \eta\right]+\epsilon\\
    &\geq (1-\epsilon)P^\prime\left[\frac{q_{1,\epsilon}(X)}{q_{0,\epsilon}(X)}\geq \eta\right]+\epsilon\\
    &=\max_{P\in B(P^\prime,\epsilon)}~ P\left[\frac{q_{1,\epsilon}(X)}{q_{0,\epsilon}(X)}\geq \eta\right]\\
    &\geq Q\left[\frac{q_{1,\epsilon}(X)}{q_{0,\epsilon}(X)}\geq \eta\right] .
\end{align*}
Similarly, for the TV model \eqref{tv-model-comp}, for any $\eta\in \mathbb R,$ 
\begin{align*}
    Q_{0,\epsilon}\left[\frac{q_{1,\epsilon}(X)}{q_{0,\epsilon}(X)}\geq \eta\right]&=\sup_{P\in B(P_{0}^*,\epsilon)}~ P\left[\frac{q_{1,\epsilon}(X)}{q_{0,\epsilon}(X)}\geq \eta\right]\\
    &=P^*_{0}\left[\frac{q_{1,\epsilon}(X)}{q_{0,\epsilon}(X)}\geq \eta\right]+\epsilon\quad (\text{By definition of TV distance})\\
    &\geq P^\prime\left[\frac{q_{1,\epsilon}(X)}{q_{0,\epsilon}(X)}\geq \eta\right]+\epsilon\\
    &=\sup_{P\in B(P^\prime,\epsilon)}~ P\left[\frac{q_{1,\epsilon}(X)}{q_{0,\epsilon}(X)}\geq \eta\right]\\
    &\geq Q\left[\frac{q_{1,\epsilon}(X)}{q_{0,\epsilon}(X)}\geq \eta\right].
\end{align*}

Therefore, for both cases, we have $ Q_{0,\epsilon}\left[\frac{q_{1,\epsilon}(X)}{q_{0,\epsilon}(X)}\geq \eta\right]\geq Q\left[\frac{q_{1,\epsilon}(X)}{q_{0,\epsilon}(X)}\geq \eta\right], \forall\eta\in\mathbb R, \forall Q\in\mathcal{P}_{0}^\epsilon$ and similarly, one can show that for both cases, $ Q_{0,\epsilon}\left[\frac{q_{1,\epsilon}(X)}{q_{0,\epsilon}(X)}>\eta\right]\geq Q\left[\frac{q_{1,\epsilon}(X)}{q_{0,\epsilon}(X)}> \eta\right], \forall\eta\in\mathbb R, \forall Q\in\mathcal{P}_{0}^\epsilon$.

 Suppose, we have i.i.d. samples $X_1,\cdots,X_n$, for any fixed $n$.

 Let $F$ and $G$ be the CDFs of $\frac{q_{1,\epsilon}(X_i)}{q_{0,\epsilon}(X_i)}$, under $Q_{0,\epsilon}$ and $Q$ respectively. From the stochastic monotonicity established above, we have $F\leq G$. Now, let $Z_1,\cdots,Z_n$ are i.i.d from Unif$(0,1)$. Then, $F^{-1}(Z_i)\geq G^{-1}(Z_i)$ and 
 $F^{-1}(Z_i)\overset{\text{d}}{=}\frac{q_{1,\epsilon}(X_i)}{q_{0,\epsilon}(X_i)}$ when $X_i\sim Q_{0,\epsilon}$
 and $G^{-1}(Z_i)\overset{\text{d}}{=}\frac{q_{1,\epsilon}(X_i)}{q_{0,\epsilon}(X_i)}$, when $X_i\sim Q$, $i=1,\cdots,n$. Therefore,
we obtain 
\begin{equation}
    Q_{0,\epsilon}\left[\prod_{i=1}^n\frac{q_{1,\epsilon}(X_i)}{q_{0,\epsilon}(X_i)}>\eta\right]\geq Q\left[\prod_{i=1}^n\frac{q_{1,\epsilon}(X_i)}{q_{0,\epsilon}(X_i)}> \eta\right], \forall\eta\in\mathbb R, \forall Q\in\mathcal{P}_{0}^\epsilon,
\end{equation}
and
\begin{equation}
Q_{0,\epsilon}\left[\prod_{i=1}^n\frac{q_{1,\epsilon}(X_i)}{q_{0,\epsilon}(X_i)}\geq\eta\right]\geq Q\left[\prod_{i=1}^n\frac{q_{1,\epsilon}(X_i)}{q_{0,\epsilon}(X_i)}\geq \eta\right], \forall\eta\in\mathbb R, \forall Q\in\mathcal{P}_{0}^\epsilon
\end{equation}

Note that for any likelhihood ratio $\phi^*$ between $Q_{0,\epsilon}$ and $Q_{1,\epsilon}$, there exists $\eta\in\mathbb R, \gamma\in[0,1]$ such that
\begin{equation}
\label{eq:lrt}
      \phi^*(x_1,\cdots,x_n) =
        \begin{cases}
          1 & \text{if } \prod_{i=1}^n\frac{q_{1,\epsilon}}{q_{0,\epsilon}}(x_i) > \eta, \\
          \gamma & \text{if } \prod_{i=1}^n \frac{q_{1,\epsilon}}{q_{0,\epsilon}}(x_i) = \eta, \\
          0 & \text{if } \prod_{i=1}^n\frac{q_{1,\epsilon}}{q_{0,\epsilon}}(x_i) < \eta.
        \end{cases}
\end{equation}

Then,
\begin{align*}
    R(Q_{0,\epsilon},\phi^*)&=C_0\times\left[Q_{0,\epsilon}\left(\prod_{i=1}^n\frac{q_{1,\epsilon}(X_i)}{q_{0,\epsilon}(X_i)} > \eta\right)+\gamma \times Q_{0,\epsilon}\left(\prod_{i=1}^n\frac{q_{1,\epsilon}(X_i)}{q_{0,\epsilon}(X_i)} =\eta\right)\right]\\
&=C_0\times\left[(1-\gamma)\times Q_{0,\epsilon}\left(\prod_{i=1}^n\frac{q_{1,\epsilon}(X_i)}{q_{0,\epsilon}(X_i)} > \eta\right)+\gamma \times Q_{0,\epsilon}\left(\prod_{i=1}^n\frac{q_{1,\epsilon}(X_i)}{q_{0,\epsilon}(X_i)} \geq\eta\right)\right]\\
&\geq C_0\times\left[(1-\gamma)\times Q\left(\prod_{i=1}^n\frac{q_{1,\epsilon}(X_i)}{q_{0,\epsilon}(X_i)} > \eta\right)+\gamma \times Q\left(\prod_{i=1}^n\frac{q_{1,\epsilon}(X_i)}{q_{0,\epsilon}(X_i)} \geq\eta\right)\right]\\
    &= R(Q,\phi^*), \forall Q\in\mathcal{P}_0^\epsilon
\end{align*}

 The proof for the part $R(Q_{1,\epsilon},\phi^*)\geq R(Q,\phi^*)$ is identical.
\end{proof}

\begin{remark}
   The above theorem holds for sequential probability ratio tests as well We first consider Wald's SPRT, $\phi^*_{\text{SPRT}}$ with stooping rule $\tau=\inf\{n\in\mathbb N: \prod_{i=1}^n\frac{q_{1,\epsilon}(X_i)}{q_{0,\epsilon}(X_i)}\notin(A,B)\}$. Define, $\tau_1=\inf\{n\in\mathbb N: \prod_{i=1}^n\frac{q_{1,\epsilon}(X_i)}{q_{0,\epsilon}(X_i)}\geq B\}$ and $\tau_0=\inf\{n\in\mathbb N: \prod_{i=1}^n\frac{q_{1,\epsilon}(X_i)}{q_{0,\epsilon}(X_i)}\leq A\}$.  Let $F$ and $G$ be the CDFs of $\frac{q_{1,\epsilon}(X_i)}{q_{0,\epsilon}(X_i)}$, under $Q_{0,\epsilon}$ and $Q$ respectively. For $Z_1,\cdots,Z_n\overset{i.i.d.}{\sim}U(0,1)$, define
$\tau^\prime_1=\inf\{n\in\mathbb N: \prod_{i=1}^n F^{-1}(Z_i)\geq B\}$ and $\tau^{\prime\prime}_1=\inf\{n\in\mathbb N: \prod_{i=1}^n G^{-1}(Z_i)\geq B\}$. Also, $\tau^\prime_0=\inf\{n\in\mathbb N: \prod_{i=1}^n F^{-1}(Z_i)\leq A\}$ and $\tau^{\prime\prime}_0=\inf\{n\in\mathbb N: \prod_{i=1}^n G^{-1}(Z_i)\leq A\}$. Then, it follows from the stochastic monotonicity established in the proof of the theorem that $F\leq G$ and hence, $\tau^\prime_0\geq\tau^{\prime\prime}_0$ and $\tau^\prime_1\leq\tau^{\prime\prime}_1$. So,
 \begin{align*}
       R(Q_{0,\epsilon}, \phi^*_{\text{SPRT}}) &= C_0 Q_{0,\epsilon}(\phi^*_{\text{SPRT}} \text{ accepts } H_1)\\
       &=C_0 Q_{0,\epsilon}\left(\tau_1<\tau_0\right)\\
       &=C_0 \mathbb P_{Z_i\overset{i.i.d.}{\sim}U(0,1)}\left(\tau_1^\prime<\tau_0^\prime\right)\\
       &\geq C_0 \mathbb P_{Z_i\overset{i.i.d.}{\sim}U(0,1)}\left(\tau_1^{\prime\prime}<\tau_0^{\prime}\right)\\
       &\geq C_0 \mathbb P_{Z_i\overset{i.i.d.}{\sim}U(0,1)}\left(\tau_1^{\prime\prime}<\tau_0^{\prime\prime}\right)\\
       &=C_0 Q(\tau_1<\tau_0)\\
       &= R(Q, \phi^*_{\text{SPRT}}),
   \end{align*}
for all  $Q\in\mathcal{P}_{0}^\epsilon$.
Simliarly, one can show $R(Q_{1,\epsilon}, \phi^*_{\text{SPRT}})\geq R(Q, \phi^*_{\text{SPRT}})$, for all $Q\in\mathcal{P}_{1}^\epsilon$.

Similarly, for the one-sided SPRT, $\phi^*_{\text{S}}$ with $\tau=\inf\{n\in\mathbb N: \prod_{i=1}^n\frac{q_{1,\epsilon}(X_i)}{q_{0,\epsilon}(X_i)}\geq B\}$, we show the same:
\begin{equation*}
      R(Q_{0,\epsilon}, \phi^*_{\text{S}})=C_0 Q_{0,\epsilon}\left(\tau<\infty\right)=C_0 \mathbb P_{Z_i\overset{iid.}{\sim}U(0,1)}\left(\tau^\prime_1<\infty\right)\geq C_0 \mathbb P_{Z_i\overset{iid.}{\sim}U(0,1)}\left(\tau^{\prime\prime}_1<\infty\right)=R(Q, \phi^*_{\text{S}}),
\end{equation*}
for all $Q\in\mathcal{P}_{0}^\epsilon$.
\end{remark}

\begin{remark}
\label{rem:lfd-comp}
    Note that for the above theorem to hold, it is not necessary for the neighborhoods $B(P,\epsilon)$ to be $\epsilon$-contamination or total variation (TV) balls. The only properties of $B(\cdot,\epsilon)$ required in the proof are the following: an LFD pair exists for testing $B(\cdot,\epsilon)$ vs. $B(\cdot,\epsilon)$ as long as they do not intersect each other and
   for any $Q, Q^\prime\in \mathcal{P}_0\cup\mathcal{P}_1,\eta\in\mathbb R$ satisfying  $$Q\left[\frac{q_{1,\epsilon}(X)}{q_{0,\epsilon}(X)}\geq \eta\right]\geq  Q^\prime\left[\frac{q_{1,\epsilon}(X)}{q_{0,\epsilon}(X)}\geq \eta\right]\implies\displaystyle\sup_{P\in B(Q,\epsilon)}P\left[\frac{q_{1,\epsilon}(X)}{q_{0,\epsilon}(X)}\geq \eta\right]\geq  \sup_{P\in B(Q^\prime,\epsilon)}P\left[\frac{q_{1,\epsilon}(X)}{q_{0,\epsilon}(X)}\geq \eta\right].$$
   Clearly, both $\epsilon$-contamination and TV neighborhoods satisfy this condition, as shown in the proof. However, it is worth noting that other robust models, eg.,  $\epsilon$-balls based on Kolmogorov or Lévy distances, also satisfy it, under the assumption that $\frac{p_1(x)}{p_0(x)}$ is monotone in $x$, for all $P_0\in\mathcal{P}_{0}, P_1\in\mathcal{P}_{1}$.
\end{remark}

\subsection{Adaptive contamination models}

Recalling the adaptive contamination model for the null from \cref{subsec:adapt-contamination}, we define the adaptive contamination model for the composite hypothesis, $\mathcal H^{\epsilon,\infty}_i$ to be the set of all possible joint distribution of the sequence $X_1,X_2,\cdots$ such that the conditional distribution of $X_n$ given $X^{n-1}$ lies within $\mathcal H^{\epsilon}_i$. So, at each $n$, the adversary defines a measurable mapping $Q_n: \Omega^{n-1}\to \mathcal{H}^{\epsilon}_i$ and
$X_n$ is drawn from the distribution $Q_n(x^{n-1}),$ given that $X^{n-1}=x^{n-1}$.

Our test retains its validity properties under the null hypothesis and growth rate under the alternative, even in the presence of an adaptive contamination model, where $H^{\epsilon}_i$ is some $\epsilon$ neighborhood around the hypothesized distribution $P_i$, for example, $\epsilon$-contamination or total variation balls defined in \eqref{contamination-model-comp} or \eqref{tv-model-comp}. 

\subsection{Robust test for composite null vs. composite alternative}
\label{sec:lfd-contamination-comp}
Let $(P_0, P_1)$ be the least favorable pair for testing $\mathcal{P}_0$ vs.\ $\mathcal{P}_1$ and $(Q_{0,\epsilon},Q_{1,\epsilon})$ be the least favorable pair in terms of risk for testing $B({P}_0,{\epsilon})$ vs.\ $B({P}_1,{\epsilon})$, defined in \eqref{contamination-model-comp}.
 Then, with $R_{n,\epsilon}=\prod_{i=1}^n\frac{q_{1,\epsilon}(X_i)}{q_{0,\epsilon}(X_i)}$, we have the following as a corollary of Theorem 1 of \cite{huber1965robust} and our  \Cref{thm:lfd-optimal,thm:lfd-comp}.

\begin{corollary}
  $\{R_{n,\epsilon}\}_n$ is a test supermartingale for the adaptive null contamination model $H^{\epsilon,\infty}_0$.
Hence, recalling~\eqref{eq:stopping}, the stopping time
\begin{equation}
   \tau^*=\inf\{n : R_{n,\epsilon}\geq 1/\alpha\} 
\end{equation}
  at which we reject the null yields a level-$\alpha$ sequential test.  Moreover it achieve the optimal growth rate, i.e.,  $$\inf_{\mathbf{P}\in H^{\epsilon,\infty}_1}\mathbb{E}_{\mathbf{P}}\!\left(\log(R_{n,\epsilon})/{n}\right)=\inf_{{P}\in H^{\epsilon}_1}\mathbb{E}_{P}\!\left(\log\frac{q_{1,\epsilon}(X)}{q_{0,\epsilon}(X)}\right)= \kl(Q_{1,\epsilon},Q_{1,\epsilon}).$$
\end{corollary}
As discussed in \Cref{rem:lfd-comp}, similar results can also be established for other robust models, e.g., $\epsilon$-balls based on total variation, Kolmogorov
or Levy distance.

\section{Robust tests for composite nulls and alternatives when an LFD pair does not exist}
\label{sec:comp-general}
Now, we consider the problem of robust hypothesis testing, where an LFD pair does not
exist for the pre-contamination composite null and composite alternative.

\subsection{Robust predictable plug-in for composite alternatives}
\label{sec:comp-alt}

In this section, we address the challenge of robustly testing the composite alternatives. Let the idealized models be $P_0$ vs.\ $\mathcal{P}_1$, where $\mathcal{P}_1$ is a set of distribution functions that does not include $P_0$. We will assume that there exists a common reference measure $\mu$ for $\mathcal P_1$ and $P_0$ so that we can associate the distributions with their densities.

To handle a composite alternative hypothesis, one natural way is to attempt
to learn it from past observations, at each round $n$, and plug it in as an estimate of the alternative distribution based on past observations. This is known as the ``predictable plug-in'' or simply ``plug-in'' method. This method is originally due to \cite{wald1947sequential}, which has recently been used for handling various parametric and non-parametric composite alternatives \cite{waudby2020confidence,waudby2023estimating,saha2024testing}.
 
 However, we often encounter deviations from these idealized models in real-world data, potentially due to contamination or deviations from the assumed distribution.  To navigate this problem, we use a robust estimate of the alternative distribution, which lies in $\mathcal{P}_1$, where the actual data might come from $\epsilon$-neighbourhood of some distribution in the alternative $\mathcal{P}_1$, which is 
\begin{equation}
\label{contamination-model-comp-alt}
    \mathcal{P}^{\epsilon}_1=\bigcup_{P_1\in\mathcal{P}_1}\{Q\in\mathcal{M}: Q=(1-\epsilon)P_1+\epsilon H, H\in\mathcal{M}\} \text{ or }
\end{equation}
\begin{equation}
\label{tv-model-comp-alt}
    \mathcal{P}^{\epsilon}_1=\bigcup_{P_1\in\mathcal{P}_1}\{ Q \in \mathcal{M} : \tv(P_1, Q) \leq \epsilon \}.
\end{equation} 

\subsubsection{An adaptive contamination model for the null}

Under the null, we assume that the data $X_1,X_2,\dots$ is generated from some distribution in $H^{\epsilon,\infty}_0$ defined in \cref{subsec:adapt-contamination}, i.e., while $P_0$ is fixed and known, the unknown contaminations may vary across time. In fact, the contamination at time $n$ may depend on the previously observed data, as noted earlier.

Let $\hat{p}_{1,n}$ be some robust estimate of the density of the data based on past observations $X^{n-1}$, which belongs to the alternative. Let $\hat{P}_{1,n}$ be the distribution corresponding to the density $\hat{p}_{1,n}$, with $\hat{P}_{1,n}\in\mathcal{P}_1$. 

Now we define $\hat{q}_{n,j,\epsilon}$ similarly as ${q}_{j,\epsilon}$  was defined in \eqref{eq:q1} (for $j=0,1$) .
\begin{align}
\label{eq:q0-comp-alt}
\hat q_{n,0,\epsilon}(x) & = \begin{cases}
    \left(1-\epsilon\right) p_0(x), \quad \text { for } \hat{p}_{1,n}(x) / p_0(x)<c^{\prime \prime}_n , \\
\frac{1}{c^{\prime \prime}_n}\left(1-\epsilon\right) \hat{p}_{1,n}(x),  \text { for } \hat{p}_{1,n}(x) / p_0(x) \geq c^{\prime \prime}_n; \\
\end{cases}\\
\label{eq:q1-comp-alt}
\hat q_{n,1,\epsilon}(x) & = \begin{cases} \left(1-\epsilon\right) \hat{p}_{1,n}(x), \quad \text { for } \hat{p}_{1,n}(x) / p_0(x)>c^{\prime}_n , \\
c^{\prime}_n\left(1-\epsilon\right) p_0(x), \text { for } \hat{p}_{1,n}(x) / p_0(x) \leq c^{\prime}_n.
\end{cases}
\end{align}
We calculate the numbers $0 \leq c^{\prime}_n,c^{\prime \prime}_n \leq\infty$, that are determined such that $\hat q_{n,0,\epsilon},\hat q_{n,1,\epsilon}$ are probability densities:
\begin{align}
\label{eq:c2-comp-alt}
\left(1-\epsilon\right)\left\{P_0\left[\hat{p}_{1,n} / p_0<c^{\prime \prime}_n\right]+\frac{1}{c^{\prime \prime}_n} \hat{P}_{1,n}\left[\hat{p}_{1,n} / p_0 \geq c^{\prime \prime}_n\right]\right\} & =1, \\
\label{eq:c1-comp-alt}
\left(1-\epsilon\right)\left\{\hat{P}_{1,n}\left[\hat{p}_{1,n}/ p_0>c^{\prime}_n\right]+c^{\prime}_n P_0\left[\hat{p}_{1,n} / p_0 \leq c^{\prime}_n\right]\right\} & =1.
\end{align}
Note that if 
$c^{\prime}_n<c^{\prime \prime}_n$, we have ${\frac{\hat{q}_{n,1,\epsilon}(x)}{\hat q_{n,0,\epsilon}(x)}}=\max\{c^{\prime}_n,\min\{c^{\prime\prime}_n,\frac{\hat{p}_{1,n}(x)}{p_0(x)}\}\}$.  We now define $R_{0,\epsilon}^{\text{plug-in}}=1$ and $R_{n,\epsilon}^{\text{plug-in}}= R_{n-1,\epsilon}^{\text{plug-in}}\times \hat E_{n,\epsilon}(X_n), n=1,2,\dots$
where 

\begin{equation}
  \hat E_{n,\epsilon}(\cdot):=\begin{cases}
       \frac{\frac{\hat{q}_{n,1,\epsilon}(\cdot)}{\hat q_{n,0,\epsilon}(\cdot)}}{\mathbb E_{P_0}\left[\frac{\hat{q}_{n,1,\epsilon}(X)}{\hat q_{n,0,\epsilon}(X)}\mid X^{n-1}= x^{n-1}\right]+(c^{\prime \prime}_n-c^{\prime}_n)\epsilon} \text{, if } c^{\prime}_n<c^{\prime \prime}_n,\\
       1, \quad \quad \quad \quad \quad \quad \quad \quad \quad \quad \quad \quad \quad \text{ otherwise. }
   \end{cases} 
\end{equation}
In other words, $R_{n,\epsilon}^{\text{plug-in}}$ is the product of $ \hat E_{n,\epsilon}$. Thus, if each $\hat E_{n,\epsilon}$ is an e-variable (conditioned on the past), $R_{n,\epsilon}^{\text{plug-in}}$ will be a test supermartingale. The following short calculation shows that this is indeed the case:
\begin{align*}
   &\mathbb E_{Q_n}[\hat E_{n,\epsilon}(X_n) \mid X^{n-1}= x^{n-1}]\\
  &=\mathbbm{I}_{c^{\prime}_n\geq c^{\prime \prime}_n}+\frac{\mathbb E_{Q_n} \left[\frac{\hat{q}_{n,1,\epsilon}(X_n)}{\hat q_{n,0,\epsilon}(X_n)}\mid X^{n-1}= x^{n-1}\right]}{\mathbb E_{P_0} \left[ \frac{\hat{q}_{n,1,\epsilon}(X)}{\hat q_{n,0,\epsilon}(X)} \mid X^{n-1}= x^{n-1} \right] + (c^{\prime \prime}_n-c^{\prime}_n)\epsilon}\mathbbm{I}_{c^{\prime}_n<c^{\prime \prime}_n}\\
  &\leq\mathbbm{I}_{c^{\prime}_n\geq c^{\prime \prime}_n}+\mathbbm{I}_{c^{\prime}_n<c^{\prime \prime}_n}=1.
\end{align*}

The last inequality follows from the fact that the total variation distance is an \emph{integral probability metric} in the sense that for any pair of real numbers $c_1 < c_2$,
\begin{equation}\label{eqn:tv-integral}
   \tv(P, Q) = \frac{1}{c_2 - c_1}\sup_{c_1 \le f \le c_2} \left| \Expw_{X\sim P} f(X) - \Expw_{X \sim Q} f(X) \right|,
\end{equation}
since the conditional distribution $Q_n=Q_n( x^{n-1})$ of $X_n$ conditioned on $X^{n-1}= x^{n-1}$ satisfies $Q_n\in H^{\epsilon}_0, \text{ i.e., } \tv(Q_n,P_0) \leq \epsilon$. 
We immediately conclude the following.
\begin{theorem}
$R_{n,\epsilon}^{\text{plug-in}}$ is a test supermartingale for the adaptive null contamination model $H^{\epsilon,\infty}_0$.
Hence, recalling~\eqref{eq:stopping}, the stopping time
\begin{equation}
   \tau^*=\inf\{n : R_{n,\epsilon}^{\text{plug-in}}\geq 1/\alpha\} 
\end{equation}
  at which we reject the null yields a level-$\alpha$ sequential test.
 \end{theorem}

\subsubsection{Growth rate analysis}
For analyzing the growth rate of this test supermartingale, we assume that the estimator $\hat{p}_{1,n}$ converges pointwise in the following sense:
{ 
\begin{assumption}
\label{assmp-1}
    If $X_1,X_2,\cdots\stackrel{iid}{\sim} H$, for some distribution $H\in\mathcal{P}^{\epsilon}_1$, then $\hat{p}_{1,n}\to p_1^H$ almost surely as $n\to \infty$, where $p_1^H$ is a density function (which may depend on $H$) corresponding to some distribution $P_1^H\in\mathcal{P}_1$.
\end{assumption}
}
{  Note that the above pointwise convergence assumption is much weaker than assuming the existence of a pointwise \emph{consistent} estimator since $P_1^H$ is any arbitrary distribution in $\mathcal P_1$. Pointwise convergence holds in a wide variety of estimation problems within the robust statistics literature. For instance, robust M-estimators for certain parametric testing problems are known to be strongly consistent under regularity conditions \cite[Chapter 6]{huber2004robust}. To illustrate with a concrete example, when estimating the mean of a Gaussian distribution under contamination, the sample median serves as a robust estimator that converges almost surely to the population median.}

 The next theorem shows that $\log R_{n,\epsilon}^{\text{plug-in}}/n$ converges almost surely, under some assumptions.

\begin{theorem}
\label{thm:gr-lb-comp-alt}
    Suppose that $X_1,X_2,\cdots\stackrel{iid}{\sim} H$, for some distribution $H\in\mathcal{P}^{\epsilon}_1$ (where $\mathcal{P}^{\epsilon}_1$ is as defined in \eqref{contamination-model-comp-alt} or \eqref{tv-model-comp-alt}) and the estimator $\hat{p}_{1,n}$ satisfies \Cref{assmp-1}. We further assume that $P(p^H_1/p_0 = c)=0$, for all $c\in \mathbb R$ and for $P\in P_0\cup\mathcal{P}_1$. We consider $\epsilon$ being sufficiently small so that $c^{\prime \prime}_H>c^{\prime}_H$, where $c^{\prime \prime}_H,c^{\prime}_H$ are solutions of \eqref{eq:c2-comp-null} and \eqref{eq:c1-comp-null}, respectively, with $P_1$ being replaced by $P_1^{H}$. Then, as $n\to \infty$,
 \begin{equation*}
 \frac{1}{n}\log R_{n,\epsilon}^{\text{plug-in}}\to r_{H,\epsilon}^{\text{plug-in}} \text{ almost surely,} 
\end{equation*}
where 
$r_{H,\epsilon}^{\text{plug-in}} \geq \kl(Q^H_{1,\epsilon},Q^H_{0,\epsilon})-2(\log c^{\prime \prime}_H-\log c^{\prime}_H)\epsilon-\log (1+2(c^{\prime \prime}_H-c^{\prime}_H)\epsilon)$  and
 $Q_{1,\epsilon}^H,Q_{0,\epsilon}^H$ are the distributions with densities $q_{1,\epsilon}^H,q_{0,\epsilon}^H$ as defined in \eqref{eq:q0} and \eqref{eq:q1} with $P_1$ being replaced by $P_1^{H}$.
\end{theorem}

\begin{remark}
    In \Cref{thm:gr-lb-comp-alt}, if we further assume that
    the maximum bias $b_{P_1}(\epsilon,x):=\sup_{H:\tv(H,P_1)\leq \epsilon}|p_1^H(x)-p_1(x)|$ is a real-valued function such that $\lim_{\epsilon\to 0}b_{P_1}(\epsilon,x)=0$, for all $x\in\mathbb R$, we still could not conclude in general that  
    $\inf_{H:\tv(H,P_{1})\leq \epsilon}r_{H,\epsilon}^{\text{plug-in}}$ converges to the oracle (when $P_1$ is known) growth rate $\kl(P_1,P_0)$, because of the difficulty in interchanging the expectation and the limit $\epsilon\to 0$. However, under the aforementioned vanishing bias assumption, in specific situations, such as  exponential family distributions, one can show that
 \begin{equation*}
    \lim_{\epsilon\to 0} \inf_{H:\tv(H,P_{1})\leq \epsilon}r_{H,\epsilon}^{\text{plug-in}}= \kl(P_1,P_0).
 \end{equation*}
\end{remark}


\subsection{{Robust numeraire: composite null vs.\ simple alternative}}
\label{sec:comp-null}

In this section, we consider the null $\mathcal P_0$ to be composite and the alternative to be simple $P_1$. To account for the small deviations from the idealized models,
consider the following:
\begin{equation}
\label{contamination-model-comp-null}
\mathcal{P}^{\epsilon}_0=\bigcup_{P\in\mathcal{P}_0}\{Q\in\mathcal{M}: Q=(1-\epsilon)P+\epsilon H, H\in\mathcal{M}\}, \quad 
\end{equation}
\begin{equation}
\label{tv-model-comp-null}
 \text{ or }   \mathcal{P}^{\epsilon}_0=\bigcup_{P\in\mathcal{P}_0}\{ Q \in \mathcal{M} : \tv(P, Q) \leq \epsilon \},
\end{equation}
and ${H}^{\epsilon}_1$ is either \eqref{contamination-model} or \eqref{tv-model}. Recalling the adaptive contamination model for the null from \cref{subsec:adapt-contamination}, we define the adaptive contamination model for the composite null, $\mathcal H^{\epsilon,\infty}_0$ to be the set of all possible joint distribution of the sequence $X_1,X_2,\cdots$ such that the conditional distribution of $X_n$ given $X^{n-1}$ lies within $\mathcal H^{\epsilon}_0$. So, at each $n$, the adversary defines a measurable mapping $Q_n: \Omega^{n-1}\to \mathcal{P}^{\epsilon}_0$ and
$X_n$ is drawn from the distribution $Q_n(x^{n-1}),$ given that $X^{n-1}=x^{n-1}$. 

\subsubsection{The reverse information projection (RIPr) and numeraire}

For testing composite nulls versus a simple alternative, the reverse information projection (RIPr) has recently emerged as an central tool \cite{grunwald2020safe,lardy2023universal,larsson2024numeraire}. In his PhD thesis, Li~\cite{li1999mix} defined the RIPr as follows: if we assume that $\mathcal{P}_{0}$ is convex and has a reference measure, and $\inf_{P\in\mathcal{P}_0}\kl(P_1,P)<\infty$,  then there exists a unique sub-probability measure $P_0$ that satisfies
\begin{equation}
\label{ripr-klinf}
\kl(P_1,P_0)=\inf_{P\in\mathcal{P}_0}\kl(P_1,P).
\end{equation}
This $P_0$ is defined as the RIPr of $P_1$ on $\mathcal{P}_0$.

Recently, \cite{larsson2024numeraire} significantly generalized this definition by exploiting a certain duality of the RIPr with e-variables. They first show that for any null $\mathcal{P}_0$ and simple alternative $P_1$,
there always exists a ($P_1$ almost surely) unique and strictly positive e-variable $B^*$ called the \emph{numeraire}, such that for any other e-variable $B$ for $\mathcal{P}_0$, $\mathbb E_{P_1}[B/B^*]\leq 1$ (``numeraire'' property, closely related to Cover's optimal portfolios). An equivalent statement is that for any e-variable $B$ for 
$\mathcal{P}_0$, $\mathbb E_{P_1} \log (B^*/B) \geq 0$ (``log-optimality''; see also \cite{grunwald2020safe,lardy2023universal}). This implies that the growth rate of $B^*$, $\mathbb E_{P_1} \log (B^*)$, is larger than that of any other e-variable $B$. It is somewhat remarkable that the existence and uniqueness of such an optimal e-variable can be established under absolutely no assumptions on $\mathcal{P}_0$ and $P_1$. \cite{larsson2024numeraire} then define a measure $P_0$ by defining its likelihood ratio
(Radon-Nikodym derivative) with respect to $P_1$, $\frac{dP_0}{dP_1}:=\frac{1}{B^*}.$ This $P_0$ is the more general definition of the Reverse Information Projection (RIPr) of $P_1$ onto $\mathcal{P}_0$, and it can be shown to match Li's  original definition under his additional assumptions (of convexity, reference measure, and finiteness of $\kl$). Further, a very general strong duality holds between $B^*$ and $P_0$ that we omit here for brevity. 

In what follows, the RIPr $P_0$ is the sole representative of the composite null $\mathcal{P}_0$, and even though it may be a sub-probability measure in general, we can still proceed as if we were dealing with a simple null. We elaborate below.

\subsubsection{Robustifying the numeraire}

Let ${P}_{0}$ be the RIPr of ${P}_{1}$ on the null $\mathcal{P}_0$. Suppose, for $j=1,2$, $p_j$ be the density of $P_j$ with respect to some common dominating measure $\mu$. Let, $k=\int p_0d\mu$. Note that since ${P}_{0}$ is a sub-probability measure, we have $k\leq 1$.
We obtain ${q}_{j,\epsilon}$ as defined in \eqref{eq:q0} and \eqref{eq:q1}.
We calculate the numbers $0 \leq c^{\prime},c^{\prime \prime} \leq\infty$ are determined such that $q_{0,\epsilon}$ is a sub-probability density with $\int q_{0,\epsilon}d\mu=k$ and $ q_{1,\epsilon}$ is a probability density:
\begin{align}
\label{eq:c2-comp-null}
\left(1-\epsilon\right)\left\{P_0\left[p_1 / p_0<c^{\prime \prime}\right]+\frac{1}{c^{\prime \prime}} P_1\left[p_1 / p_0 \geq c^{\prime \prime}\right]\right\} & =k, \\
\label{eq:c1-comp-null}
\left(1-\epsilon\right)\left\{P_1\left[p_1 / p_0>c^{\prime}\right]+c^{\prime} P_0\left[p_1 / p_0 \leq c^{\prime}\right]\right\} & =1.
\end{align}
We choose $\epsilon$ sufficiently small, then
$c^{\prime}<c^{\prime \prime}$ and we have ${\frac{{q}_{1,\epsilon}(x)}{ q_{0,\epsilon}(x)}}=\max\{c^{\prime},\min\{c^{\prime\prime},\frac{{p}_{1}(x)}{ p_{0}(x)}\}\}$.  Define $R_{0,\epsilon}^{\text{RIPr}}=1$ and $R_{n,\epsilon}^{\text{RIPr}}= R_{n-1,\epsilon}^{\text{RIPr}}\times B_{\epsilon}(X_n), n=1,2,\dots$
where 
\begin{equation}
\label{bet:comp-null}
   B_{\epsilon}(x):=\frac{\frac{{q}_{1,\epsilon}(x)}{q_{0,\epsilon}(x)}}{\sup_{P\in \mathcal{P}_0}\mathbb E_{X\sim P}\left[\frac{{q}_{1,\epsilon}(X)}{q_{0,\epsilon}(X)}\right]+(c^{\prime \prime}-c^{\prime})\epsilon}.
\end{equation}
The term $\sup_{P\in \mathcal{P}_0}\mathbb E_{X\sim P}\left[\frac{{q}_{1,\epsilon}(X)}{q_{0,\epsilon}(X)}\right]$ might not have a closed form expression, in that case, we need to rely on numerical approximations in practice. In parametric setting, one can discretize the parameter space in $\mathcal{P}_{0}$ by selecting a grid of finitely many representative points, and at each grid point, we approximate the expectation using Monte Carlo sampling, where we generate independent samples from the corresponding distribution and compute the empirical mean of the likelihood ratio. The final approximation is then obtained by taking the maximum over all the values at the grid points. However, in many simple settings, we have that the RIPr satisfies the maximum, i.e.,
\begin{equation}
    \label{eq:sup-ripr}
    \sup_{P\in \mathcal{P}_0}\mathbb E_{X\sim P}\left[\frac{{q}_{1,\epsilon}(X)}{q_{0,\epsilon}(X)}\right]=\mathbb E_{X\sim P_0}\left[\frac{{q}_{1,\epsilon}(X)}{q_{0,\epsilon}(X)}\right].
\end{equation}
When this condition holds, computing the test supermartingale becomes significantly simpler. Next, we provide a sufficient condition under which this equality is guaranteed.

\begin{proposition}
\label{prop:sup-ripr-exp}
Suppose that $\{P_\theta:\theta\in\Theta\}$ is some exponential family distribution with one-dimensional sufficient statistic for the parameter $\theta$. Let ${P}_1=P_{\theta_1}$ and $\mathcal{P}_0=\{P_\theta:\theta\in\Theta_0\}$ are such that $\sup\Theta_0\leq \theta_1$ (or $\inf\Theta_0\geq \theta_1$). Then, the RIPr, $P_0$, is $P_{\theta_0^*}$, where $\theta_0^*=\sup\Theta_0$ (or $\theta_0^*=\inf\Theta_0$), and \eqref{eq:sup-ripr} holds.
\end{proposition}
The above result also holds for a general monotone likelihood ratio family, eg., see \cite{ramdas2024hypothesis}
[Chapter 2.6.1].

Now, we observe that, under our adaptive null contamination model, for any possible conditional distribution of $X_n$ conditioned on $X^{n-1}=x^{n-1}$ (denoted as $Q_n=Q_n(x^{n-1})$), there exists $P_{0,n}\in \mathcal{P}_0$, such that $\tv(Q_n,P_{0,n})\leq\epsilon$. Then, 
\begin{align*}
   &\mathbb E_{ Q_n} [ B_{\epsilon}(X_n) \mid X^{n-1}=x^{n-1}]\\
   &=\mathbb E_{ Q_n} \left[ \frac{\frac{{q}_{1,\epsilon}(X_n)}{ q_{0,\epsilon}(X_n)}}{\sup_{P\in \mathcal{P}_0}\mathbb E_{X\sim P}\left[\frac{{q}_{1,\epsilon}(X)}{ q_{0,\epsilon}(X)}\right]+(c^{\prime \prime}-c^{\prime})\epsilon} \mid X^{n-1}=x^{n-1} \right]\\
  &\leq \frac{\mathbb E_{Q_n} \left[\frac{{q}_{1,\epsilon}(X_n)}{ q_{0,\epsilon}(X_n)}\mid X^{n-1}=x^{n-1}\right]}{\mathbb E_{ P_{0,n}} \left[ \frac{{q}_{1,\epsilon}(X_n)}{ q_{0,\epsilon}(X_n)} \right] + (c^{\prime \prime}-c^{\prime})\epsilon}\\
  &\leq 1.
\end{align*}
The last inequality follows from \eqref{eqn:tv-integral}, since the conditional distribution $Q_n=Q_n(x^{n-1})$ satisfies $\tv(Q_n,P_{0,n})\leq\epsilon$. 
We immediately conclude the following.
\begin{theorem}
Suppose that $\epsilon>0$ is sufficiently small such that we have $c^{\prime \prime}>c^{\prime}$. Then, $ R_{n,\epsilon}^{\text{RIPr}}$ is a test supermartingale for the adaptive null contamination model $\mathcal{H}^{\epsilon,\infty}_0$.
Hence, recalling~\eqref{eq:stopping}, the stopping time
\begin{equation}
   \tau^*=\inf\{n :  R_{n,\epsilon}^{\text{RIPr}}\geq 1/\alpha\} 
\end{equation}
  at which we reject $\mathcal{P}^{\epsilon}_0$ yields a level-$\alpha$ sequential test.
\end{theorem}

It turns out that similar results can be established for the composite null case, akin to those observed in the simple null versus simple alternative scenario.

\begin{theorem}
\label{thm:gr-lb-comp-null}
Suppose that $\epsilon>0$ be sufficiently small such that we have $c^{\prime \prime}>c^{\prime}$ and $X_1,X_2,\cdots \stackrel{iid}{\sim} Q\in H^{\epsilon}_1$. Then, as $n\to \infty$,
 \begin{equation*}
 \frac{\log R_{n,\epsilon}^{\text{RIPr}}}{n}\to r^{Q,\epsilon}_{\text{RIPr}} \text{ almost surely for some constant } r^{Q,\epsilon}_{\text{RIPr}}
\end{equation*}
and the growth rate, 
\begin{align*}
    r^\epsilon_{\text{RIPr}}=\inf_{ Q \in H^{\epsilon}_1}r^{Q,\epsilon}_{\text{RIPr}}\geq \kl(Q_{1,\epsilon},Q_{0,\epsilon})-2(\log c^{\prime \prime}-\log c^{\prime})\epsilon
    -\log\left(\sup_{P\in \mathcal{P}_0}\mathbb E_{P}\frac{{q}_{1,\epsilon}(X)}{q_{0,\epsilon}(X)}+(c^{\prime \prime}-c^{\prime})\epsilon\right).
\end{align*}
\end{theorem}

The next theorem shows that the growth rate of our test converges to $\inf_{P\in\mathcal{P}_0}\kl(P_1,P)$, as $\epsilon\to 0$. And it is the optimal e-power or growth rate for testing $\mathcal{P}_0$ vs.\ $P_1$ \cite{grunwald2020safe,lardy2023universal}.

\begin{theorem}
\label{thm:asymp-gr-comp-null}
If \eqref{eq:sup-ripr} holds, then the growth rate of our test, $r^{\epsilon}_{\text{RIPr}}\to \inf_{P\in\mathcal{P}_0}\kl(P_1,P)$, as $\epsilon\to 0$.
\end{theorem}

This result shows that the growth rate of our robust test supermartingale for composite null is asymptotically optimal as $\epsilon$ approaches zero.

\subsection{Combining robust predictable plug-in and robust numeraire:  composite null vs.\ composite alternative}
\label{sec:comp-null-alt}

In this section, we address the most general scenario, when both the null $\mathcal P_0$ and the alternative $\mathcal P_1$ are composite. We will assume that there exists a common reference measure for $\mathcal P_1$ and $\mathcal P_0$ so that we can associate the distributions with their densities.
To handle small deviations from the idealized models,
we test $\mathcal{P}^{\epsilon}_0$ vs.\ $\mathcal{P}^{\epsilon}_1$, defined in \eqref{contamination-model-comp} or \eqref{tv-model-comp}.

Our approach combines plug-in and RIPr methods from the last two sections. At each time $n$, let $\hat{p}_{1,n}$ be some robust estimate of the density of the data based on past observations $X^{n-1}$ which belongs to the alternative. Let $\hat{P}_{1,n}$ be the distribution corresponding to the density $\hat{p}_{1,n}$ and $\hat{P}_{1,n}\in\mathcal{P}_1$. Let $\hat{P}_{0,n}$ be the reverse information projection (RIPr) of $\hat{P}_{1,n}$ on the null $\mathcal{P}_0$. 


Now we define $\hat{q}_{n,j,\epsilon}$ similarly as ${q}_{j,\epsilon}$  was defined in \eqref{eq:q1} (for $j=0,1$) .
\begin{align}
\label{eq:q0-comp-alt-null}
\hat q_{n,0,\epsilon}(x) & = \begin{cases}
    \left(1-\epsilon\right) \hat p_{0,n}(x), \quad \quad \text { for } \hat{p}_{1,n}(x) / \hat p_{0,n}(x)<c^{\prime \prime}_n , \\
\frac{1}{c^{\prime \prime}_n}\left(1-\epsilon\right) \hat{p}_{1,n}(x), ~~\text { for } \hat{p}_{1,n}(x) /\hat p_{0,n}(x) \geq c^{\prime \prime}_n; \\
\end{cases}\\
\label{eq:q1-comp-alt-null}
\hat q_{n,1,\epsilon}(x) & = \begin{cases} \left(1-\epsilon\right) \hat{p}_{1,n}(x), \quad \quad \text { for } \hat{p}_{1,n}(x) / \hat p_{0,n}(x)>c^{\prime}_n , \\
c^{\prime}_n\left(1-\epsilon\right) \hat p_{0,n}(x), \quad \text { for } \hat{p}_{1,n}(x) / \hat p_{0,n}(x) \leq c^{\prime}_n.
\end{cases}
\end{align}
$k_n=\int \hat p_{0,n}d\mu$. Since ${P}_{0}$ is a sub-probability measure, we have $k_n\leq 1$.
We calculate the numbers $0 \leq c^{\prime}_n,c^{\prime \prime}_n \leq\infty$ are determined such that $\hat q_{n,0,\epsilon}$ is a sub-probability density with $\int \hat q_{n,0,\epsilon}d\mu=k_n$ and $\hat q_{n,1,\epsilon}$ is a probability density:
\begin{align}
\label{eq:c2-comp-alt-null}
\hat{P}_{0,n}\left[\hat{p}_{1,n} / \hat p_{0,n}<c^{\prime \prime}_n\right]+ \frac{1}{c^{\prime \prime}_n}\hat{P}_{1,n}\left[\hat{p}_{1,n} / \hat p_{0,n}\geq c^{\prime \prime}_n\right] & =\frac{k_n}{\left(1-\epsilon\right)}, \\
\label{eq:c1-comp-alt-null}
\hat{P}_{1,n}\left[\hat{p}_{1,n}/ \hat p_{0,n}>c^{\prime}_n\right]+c^{\prime}_n \hat P_{0,n}\left[\hat{p}_{1,n} / \hat p_{0,n} \leq c^{\prime}_n\right] & =\frac{1}{\left(1-\epsilon\right)}.
\end{align}

Note that if 
$c^{\prime}_n<c^{\prime \prime}_n$, we have ${\frac{\hat{q}_{n,1,\epsilon}(x)}{\hat q_{n,0,\epsilon}(x)}}=\max\{c^{\prime}_n,\min\{c^{\prime\prime}_n,\frac{\hat{p}_{1,n}(x)}{\hat p_{0,n}(x)}\}\}$.  Define $R_{0,\epsilon}^{\text{RIPr,plug-in}}=1$ and $R_{n,\epsilon}^{\text{RIPr,plug-in}}= R_{n-1,\epsilon}^{\text{RIPr,plug-in}}\times \hat B_{n,\epsilon}(X_n), n=1,2,\dots$
where 
\begin{equation}
\label{bet-comp-null-comp-alt}
  \hat B_{n,\epsilon}(x):=\begin{cases}
       \frac{\frac{\hat{q}_{n,1,\epsilon}(x)}{\hat q_{n,0,\epsilon}(x)}}{\sup_{P\in \mathcal{P}_0}\mathbb E_{ P}\left[\frac{\hat{q}_{n,1,\epsilon}(X)}{\hat q_{n,0,\epsilon}(X)}\mid X^{n-1}=x^{n-1}\right]+(c^{\prime \prime}_n-c^{\prime}_n)\epsilon}, \text{ if } c^{\prime}_n<c^{\prime \prime}_n,\\
       1, \text{ otherwise. }
   \end{cases} 
\end{equation}
As noted in the previous section, $\sup_{P\in \mathcal{P}_0}\mathbb E_{X\sim P}\left[\frac{\hat{q}_{n,1,\epsilon}(X)}{\hat q_{n,0,\epsilon}(X)}\mid X^{n-1}\right]$ might not have a closed form expression; in that case, we need to rely on numerical approximations.
We now show that this construction yields a valid test for the adaptive contamination model $\mathcal H^{\epsilon,\infty}_0$ defined in the previous section. Under $\mathcal H^{\epsilon,\infty}_0$,
 for any possible conditional distribution $Q_n=Q_n(x^{n-1})$ of $X_n$ conditioned on $X^{n-1}=x^{n-1}$, there exists $P_{0,n}\in \mathcal{P}_0$, such that $\tv(Q_n,P_{0,n})\leq\epsilon$. Then, 
\begin{align*}
   &\mathbb E_{Q_n} [\hat B_{n,\epsilon}(X_n) \mid X^{n-1}=x^{n-1}]\\
  &\leq\mathbbm{I}_{c^{\prime}_n\geq c^{\prime \prime}_n}+\frac{\mathbb E_{ Q_n} \left[\frac{\hat{q}_{n,1,\epsilon}(X_n)}{\hat q_{n,0,\epsilon}(X_n)}\mid X^{n-1}=x^{n-1}\right]\mathbbm{I}_{c^{\prime}_n<c^{\prime \prime}_n}}{\mathbb E_{P_{0,n}} \left[ \frac{\hat{q}_{n,1,\epsilon}(X_n)}{\hat q_{n,0,\epsilon}(X_n)} \mid X^{n-1} =x^{n-1}\right] + (c^{\prime \prime}_n-c^{\prime}_n)\epsilon}\\
  &\leq\mathbbm{I}_{c^{\prime}_n\geq c^{\prime \prime}_n}+\mathbbm{I}_{c^{\prime}_n<c^{\prime \prime}_n}=1.
\end{align*}

The last inequality follows from \eqref{eqn:tv-integral}, since the conditional distribution $Q_n$ satisfies $\tv(Q_n,P_{0,n})\leq\epsilon$.  
We immediately conclude the following.
\begin{theorem}
$R_{n,\epsilon}^{\text{RIPr,plug-in}}$ is a test supermartingale for the adaptive null contamination model $\mathcal H^{\epsilon,\infty}_0$.
Hence, recalling~\eqref{eq:stopping}, if
\begin{equation}
   \tau^*=\inf\{n :  R_{n,\epsilon}^{\text{RIPr,plug-in}}\geq 1/\alpha\} 
\end{equation}
 is the stopping time at which we reject $\mathcal{P}^{\epsilon}_0$, then this yields a level-$\alpha$ sequential test.
\end{theorem}
{ We summarize our method in \Cref{algo:1}. The computational complexity of this procedure primarily depends on lines 4 and 5, which in turn are determined by the structures of the pre-contaminated null and alternative classes, $\mathcal{P}_0$ and $\mathcal{P}_1$. In simple settings, these steps are computationally straightforward. For instance, consider the robust Gaussian mean testing problem where $\mathcal{P}_0={N(\mu,1):\mu\geq \mu_0}$ and $\mathcal{P}_1={N(\mu,1):\mu\leq \mu_1}$, for some $\mu_1<\mu_0$. In this case, Step 4 involves computing the median of $n$ data points, which takes $O(n)$ time, while Step 5 incurs only $O(1)$ cost because at each $n$, we know that $\hat{P}_{0,n}$ is $N(\mu_0,1)$ in this setting. However, for more general model classes, these steps can become substantially more involved.}
{
\begin{algorithm}[h!]
\caption{Level-$\alpha$ robust test for general composite null vs. composite alternative}
\label{algo:1}
\begin{algorithmic}[1]
\State $R_{0,\epsilon}^{\text{RIPr,plug-in}}= 1$
\State $n= 1$
\While{$R_{n,\epsilon}^{\text{RIPr,plug-in}}<\frac{1}{\alpha}$}
\State Compute $\hat{p}_{1,n}$: a robust density estimate based on past observations $x^{n-1}$\
\State Compute $\hat{P}_{0,n}$: reverse information projection of $\hat{P}_{1,n}$ on the null $\mathcal{P}_0$\
    \State Compute $c^{\prime}_n$ and $c^{\prime \prime}_n $\ by solving \eqref{eq:c2-comp-alt-null} and \eqref{eq:c1-comp-alt-null}
    \State $R_{n,\epsilon}^{\text{RIPr,plug-in}}=R_{n-1,\epsilon}^{\text{RIPr,plug-in}}$
    \If{$c^{\prime}_n<c^{\prime \prime}_n$}
    \State $\pi_{n,\epsilon}(x):=\max\{c^{\prime}_n,\min\{c^{\prime\prime}_n,\frac{\hat{p}_{1,n}(x)}{\hat p_{0,n}(x)}\}\}$\
    \State Compute $ B_{n,\epsilon}=\pi_{n,\epsilon}(x_n)/[{\sup_{P\in \mathcal{P}_0}\mathbb E_{ X\sim P}\left[\pi_{n,\epsilon}(X)\mid X^{n-1}=x^{n-1}\right]+(c^{\prime \prime}_n-c^{\prime}_n)\epsilon}]$\
    \State $R_{n,\epsilon}^{\text{RIPr,plug-in}}=R_{n,\epsilon}^{\text{RIPr,plug-in}}\times B_{n,\epsilon}$
    \EndIf
    \State $n=n+1$
    \EndWhile
\end{algorithmic}
\end{algorithm}
}

{ 
While the type-I error control holds without any additional assumptions, establishing the growth rate of the test requires additional assumptions.
 The main challenge lies in the fact that even if we assume $\hat{p}_{1,n}$ to be a strongly consistent estimator, it does not guarantee that the corresponding sequence of reverse information projections, $\hat{p}_{0,n}$, would converge in the almost sure sense. So, we need to explicitly assume the convergence of $\hat{p}_{0,n}$, as formalized below.

 \begin{assumption}
 \label{assmp-ripr-conv}
   If $\hat{p}_{1,n}$ satisfies \Cref{assmp-1}, then $\hat{p}_{0,n}\to{p}_{0}^H$ almost surely as $n\to\infty$, where ${p}_{0}^H$ is the density of the measure ${P}_{0}^H$, which is the RIPr of ${P}_{1}^H$ on $\mathcal{P}_0$.
 \end{assumption}

To provide a concrete example where the assumption does hold, consider a one-parameter exponential family of densities written in canonical form 
\begin{equation}
\label{eq:exp-fam}
    p_{\theta}(x)=h(x)\exp(\theta T(x)-A(\theta)),
\end{equation}
where $A: \mathbb R\to \mathbb R$ is a convex, differentiable function and
\begin{equation}
\label{exp-test}
    \mathcal{P}_0=\{P_{\theta}: \theta\in[a,b] \} \text{ and }\mathcal{P}_1=\{P_{\theta}: \theta\in \Theta_1\}, 
\end{equation}
$\text{for some } -\infty \leq a\leq b \leq \infty, \Theta_1\subseteq\Theta \text{ is such that } \Theta_1\cap [a,b]=\emptyset.$

\begin{proposition}
\label{prop:one-param-exp-fam}
  For a one-parameter exponential family of the form \eqref{eq:exp-fam}, the testing problem in \eqref{exp-test} satisfies \Cref{assmp-ripr-conv}. 
\end{proposition}

Note that \Cref{assmp-ripr-conv} is not restricted to univariate problems. Consider the multivariate Gaussian testing setup: $\mathcal{P}_0=\{N_d(\bm{\mu},I_d):\mu_1\leq a,\mu_2=\cdots=\mu_d=0\}$ and $\mathcal{P}_1=\{N_d(\bm{\mu},I_d):\mu_1 \geq b\}$, for some $a < b$.
The KL divergence between two such distributions is given by $\kl(N_d(\bm{\mu},I_d),N_d(\bm{\mu}^\prime,I_d))=\frac{1}{2}\|\bm{\mu}^\prime-\bm{\mu}\|_2^2$. It is straightforward to verify that for  $\bm{\mu}^*=(a,0,\cdots,0)$, $$\kl(P,N_d(\bm{\mu}^*,I_d))=\inf_{P_0\in\mathcal{P}_0}\kl(P,P_0), \text{ for any } P\in\mathcal{P}_1.$$ 
Hence, $\{\hat{p}_{0,n}\}$ is a constant sequence (i.e., for every $n$, the RIPr density $\hat{p}_{0,n}$ is the density of $N_d(\bm{\mu}^*,I_d)$) and thus  \Cref{assmp-ripr-conv} holds true.

}

\begin{theorem}
\label{thm:gr-exp-family}
Suppose that $X_1,X_2,\cdots\stackrel{iid}{\sim} H$, for some distribution $H\in\mathcal{P}^{\epsilon}_1$ (where $\mathcal{P}^{\epsilon}_1$ is as defined in \eqref{contamination-model-comp} or \eqref{tv-model-comp}) and \Cref{assmp-1,assmp-ripr-conv} hold. 
We consider $\epsilon$ being sufficiently small so that $c^{\prime \prime}_H>c^{\prime}_H$, where $c^{\prime \prime}_H,c^{\prime}_H$ are solutions of \eqref{eq:c2-comp-null} and \eqref{eq:c1-comp-null}, respectively with $P_1$ and $P_0$ being replaced by $P_{\theta_1(H)}$ and the RIPr of $P_{\theta_1(H)}$ on $\mathcal{P}_0$, respectively. Then,
 \begin{equation*}
 \frac{1}{n}\log R_{n,\epsilon}^{\text{RIPr,plug-in}}\to r^{H,\epsilon}_{\text{RIPr,plug-in}} \text{ almost surely as }n\to \infty,
\end{equation*}
where 
 $r^{H,\epsilon}_{\text{RIPr,plug-in}}\geq \kl(Q_{1,\epsilon}^H,Q_{0,\epsilon}^H)-2(\log c^{\prime \prime}_H-\log c^{\prime}_H)\epsilon-\log\left(\sup_{P\in \mathcal{P}_0}\mathbb E_{P}\frac{{q}^H_{1,\epsilon}(X)}{q^H_{0,\epsilon}(X)}+(c^{\prime \prime}_H-c^{\prime}_H)\epsilon\right),$ and
 $Q_{1,\epsilon}^H,Q_{0,\epsilon}^H$ are distributions with densities $q_{1,\epsilon}^H,q_{0,\epsilon}^H$ as defined in \eqref{eq:q0} and \eqref{eq:q1} with $P_1$ and $P_0$ being replaced by $P_{\theta_1(H)}$ and the RIPr of $P_{\theta_1(H)}$ on $\mathcal{P}_0$, respectively.
\end{theorem}
The above theorem provides a bound on the growth rate and proves asymptotic optimality as $\epsilon\to 0$, under suitable assumptions. 

\section{Simulations}
\label{sec:expt}
In this section, we present a series of simulations designed to evaluate the performance of our robust tests for both simple and composite hypotheses. We use two key parameters in our analysis: $\epsilon^A$, which represents the value of $\epsilon$
specified to the test supermartingale and $\epsilon^R$, which denotes the true fraction of data contaminated (A = Algorithm, R = Reality). 

\subsection{Experiments with simple null}

In all the simulation experiments in this subsection, we consider the null $P_0$ to be $N(0,1)$, the simple and the composite alternative to be $P_1=N(\mu_1,1)$ for some fixed $\mu_1$ and $\mathcal{P}_1=\{N(\mu,1): \mu\neq 0\}$ respectively. For simple null vs simple alternative, we employ the test described in \Cref{sec:lfd-contamination-simp}. But for $P_0$ vs. the composite alternative $\mathcal{P}_1$, there is no LFD; hence, we use the test supermartingale introduced in \Cref{sec:comp-alt}. Both the non-robust predictable plug-in method and our robustified predictable plug-in method for composite alternative, we use the sample median as an estimate of $\mu$. All the results in Fig. \cref{fig:null,fig:different-eps} are the average of $10$ independent simulations.

\paragraph{Sanity check under the null.}
In this experiment, samples are simulated independently from the following $\epsilon_R$-contaminated null distribution: $Q=(1-\epsilon^R)\times N(0,1) + \epsilon^R \times\text{Cauchy}(-1,10)$. Here $\epsilon^A=\epsilon^R=0.01$. This mixture model ensures that the $\epsilon^R$ fraction of the sample is drawn from the heavy-tailed Cauchy distribution with location and scale parameters $-1$ and $10$ respectively. For the tests with simple alternative, we consider $\mu_1=1$.  Fig. \ref{fig:null} illustrates that our robust tests are ``safe'', i.e. they do not exhibit growth under the null hypothesis, whereas the non-robust methods show unreliable behaviour with significant fluctuations.

\begin{figure}[!htb]
    \centering
    \includegraphics[width=0.5\linewidth]{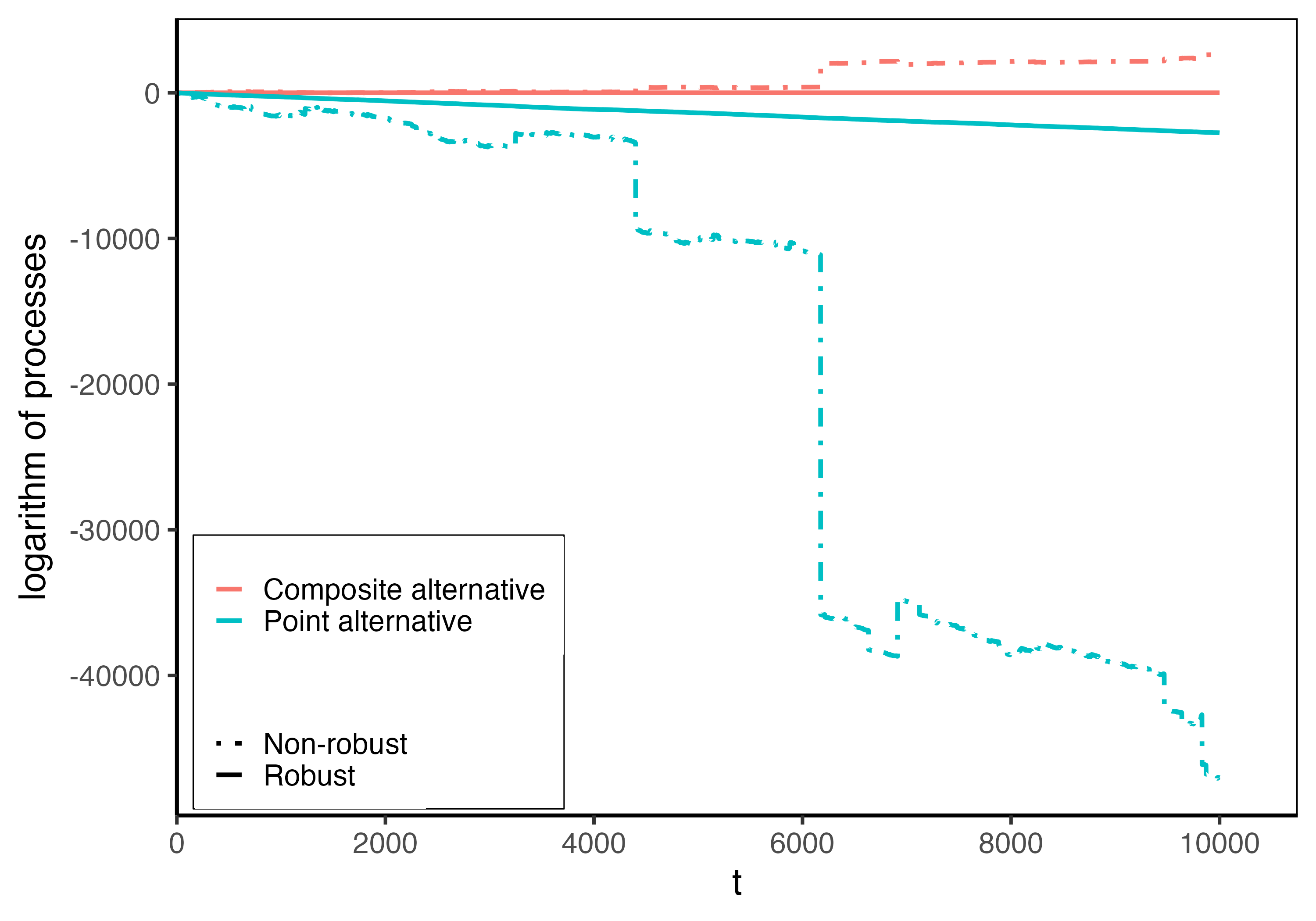}
    \caption{Data is drawn from $(1-\epsilon^R)\times N(0,1) + \epsilon^R \times\text{Cauchy}(-1,10)$ and $P_0=N(0,1), P_1=N(1,1)$, $\epsilon^A=\epsilon^R=0.01$. Robust tests are safe, but the non-robust tests exhibit unstable and unreliable behavior.}
    \label{fig:null}
\end{figure}

\begin{figure*}[!htb]

\centering
\centering
\subfloat[$P_0=N(0,1)$, $P_1=N(1,1)$]{\includegraphics[width=0.49\linewidth,height=0.3\linewidth]{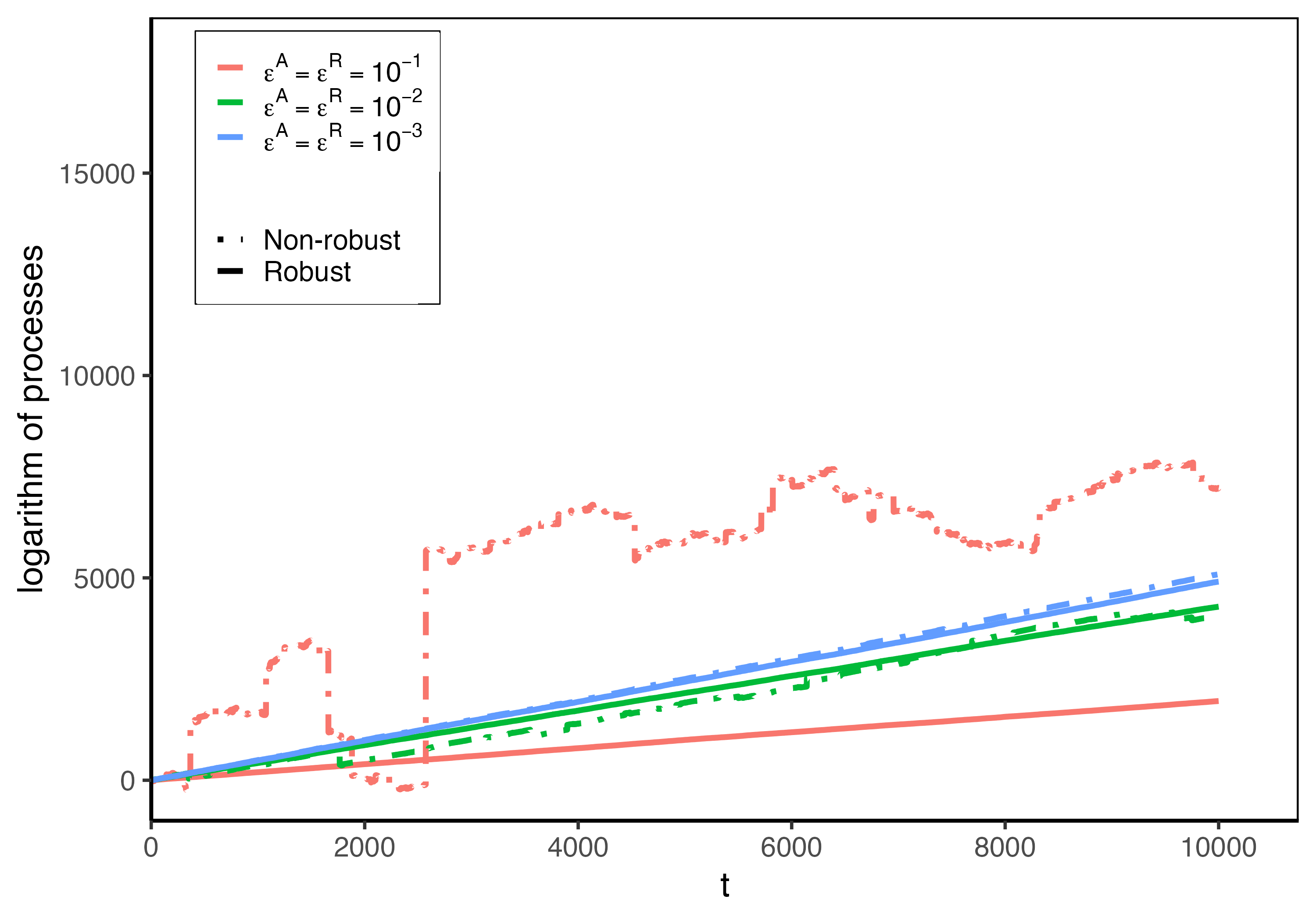}} 
\subfloat[$P_0=N(0,1), \mathcal{P}_1=\{N(\mu,1): \mu\neq 0\}$]{\includegraphics[width=0.49\linewidth,height=0.3\linewidth]{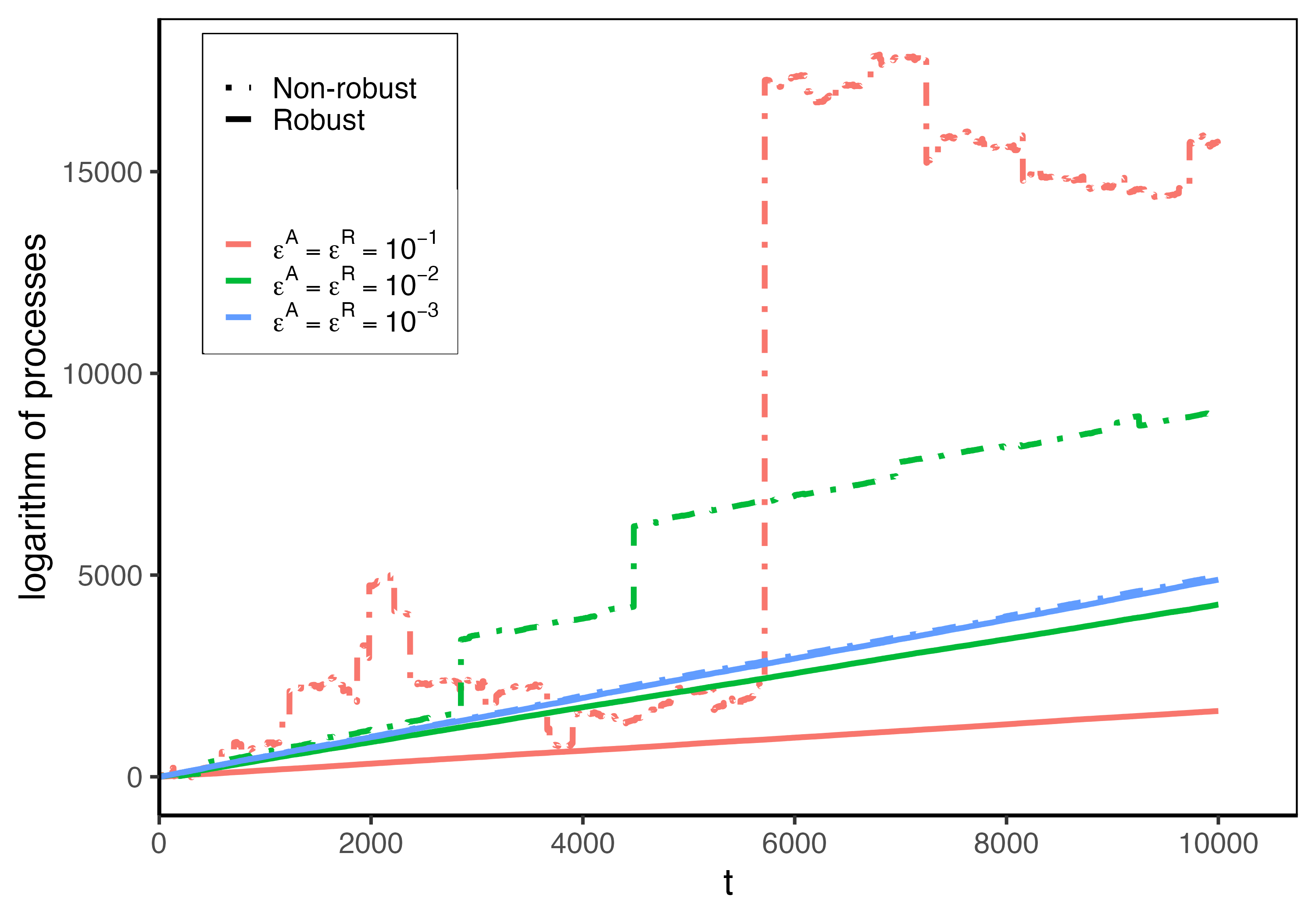}}
\caption[]{Data is drawn from $(1-\epsilon^R) \times N(1,1) + \epsilon^R \times\text{Cauchy}(-1,10)$ and $P_0=N(0,1), \mu_1=1$, $\epsilon^A=\epsilon^R=0.1,0.01,0.001$. The growth rate of our robust tests increases as $\epsilon$ decreases. As anticipated,  The growth rates for our robust tests based on simple and composite alternatives almost overlap. The growth rates for our robust tests based on simple and composite alternatives in the left and right subfigures look similar.} 
\label{fig:different-eps}  
\end{figure*}

\paragraph{Growth rate with different contamination.}

In this experiment, samples are simulated independently from the mixture distribution $(1-\epsilon^R)\times N(1,1) + \epsilon^R \times\text{Cauchy}(-1,10)$ for $\epsilon^R=10^{-3},10^{-2},10^{-1}$. This mixture model ensures that the $\epsilon^R$ fraction of the sample is drawn from the heavy-tailed Cauchy distribution with location and scale parameters $-1$ and $10$ respectively. For the simple alternative, we consider $\mu_1=1$.  \Cref{fig:different-eps} shows the growths of processes in logarithmic scale for both simple and composite alternative models: $P_1=N(1,1)$ (left) and $\mathcal{P}_1=\{N(\mu,1): \mu\neq 0\}$ (right). As expected, The growth rate of our robust tests increases as $\epsilon$ decreases. Notably, both simple and composite tests grow at similar rates (\cref{fig:different-eps} ). It is also evident that non-robust tests exhibit highly erratic behavior, even when plotting the averages of $10$ independent runs.

\subsection{Experiments with composite null}

In all the simulation experiments in this subsection, we consider the null to be $\mathcal P_0=\{N(\mu,1):-0.5\leq\mu\leq0.5\}$, the simple and the composite alternative to be $P_1=N(1,1)$ and $\mathcal{P}_1=\{N(\mu,1): \mu\leq 0.5 \text{ or } \mu \geq 0.5\}$ respectively.  For this composite null $\mathcal P_0$ and simple alternative $P_1$, ($N(0.5,1),N(1,1)$) is the LFD pair and we employ the test described in \Cref{sec:lfd-contamination-comp}. For this composite null $\mathcal P_0$ and composite alternative $\mathcal P_1$, there is no LFD. So, we use our robustified predictable plug-in method for composite alternative, as described in \Cref{sec:comp-null-alt} with the sample median as an estimate of $\mu$. All the results in Fig. \cref{fig:null-comp,fig:different-eps-comp} are the average of $10$ independent simulations. 
\paragraph{Sanity check under the null.}
In this experiment, samples are simulated independently from the following $\epsilon_R$-contaminated null distribution: $Q=(1-\epsilon^R)\times N(0,1) + \epsilon^R \times\text{Cauchy}(-1,10)$. Here $\epsilon^A=\epsilon^R=0.01$. For the tests with simple alternative, we consider $\mu_1=1$.  Fig. \ref{fig:null-comp} illustrates that our robust tests are ``safe'', i.e. they do not exhibit growth under the null hypothesis, whereas the non-robust methods show unreliable behavior with significant fluctuations.

\begin{figure}[!htb]
    \centering
    \includegraphics[width=0.5\linewidth]{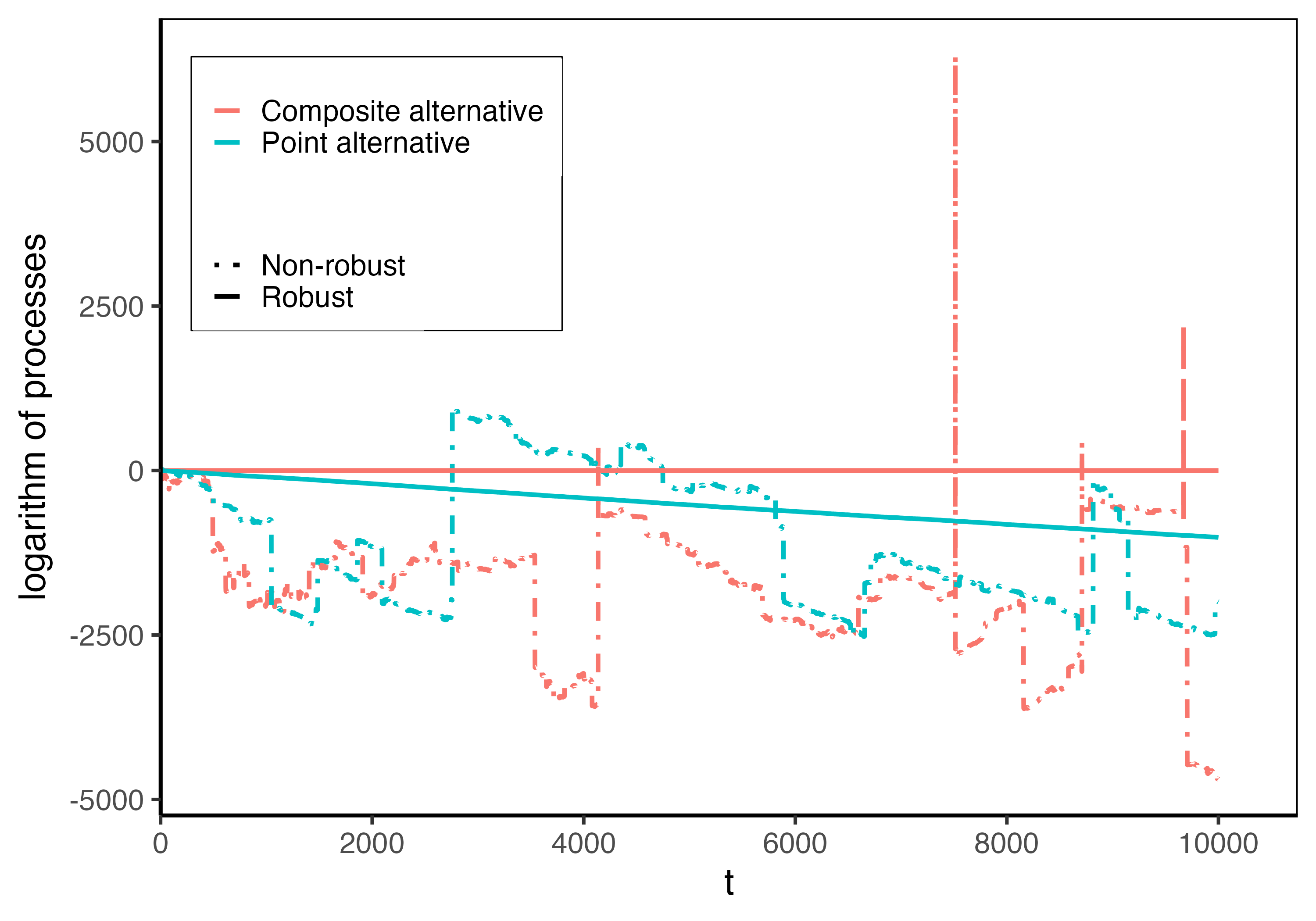}
    \caption{Data is drawn from $(1-\epsilon^R)\times N(0,1) + \epsilon^R \times\text{Cauchy}(-1,10)$. The null is $\mathcal P_0=\{N(\mu,1):-0.5\leq\mu\leq0.5\}$. Our robust tests are safe, but the non-robust tests exhibit unstable and unreliable behavior.}
    \label{fig:null-comp}
\end{figure}

\begin{figure*}[!htb]

\centering
\centering
\subfloat[ $P_1=N(1,1)$]{\includegraphics[width=0.49\linewidth,height=0.3\linewidth]{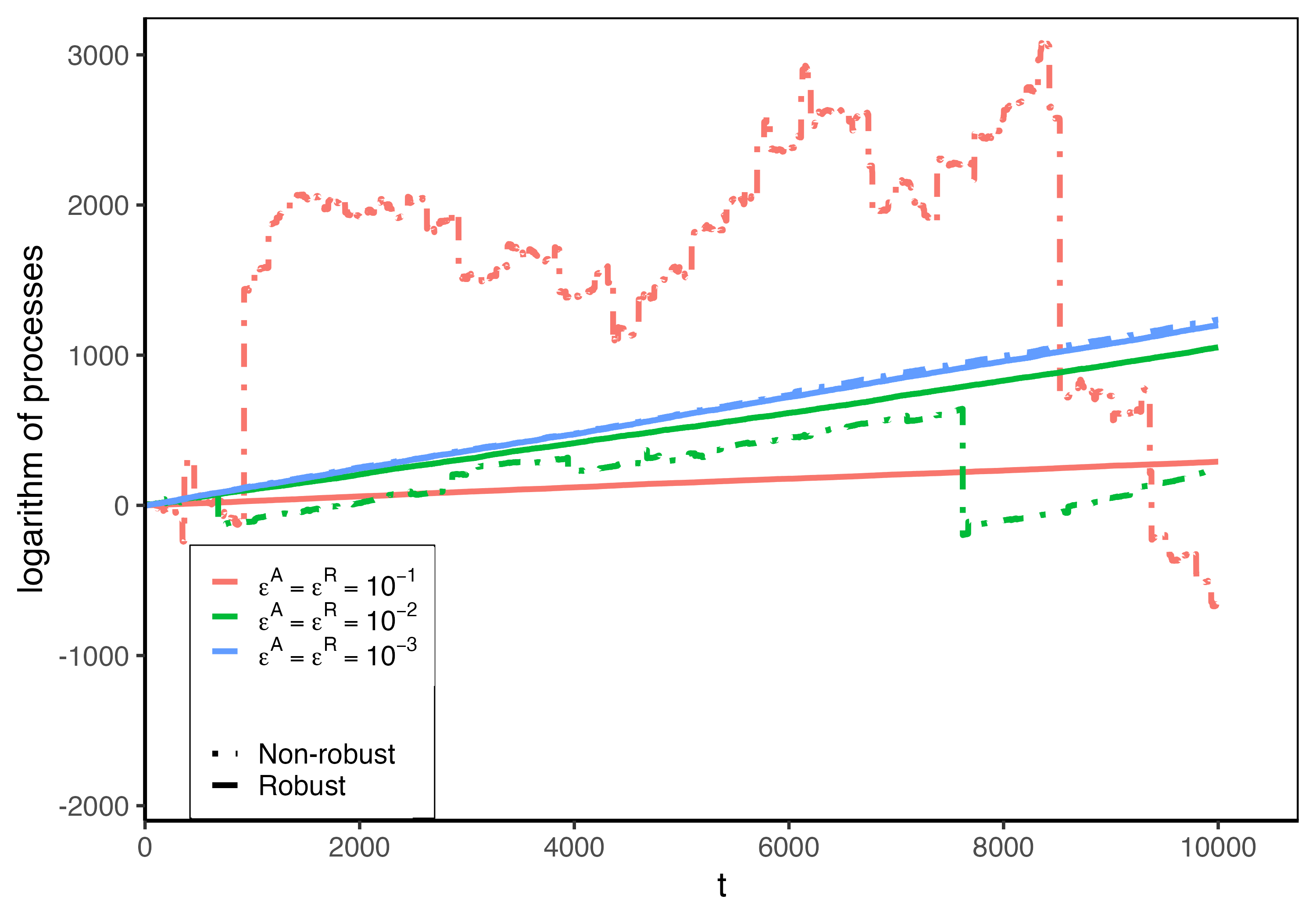}} 
\subfloat[$ \mathcal{P}_1=\{N(\mu,1): \mu\leq -0.5 \text{ or }\mu\geq 0.5\}$]{\includegraphics[width=0.49\linewidth,height=0.3\linewidth]{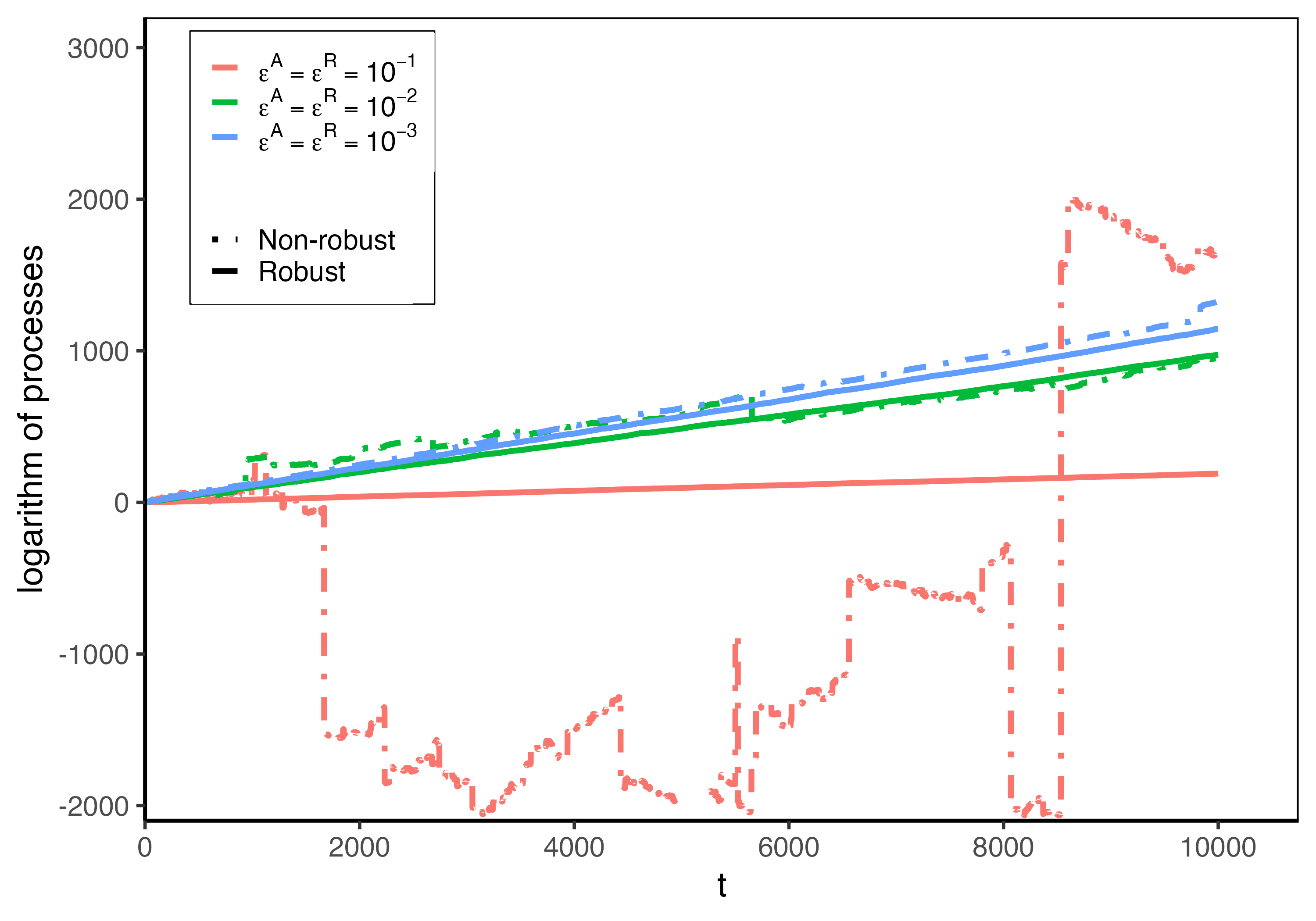}}
\caption[]{Data is drawn from $(1-\epsilon^R) \times N(1,1) + \epsilon^R \times\text{Cauchy}(-1,10)$ and $\epsilon^A=\epsilon^R=0.1,0.01,0.001$. The growth rate of our robust tests increases as $\epsilon$ decreases. As anticipated,  The growth rates for our robust tests based on simple and composite alternatives almost overlap. The growth rates for our robust tests based on simple and composite alternatives in the left and right subfigures look similar.} 
\label{fig:different-eps-comp}  
\end{figure*}

\paragraph{Growth rate with different contamination.}

In this experiment, samples are simulated independently from the mixture distribution $(1-\epsilon^R)\times N(1,1) + \epsilon^R \times\text{Cauchy}(-1,10)$ for $\epsilon^R=10^{-3},10^{-2},10^{-1}$.  \Cref{fig:different-eps-comp} shows the growths of processes in logarithmic scale for both simple and composite alternative models: $P_1=N(1,1)$ (left) and $\mathcal{P}_1=\{N(\mu,1): \mu\leq -0.5 \text{ or }\mu\geq 0.5\}$ (right). As expected, The growth rate of our robust tests increases as $\epsilon$ decreases. Notably, both simple and composite tests grow at similar rates (\Cref{fig:different-eps-comp}). It is also evident that non-robust tests exhibit highly erratic behavior, even when plotting the averages of $10$ independent runs.
\subsection{Experiment with multivariate data}

To evaluate the performance of our proposed robust likelihood ratio test in higher dimensions, we consider $P_0=N_d(\mathbf{0}_d,I_d)$ vs.\  $P_1=N_d(\mathbf{1}_d,I_d)$, where  $\mathbf{0}_d$ and $\mathbf{1}_d$ are $d$-dimentional vectors whose all elements are $0$ and $1$ respectively, $I_d$ is the $d\times d$ identity matrix. Data is drawn from $\epsilon^R$ contaminated $N_d(\mathbf{1}_d,I_d)$, where $\epsilon^R=0.1$ fraction of the data is corrupted with $d$ dimensional i.i.d. $\text{Cauchy}(-1,10)$ noise. \Cref{fig:multidim} shows the growths of our processes with $\epsilon^A=\epsilon^R$ (robust) and SPRT (non-robust) in
logarithmic scale for $d=5,10$ and $15$. Our test supermartingale exhibits stable exponential growth, whereas the non-robust SPRT demonstrates highly unstable behavior under contamination. We observe a higher growth rate of our test supermartingale for higher dimensions, which aligns with the fact that the KL divergence (optimal non-robust growth rate) between $N_d(\mathbf{0}_d,I_d)$ and $N_d(\mathbf{1}_d,I_d)$ increases with $d$.

\begin{figure}[!htb]
    \centering
    \includegraphics[width=0.5\linewidth]{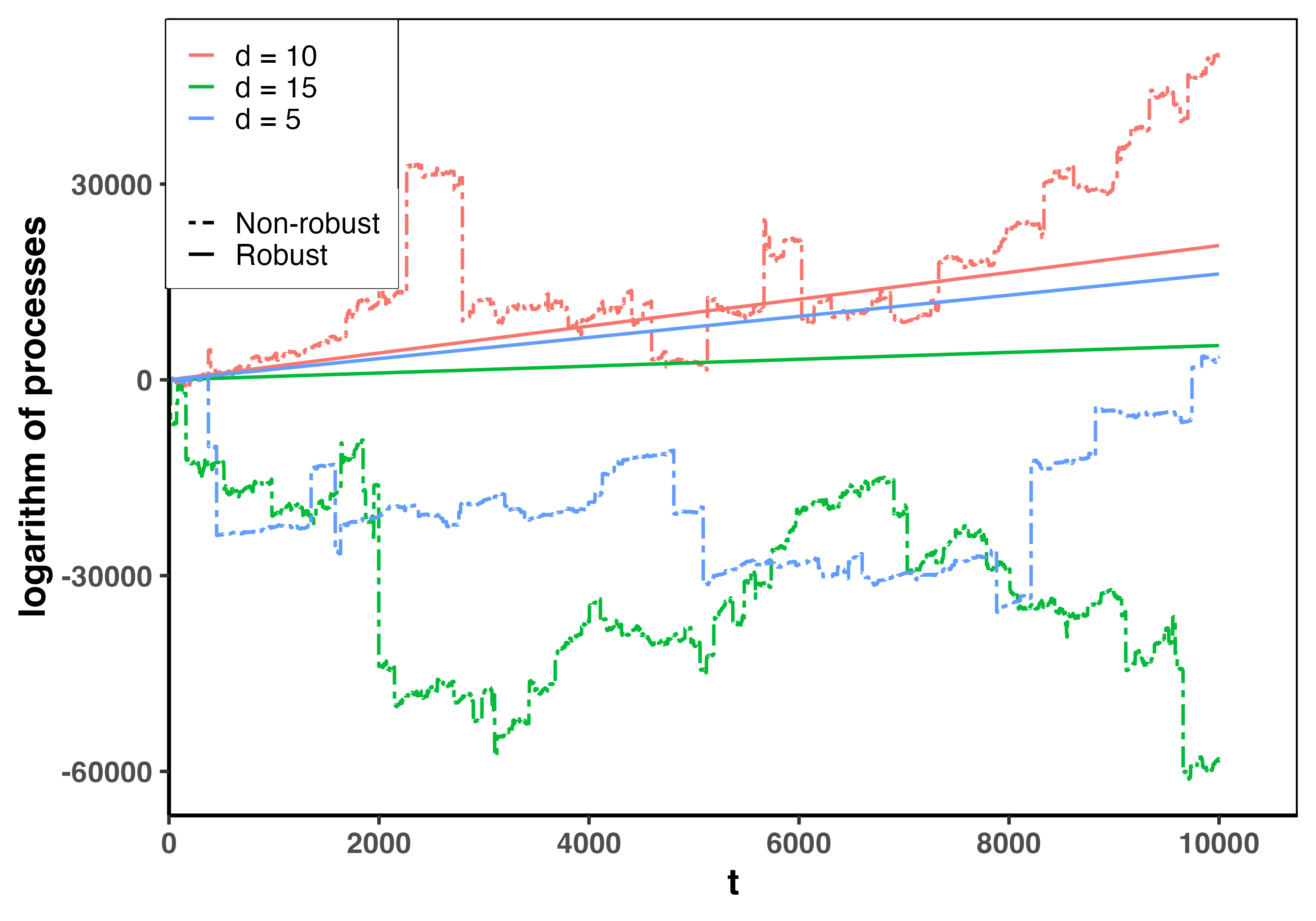}
    \caption{Data is drawn from $N_d(\mathbf{1}_d,I_d)$ contaminated with $d$ dimensional i.i.d. $\text{Cauchy}(-1,10)$. $\epsilon^R=\epsilon^A=0.1$.}
    \label{fig:multidim}
\end{figure}

\section{Conclusion}
\label{sec:conc}
In this paper, we established that the likelihood ratio of the least favorable distribution (LFD) pair forms the log-optimal e-value. Extending Huber's framework, we proved that if an LFD pair exists for a given composite null and alternative, then the LFDs of Huber’s $\epsilon$-contamination neighborhoods around that pair form the optimal LFD pair for the corresponding robustified hypotheses—thereby yielding log-optimal e-values in this broader setting as well.
For general composite nulls and alternatives where an exact LFD pair may not exist, we propose new methods and show that our proposed methods are at least asymptotically optimal as $\epsilon \to 0$. Establishing finite $\epsilon$ optimality for tractable testing procedures remains challenging in that setting. We leave this as an avenue for future research.

\subsection*{Acknowledgments}
A part of this work was done while AS was a student at the Indian Statistical Institute, Kolkata. The authors thank Hongjian Wang, Peter Grünwald, Shubhada Agrawal, and Sivaraman Balakrishnan for helpful conversations. The authors also thank the anonymous reviewers and AE for their helpful suggestions.

\appendix

\subsection{Mathematical details}
\label{a:proofs}
\begin{lemma}
\label{lem:c2-c1}
 As $\epsilon\downarrow 0$, $c^{\prime\prime}\uparrow\text{ess} \sup_{[\mu]} \frac{p_1}{p_0}$ and $c^{\prime}\downarrow\text{ess} \inf_{[\mu]} \frac{p_1}{p_0}$, and therefore, 
 \begin{equation*}
     q_{1,\epsilon}(X)/q_{0,\epsilon}(X)\to p_{1}(X)/p_{0}(X) \text{ almost surely.}
 \end{equation*}
\end{lemma}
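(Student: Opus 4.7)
My plan is to recast equations \eqref{eq:c2} and \eqref{eq:c1} as level-set equations for two monotone, continuous functions on $(0,\infty)$, and then read off their asymptotics. Using the change-of-measure identity $P_1[A]=\mathbb E_{P_0}[(p_1/p_0)\mathbf{1}_A]$, a direct rearrangement shows that \eqref{eq:c2} and \eqref{eq:c1} are equivalent to $F(c'')=1/(1-\epsilon)$ and $G(c')=1/(1-\epsilon)$ respectively, where
\begin{equation*}
F(c):=\mathbb E_{P_0}\!\left[\max\!\left(1,\tfrac{p_1(X)/p_0(X)}{c}\right)\right],\qquad
G(c):=\mathbb E_{P_0}\!\left[\max\!\left(\tfrac{p_1(X)}{p_0(X)},\,c\right)\right].
\end{equation*}
The integrands are pointwise dominated by $1+p_1/p_0$, which has $P_0$-expectation $2$, so dominated convergence gives continuity of $F$ and $G$ on $(0,\infty)$. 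Pointwise in $x$, the integrand of $F$ is non-increasing in $c$ while that of $G$ is non-decreasing; hence $F$ is non-increasing and $G$ is non-decreasing, and both are bounded below by $1$.

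Write $M:=\operatorname{ess\,sup}_{[\mu]}(p_1/p_0)$ and $m:=\operatorname{ess\,inf}_{[\mu]}(p_1/p_0)$. The next step is to show $F(c)\downarrow 1$ as $c\uparrow M$ and $G(c)\downarrow 1$ as $c\downarrow m$. For $F$, the bound $p_1/p_0\le M$ holds $P_0$-a.s.\ (since $P_0\ll\mu$), so $\max(1,(p_1/p_0)/c)\to 1$ pointwise as $c\uparrow M$; when $M<\infty$ this is monotone, and when $M=\infty$ we apply dominated convergence with envelope $1+p_1/p_0$ on $c\ge 1$. The argument for $G$ is symmetric, using $p_1/p_0\ge m$ $P_0$-a.s. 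Since $1/(1-\epsilon)\downarrow 1^+$ as $\epsilon\downarrow 0$, the monotonicity and continuity of $F,G$ force any root $c''(\epsilon)$ of $F(c)=1/(1-\epsilon)$ to satisfy $c''(\epsilon)\uparrow M$, and any root $c'(\epsilon)$ of $G(c)=1/(1-\epsilon)$ to satisfy $c'(\epsilon)\downarrow m$.

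For the almost sure convergence of the ratio, I invoke the explicit formula noted in the excerpt: whenever $c'<c''$,
\begin{equation*}
\frac{q_{1,\epsilon}(x)}{q_{0,\epsilon}(x)}=\max\{c',\,\min\{c'',\,p_1(x)/p_0(x)\}\}.
\end{equation*}
For $\mu$-a.e.\ $x$, $p_1(x)/p_0(x)\in[m,M]$. If $p_1(x)/p_0(x)\in(m,M)$, then for all sufficiently small $\epsilon$ we have $c'(\epsilon)<p_1(x)/p_0(x)<c''(\epsilon)$, so the truncation is inactive and the ratio equals $p_1(x)/p_0(x)$. On the boundary $p_1(x)/p_0(x)=m$ (resp.\ $M$), the ratio equals $c'(\epsilon)$ (resp.\ $c''(\epsilon)$), which still converges to $p_1(x)/p_0(x)$ by the first half.

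The main subtlety, as I see it, lies at the edge cases: atoms of $p_1/p_0$ at $m$ or $M$, and the case $M=\infty$. At an atom of $p_1/p_0$, one must carefully split the indicator $\mathbf{1}[p_1/p_0\ge c]$ across the atom value to verify that $F$ (and analogously $G$) is actually continuous there and still attains limit $1$ on the appropriate side; when $M=\infty$, the pointwise collapse of the integrand to $1$ needs the integrability $\mathbb E_{P_0}[p_1/p_0]=1$ to license dominated convergence. Once these cases are dispatched uniformly, the convergences $c''\uparrow M$, $c'\downarrow m$, and $q_{1,\epsilon}/q_{0,\epsilon}\to p_1/p_0$ almost surely all follow directly.
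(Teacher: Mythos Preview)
Your argument is essentially the paper's: both identify $c''$ and $c'$ as level sets of the same monotone continuous function (your $F$ is exactly the paper's $f$, and your $G$ its analogue for $c'$), show these functions tend to $1$ at the essential sup/inf of $p_1/p_0$, and conclude that the roots of $F(c)=G(c)=1/(1-\epsilon)$ converge accordingly; the paper establishes continuity and strict monotonicity by explicitly bounding $f(c+\delta)-f(c)$, whereas your $\max$-representation gets the same facts more directly from pointwise monotonicity and dominated convergence. Two small caveats worth tightening: the change-of-measure identity $P_1[A]=\mathbb E_{P_0}[(p_1/p_0)\mathbf 1_A]$ you invoke requires $P_1\ll P_0$, without which your $F,G$ differ from the defining equations by terms involving $P_1[p_0=0]$ (the paper's direct formulation in terms of $P_0$ and $P_1$ avoids this); and the envelope $1+p_1/p_0$ dominates the integrand of $F$ only for $c\ge 1$ (and of $G$ only for $c\le 1$), so for continuity at a general $c_0>0$ you should use a local envelope such as $1+(p_1/p_0)/\min(c_0,1)$.
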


\begin{proof}[Proof of \cref{lem:c2-c1}]
Define, 
\begin{equation}
    f(c)=P_0\left[p_1 / p_0<c\right]+\frac{1}{c} P_1\left[p_1 / p_0 \geq c\right]=1+\int_{p_1/p_0\geq c}(1/c-p_0/p_1)p_1d\mu.
\end{equation}
Note that $c=c^{\prime \prime}$ is a solution of the equation $f(c)=\frac{1}{1-\epsilon}.$
\begin{equation}
    f(c+\delta)-f(c)=-\int_{c \leq p_1/p_0\leq c+\delta}\left(\frac{1}{c}-\frac{p_0}{p_1}\right)p_1d\mu-\frac{\delta}{c(c+\delta)}\int_{p_1/p_0\geq c+\delta}p_1d\mu
\end{equation}
Therefore, $-\frac{\delta}{c(c+\delta)}\leq f(c+\delta)-f(c) \leq -\frac{\delta}{c(c+\delta)}P_1[p_1/p_0\geq c+\delta]$ for any $\delta>0$, which implies that $f$ is a continuous and decreasing function.

Let, $c_0=\text{ess} \sup_{[\mu]} \frac{p_1}{p_0}$. If $c_0<\infty$, we have $f(c)=1,$ for $c\ge c_0$ and for $c<c_0$, $f(c)$ is strictly decreasing because $f(c+\delta)-f(c) \leq -\frac{\delta}{c(c+\delta)}P_1[p_1/p_0\geq c+\delta]<0$, for small $\delta>0$. Therefore, the inverse of $f$ restricted on $[0,c_0]$ (denoted as $f^{-1}_{[0,c_0]}$) is a strictly decreasing and continuous function on the interval $[0,1]$ and hence, $c^{\prime \prime}=f^{-1}_{[0,c_0]}(\frac{1}{1-\epsilon})\to f^{-1}_{[0,c_0]}(1)=c_0,$ as $\epsilon\to 0$.

Now, if $c_0=\infty$, $f(c+\delta)-f(c) \leq -\frac{\delta}{c(c+\delta)}P_1[p_1/p_0\geq c+\delta]<0$, for all $c$ and hence $f$ is strictly decreasing with $\lim_{c\to\infty}f(c)=1$. 
Therefore, the inverse of $f$ is a strictly decreasing and continuous function on the interval $[0,1)$ and hence, $c^{\prime \prime}=f^{-1}_{[0,c_0]}(\frac{1}{1-\epsilon})\to \infty=c_0,$ as $\epsilon\to 0$.

Thus, we have shown $c^{\prime\prime}\uparrow\text{ess} \sup_{[\mu]} \frac{p_1}{p_0}$, as $\epsilon\to 0.$ Similarly, one can show that $c^{\prime}\downarrow\text{ess} \inf_{[\mu]} \frac{p_1}{p_0}$, as $\epsilon\to 0.$

Therefore, $\frac{q_{1,\epsilon}(X)}{q_{0,\epsilon}(X)}=\frac{p_1(X)}{p_0(X)}\mathds{1}(c^{\prime}\leq \frac{p_1(X)}{p_0(X)}\leq c^{\prime\prime})+c^{\prime}\mathds{1}( \frac{p_1(X)}{p_0(X)}< c^{\prime})+c^{\prime\prime}\mathds{1}( \frac{p_1(X)}{p_0(X)}> c^{\prime\prime})\to \frac{p_1(X)}{p_0(X)},$ $\mu$-almost surely as $\epsilon\to 0.$
\end{proof}

\begin{lemma}
\label{lem:c2eps}
Suppose that $\kl(P_1,P_0)<\infty$. Then, $(c^{\prime\prime}-c^{\prime})\epsilon \to 0,$ as $\epsilon\to 0.$
\end{lemma}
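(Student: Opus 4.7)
My plan is to manipulate the defining equation \eqref{eq:c2} for $c''$ to isolate $c''\epsilon$, and then bound it by a quantity whose decay as $\epsilon \to 0$ is controlled by the finiteness of $\kl(P_1,P_0)$. Write $L = p_1/p_0$ for brevity. The case $\operatorname{ess\,sup} L < \infty$ is immediate, since then $c''$ is bounded by Lemma \ref{lem:c2-c1}, so $c''\epsilon \to 0$ trivially. So the substantive case is $\operatorname{ess\,sup} L = \infty$, where Lemma \ref{lem:c2-c1} gives $c'' \to \infty$ as $\epsilon \to 0$.

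Rewriting \eqref{eq:c2} using $P_0[L < c''] = 1 - P_0[L \geq c'']$ and dividing by $(1-\epsilon)$, I would get
\begin{equation*}
\frac{\epsilon}{1-\epsilon} \;=\; (c'')^{-1} P_1[L \geq c''] \;-\; P_0[L \geq c''].
\end{equation*}
Multiplying by $c''$ yields
\begin{equation*}
c''\,\frac{\epsilon}{1-\epsilon} \;=\; P_1[L \geq c''] \;-\; c''\, P_0[L \geq c''] \;\leq\; P_1[L \geq c''],
\end{equation*}
where the last inequality uses the pointwise bound $c'' \mathbf{1}\{L \geq c''\} \leq L\,\mathbf{1}\{L \geq c''\}$ integrated against $P_0$, which gives $c''\, P_0[L \geq c''] \leq \int L\,\mathbf{1}\{L \geq c''\}\, dP_0 = P_1[L \geq c'']$ (after noting $\kl(P_1,P_0)<\infty$ forces $P_1 \ll P_0$).

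It then suffices to show $P_1[L \geq c''] \to 0$ as $c'' \to \infty$, i.e.\ that $L < \infty$ holds $P_1$-almost surely. This is exactly where the hypothesis $\kl(P_1,P_0) < \infty$ enters: finiteness of $\int \log L\, dP_1$ forces $\log L < \infty$ $P_1$-a.s., hence $L < \infty$ $P_1$-a.s., so $P_1[L \geq c''] \downarrow P_1[L = \infty] = 0$ as $c'' \to \infty$. Combining with the display above and $c'' \to \infty$ (Lemma \ref{lem:c2-c1}) gives $c''\epsilon \to 0$.

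I do not anticipate a serious obstacle; the only subtle step is the algebraic rearrangement of \eqref{eq:c2} into a form where the dominating inequality $c'' P_0[L \geq c''] \leq P_1[L \geq c'']$ becomes visible. The role of $\kl(P_1,P_0) < \infty$ is genuinely needed only to rule out a $P_1$-atom at $L = \infty$; without it, $P_1[L \geq c'']$ could stay bounded away from $0$ and the conclusion would fail.
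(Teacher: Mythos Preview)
Your proof is correct and follows essentially the same route as the paper: both manipulate \eqref{eq:c2} to obtain $c''\epsilon \leq (1-\epsilon)\,P_1[L \geq c'']$ and then argue the right-hand side vanishes as $c'' \to \infty$ using $\kl(P_1,P_0) < \infty$ (the paper via Markov's inequality on $|\log L|$, you via the simpler fact that $L < \infty$ $P_1$-a.s.). One cosmetic note: the inequality $P_1[L \geq c''] - c'' P_0[L \geq c''] \leq P_1[L \geq c'']$ requires only $c'' P_0[L \geq c''] \geq 0$, so the change-of-measure bound you invoke there is unnecessary.
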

\begin{proof}[Proof of \cref{lem:c2eps}]
Let, $c_0=\text{ess} \sup_{[\mu]} \frac{p_1}{p_0}$. If $c_0<\infty$, it follows from \Cref{lem:c2-c1} that $c^{\prime\prime}\leq c_0$ and so $c^{\prime\prime}\epsilon\to 0$, as $\epsilon\to 0$.

Now, if $c_0=\infty$, $c^{\prime\prime}\to \infty$ as $\epsilon\to 0.$

From \eqref{eq:c2}, $1+\frac{1}{c^{\prime \prime}}P_1\left[p_1 / p_0 \geq c^{\prime \prime}\right]\geq \frac{1}{1-\epsilon}$, which implies 
\begin{equation}
\label{eq:c2eps}
    c^{\prime \prime}\epsilon\leq (1-\epsilon) P_1\left[p_1 / p_0 \geq c^{\prime \prime}\right]
\end{equation}

If $\kl(P_1,P_0)<\infty$, we have $\mathbb E_{P_1}\left|\log(p_1/p_0)\right|<\infty.$ Then,
\begin{align*}
    P_1\left[p_1 / p_0 \geq c^{\prime \prime}\right]= P_1\left[\log (p_1 / p_0) \geq \log c^{\prime \prime}\right]&\le P_1\left[ |\log (p_1 / p_0)| \geq \log c^{\prime \prime}\right]\\
    &\leq \frac{\mathbb E_{P_1}  |\log (p_1 / p_0)|}{\log c^{\prime \prime}}\to 0,
\end{align*}

 as $c^{\prime \prime}\to \infty$. Hence, $c^{\prime\prime}\epsilon \to 0,$ since $ c^{\prime \prime}\to \infty$, as $\epsilon\to 0$ for the case when $c_0=\infty$.
\end{proof}

\begin{lemma}
\label{lem:c-conv}
Assume that $P_i({p}_1/p_0 = c)=0$, for all $c\in \mathbb R$ and for $i=0,1$. If $\hat{p}_{1,n}\to p_1$  almost surely as $n\to \infty$ and $c^{\prime}_n,c^{\prime \prime}_n$ are solutions of \eqref{eq:c2-comp-alt} and \eqref{eq:c1-comp-alt} respectively, then $c^{\prime \prime}_n\to c^{\prime \prime}$ and $c^{\prime}_n\to c^{\prime}$ almost surely as $n\to \infty$, where $c^{\prime}$ and $c^{\prime \prime}$ are solutions of \eqref{eq:c2} and \eqref{eq:c1}.
\end{lemma}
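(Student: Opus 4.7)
The plan is to recast the defining equations as fixed-point conditions $f_n(c''_n) = f(c'') = 1/(1-\epsilon)$ for the maps
\[
f_n(c) := P_0[\hat p_n/p_0 < c] + c^{-1}\,\hat P_n[\hat p_n/p_0 \ge c], \qquad f(c) := P_0[p_1/p_0 < c] + c^{-1}\,P_1[p_1/p_0 \ge c],
\]
and then pass to the limit using pointwise convergence $f_n \to f$ together with the monotonicity already established in the proof of \cref{lem:c2-c1}. Throughout, I would work on the probability-one event on which $\hat p_n \to p_1$ $\mu$-almost everywhere.

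First I would prove $f_n(c) \to f(c)$ for each fixed $c > 0$. The first summand converges to $P_0[p_1/p_0 < c]$ by dominated convergence: the indicator $\mathbf 1(\hat p_n/p_0 < c)$ converges pointwise to $\mathbf 1(p_1/p_0 < c)$ off the level set $\{p_1/p_0 = c\}$, which is $P_0$-null by hypothesis. For the second summand I split
\[
\hat P_n[\hat p_n/p_0 \ge c] - P_1[p_1/p_0 \ge c] = \int \mathbf 1(\hat p_n/p_0 \ge c)(\hat p_n - p_1)\,d\mu + \int \bigl(\mathbf 1(\hat p_n/p_0 \ge c) - \mathbf 1(p_1/p_0 \ge c)\bigr)\,p_1\,d\mu.
\]
The first integral is bounded by $\|\hat p_n - p_1\|_{L^1(\mu)}$, which tends to $0$ by Scheff\'e's lemma (since $\hat p_n \to p_1$ $\mu$-a.e.\ and both are probability densities). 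The second tends to $0$ by DCT, using $P_1(p_1/p_0 = c) = 0$.

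Next, the argument from the proof of \cref{lem:c2-c1} shows that $f_n$ and $f$ are continuous and nonincreasing, and that $f$ is strictly decreasing at $c''$ because $P_1[p_1/p_0 \ge c''+\delta] > 0$ for all small $\delta$ (as $c'' < \text{ess}\sup(p_1/p_0)$). From \eqref{eq:c2-comp-alt} one derives $(1-\epsilon)(1 + (c''_n)^{-1}) \ge 1$, hence $c''_n \le 1/\epsilon - 1$, so $\{c''_n\}$ is precompact. For any subsequential limit $c''_{n_k} \to c^*$ and any $\delta > 0$, monotonicity of $f_{n_k}$ gives $f_{n_k}(c^* + \delta) \le 1/(1-\epsilon) \le f_{n_k}(c^* - \delta)$ for $k$ large; passing $k \to \infty$ via the previous paragraph and then $\delta \downarrow 0$ via continuity of $f$ yields $f(c^*) = 1/(1-\epsilon) = f(c'')$, and strict monotonicity of $f$ at $c''$ forces $c^* = c''$. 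Hence $c''_n \to c''$ almost surely. The convergence $c'_n \to c'$ follows from the identical argument applied to the obvious analogue of $f_n$ and $f$, using the a priori bound $c'_n \ge \epsilon/(1-\epsilon)$ for precompactness.

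The main obstacle is the pointwise-convergence step, because the second summand of $f_n$ depends on $n$ both through the probability measure $\hat P_n$ and through the event $\{\hat p_n/p_0 \ge c\}$; the decomposition above decouples these two sources of variation, with Scheff\'e's lemma handling the first and the level-set hypothesis $P_i(p_1/p_0 = c) = 0$ making the indicator converge $P_1$-almost everywhere so that DCT applies to the second.
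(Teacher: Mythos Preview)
Your proof is correct and follows the same strategy as the paper's: both establish pointwise convergence $f_n(c)\to f(c)$ by combining Scheff\'e's lemma (to handle the changing measure $\hat P_n$) with the level-set hypothesis $P_i(p_1/p_0=c)=0$ (to make the indicators converge), and then use the monotonicity of $f_n,f$ from \cref{lem:c2-c1} to deduce $c''_n\to c''$. The only distinction is in the concluding step---the paper argues that pointwise convergence of strictly monotone continuous functions upgrades to uniform convergence and then passes to inverses, whereas your a priori bound plus subsequence/sandwich argument is more direct and self-contained.
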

\begin{proof}
Define, $A_n=\{x:\hat{p}_{1,n}(x)/ p_0(x)>c\}$ and $A=\{x:{p}_{1}(x)/ p_0(x)>c\}$.

For \( x \in A \) (where \( {p}_1/p_0 > c \)): There exists an \( N \) such that for all \( n \geq N \), \( \hat{p}_{1,n}/p_0 > c \). Hence, \( x \in A_n \) for  \( n\geq N \). This implies that:
    \[
    A \subseteq \liminf_{n \to \infty} A_n.
    \]
 For \( x \notin A \) (where \( {p}_1/p_0 \leq c \)):
    \begin{itemize}
        \item If \( {p}_1/p_0 < c \), then for sufficiently large \( n \), \( \hat{p}_{1,n} < c \), and hence \( x \notin A_n \).
        \item If \( {p}_1/p_0 = c \), then the set of such points forms the boundary. By assumption, this set has zero probability.
    \end{itemize}
    Thus:
    \[
    P_1(\limsup_{n \to \infty} A_n) \leq P_1(A) \leq P_1(\liminf_{n \to \infty} A_n).
    \]

Consider:
\[
\hat{p}_{1,n}(A_n) = \int_{A_n} \hat{p}_{1,n} \, d\mu.
\]

Since \( \hat{p}_{1,n} \to {p}_1 \) pointwise, Scheffe's theorem gives $\int|\hat{p}_{1,n}-p_1|d\mu\to 0.$
\[
 \int_{A_n} \hat{p}_{1,n} \, d\mu \leq \int_{A_n} {p}_1 \, d\mu + \int_{A_n} |\hat{p}_{1,n}-p_1|\,d\mu \leq \int_{A_n} {p}_1 \, d\mu + \int|\hat{p}_{1,n}-p_1|\,d\mu.
\]
Therefore:
\[
\limsup_{n \to \infty} \hat{p}_{1,n}(A_n) \leq \limsup_{n \to \infty} \int_{A_n} {p}_1 \, d\mu = \limsup_{n \to \infty} P_1(A_n) \leq P_1(\limsup_{n \to \infty} A_n)\leq P_1(A).
\]
Similarly.
\[
 \int_{A_n} \hat{p}_{1,n} \, d\mu \geq \int_{A_n} {p}_1 \, d\mu - \int_{A_n} |\hat{p}_{1,n}-p_1|\,d\mu \geq \int_{A_n} {p}_1 \, d\mu - \int|\hat{p}_{1,n}-p_1|\,d\mu.
\]
Therefore:
\[
\liminf_{n \to \infty} \hat{p}_{1,n}(A_n) \geq \liminf_{n \to \infty} \int_{A_n} {p}_1 \, d\mu = \liminf_{n \to \infty} P_1(A_n) \geq P_1(\liminf_{n \to \infty} A_n)\geq P_1(A).
\]
Combining the upper and lower bounds, we conclude:
\[
\lim_{n \to \infty} \hat{p}_{1,n}[\hat{p}_{1,n}/ p_0>c] = {P}_1[{p}_{1}/ p_0>c].
\]
Similarly, one can show that
\[
\lim_{n \to \infty} \hat{p}_{1,n}[\hat{p}_{1,n}/ p_0<c] = {P}_1[{p}_{1}/ p_0<c].
\]
Define, $$f_n(c)=P_0\left[\hat{p}_{1,n} / p_0<c\right]+\frac{1}{c} \hat{P}_{1,n}\left[\hat{p}_{1,n} / p_0 \geq c\right] ,\quad  f(c)=P_0\left[p_1 / p_0<c\right]+\frac{1}{c} P_1\left[p_1 / p_0 \geq c\right]$$
$$g_n(c)=\hat{P}_{1,n}\left[\hat{p}_{1,n}/ p_0>c\right]+c P_0\left[\hat{p}_{1,n} / p_0 \leq c\right],\quad
g(c)={P}_1\left[{p}_1/ p_0>c\right]+cP_0\left[{p}_1 / p_0 \leq c\right].$$ 
Then, it follows from what we have shown above that $f_n\to f$ and $g_n\to g$ pointwise.
In \cref{lem:c2-c1}, we have shown that $f_n,g_n,f,g$ are all strictly monotone and continuous. Therefore, pointwise convergence implies uniform convergence, and hence,
$c^{\prime \prime}_n=f_n^{-1}(\frac{1}{1-\epsilon})\to f^{-1}(\frac{1}{1-\epsilon})=c^{\prime \prime}$ and $c^{\prime}_n=g_n^{-1}(\frac{1}{1-\epsilon})\to g^{-1}(\frac{1}{1-\epsilon})=c^{\prime}$.
\end{proof}

\begin{proof}[Proof of \Cref{thm:gr-lb-comp-alt}]
We have $\hat{p}_{1,n}\to p_1^H$. It follows from \cref{lem:c-conv} that $c^{\prime \prime}_n\to c^{\prime \prime}_H, c^{\prime}_n\to c^{\prime}_H.$ Therefore, for any fixed $\epsilon$, $\hat{q}_{n,1,\epsilon}/\hat q_{n,0,\epsilon}\to{q}_{1,\epsilon}^H/q_{0,\epsilon}^H$ almost surely as $n\to\infty$ and by bounded convergence theorem, $\mathbb E_{X\mid X^{n-1}\sim P_0}\left[\frac{\hat{q}_{n,1,\epsilon}(X)}{\hat{q}_{n,0,\epsilon}(X)}\mid X^{n-1}\right]\to\mathbb E_{X\mid X^{n-1}\sim P_0}\left[\frac{{q}^H_{1,\epsilon}(X)}{q^H_{0,\epsilon}(X)}\mid X^{n-1}\right]$ as $n\to\infty$ and
that would immediately imply $\log \hat E_{\epsilon,n}(X_n)-\log E_\epsilon^H(X_n)\to 0$ almost surely as $n\to\infty$, where
\begin{equation}
    E_\epsilon^H(x)=\frac{\frac{{q}^H_{1,\epsilon}(x)}{q^H_{0,\epsilon}(x)}}{\mathbb E_{X\mid X^{n-1}\sim P_0}\left[\frac{{q}^H_{1,\epsilon}(X)}{q^H_{0,\epsilon}(X)}\mid X^{n-1}\right]+(c^{\prime \prime}_H-c^{\prime}_H)\epsilon}.
\end{equation}
Therefore,
\begin{align*}
 \frac{1}{n}\log R_{n,\epsilon}^{\text{plug-in}}=\frac{1}{n}\sum_{i=1}^n\left(\log \hat E_{\epsilon,i}(X_i)-\log E^H_\epsilon(X_i) \right)+\frac{1}{n}\sum_{i=1}^n \log E_\epsilon^H(X_i)\to \mathbb E_H \log E^H_\epsilon(X),
\end{align*}
almost surely, since the first term converges to $0$ (since $\log \hat E_{\epsilon,n}(X_n)-\log E_\epsilon^H(X_n)\to 0$ and hence their average would converge to $0$) and the second term converges to $r_{H,\epsilon}^{\text{plug-in}}=\mathbb E_H \log E^H_\epsilon(X)$, by SLLN.

Now, one can easily show that  $$r_{H,\epsilon}^{\text{plug-in}}=\mathbb E_H \log E^H_\epsilon(X) \geq \kl(Q^H_{1,\epsilon},Q^H_{0,\epsilon})-2(\log c^{\prime \prime}_H-\log c^{\prime}_H)\epsilon-\log (1+2(c^{\prime \prime}_H-c^{\prime}_H)\epsilon).$$
\end{proof}

\begin{proposition}
\label{lem:limit-comp-null}
Assume that \eqref{eq:sup-ripr} holds and $\inf_{P\in\mathcal{P}_0}\kl(P_1,P)<\infty$. Then, $B_{\epsilon}(x)\to B^*(x)$ $\mu$-almost surely as $\epsilon\to0$. In other words, for any fixed $n\in \mathbb N$, $R_{n,\epsilon}^{\text{RIPr}}\to\prod_{i=1}^n p_1(X_i)/p_0(X_i)$ $\mu$-almost surely as $\epsilon\to0.$
\end{proposition}

\begin{proof}

It is easy to verify that \cref{lem:c2-c1} and \cref{lem:c2eps} holds when $P_0$ is sub-probability distribution as well. Now,
\cref{lem:c2-c1} implies $\frac{q_{1,\epsilon}(X)}{q_{0,\epsilon}(X)}\to \frac{p_{1}(X)}{p_{0}(X)}$ almost surely, as $\epsilon\to 0$ and therefore, $\mathbb E_P\frac{q_{1,\epsilon}(X)}{q_{0,\epsilon}(X)}\to \mathbb E_P\frac{p_{1}(X)}{p_{0}(X)}$, as $\epsilon\to 0$. Since $0\leq c^{\prime}<c^{\prime\prime}$, \cref{lem:c2eps} implies $(c^{\prime\prime}-c^{\prime})\epsilon \to 0,$ since we assumed $\kl(P_1, P_0) < \infty$.

Since $\frac{p_{1}(X)}{p_{0}(X)}$ is an e-variable for $\mathcal{P}_0$ \cite{larsson2024numeraire}, we have  $\mathbb E_P\frac{p_{1}(X)}{p_{0}(X)}\leq 1$, for all $P\in\mathcal{P}_0$, i.e., $\sup_{P\in\mathcal{P}_0}\mathbb E_P\frac{p_{1}(X)}{p_{0}(X)}\leq1.$

To show the reverse inequality, define, $B^\prime (x):=\frac{\frac{p_{1}(X)}{p_{0}(X)}}{\sup_{P\in\mathcal{P}_0}\mathbb E_P\frac{p_{1}(X)}{p_{0}(X)}}$. Then it is clear that $\mathbb E_{P}B^\prime (X)\leq 1$ for all $P\in\mathcal{P}_0$. Since $\frac{p_{1}(X)}{p_{0}(X)}$ is log-optimal, we have $\mathbb E_{P}\log B^\prime (X)\leq \mathbb E_{P}\log \frac{p_{1}(X)}{p_{0}(X)}$, which implies $\sup_{P\in\mathcal{P}_0}\mathbb E_P\frac{p_{1}(X)}{p_{0}(X)}\geq 1$.

Combining the above two arguments, we obtain $\sup_{P\in\mathcal{P}_0}\mathbb E_P\frac{p_{1}(X)}{p_{0}(X)}=1.$

Note that 
\begin{align*}
    & \sup_{P\in \mathcal{P}_0}\mathbb E_{ P}\frac{{q}_{1,\epsilon}(X)}{q_{0,\epsilon}(X)}=\mathbb E_{ P_0}\frac{{q}_{1,\epsilon}(X)}{q_{0,\epsilon}(X)}\to \mathbb E_{ P_0}\frac{p_{1}(X)}{p_{0}(X)}=1,
\end{align*}
as $\epsilon\to 0$. Therefore,
\begin{equation}
\label{eq:b}
    \sup_{P\in \mathcal{P}_0}\mathbb E_{ P}\left[\frac{{q}_{1,\epsilon}(X)}{q_{0,\epsilon}(X)}\right]\to 1, \text{ as } \epsilon\to 0
\end{equation} 

Thus, we obtain $B_{\epsilon}(X)\to B^*(X)$ almost surely as $\epsilon\to0$ and for any $n\in \mathbb N$,
\begin{equation}
    R_{n,\epsilon}^{\text{RIPr}}=\prod_{i=1}^n B_{\epsilon}(X_i)\stackrel{a.s}{\longrightarrow}\prod_{i=1}^n B^*(X_i)=\prod_{i=1}^n p_1(X_i)/p_0(X_i)  \text{ as } \epsilon\to 0.
\end{equation}
\end{proof}

\begin{proof}[Proof of \Cref{prop:sup-ripr-exp}]
 We prove it for $\theta_0^*=\sup\Theta_0\leq\theta_1$ only, and the other case, when $\inf\Theta_0\geq\theta_1$ can be proved in the same way. 
 The first part that the RIPr is $P_{\theta_0^*}$ follows from the proof of \Cref{prop:one-param-exp-fam}.
The exponential family $\{P_\theta\}$ with one-dimensional sufficient statistic $T(x)$ satisfies the monotone likelihood ratio (MLR) property in  $T(x)$, and since $\frac{p_{\theta_1}(x)}{p_{\theta_0^*}(x)}$ is non-decreasing in $T(x)$, it follows that  $\frac{q_{1,\epsilon}(X)}{q_{0,\epsilon}(X)}=\max\{c^\prime,\min\{p_{\theta_0^*}(X)/p_{\theta_1}(X),c^{\prime\prime}\}\}$ is also non-decreasing in $T(x)$. It follows from properties of MLR families that  $T(X)$ is stochastically increasing in $\theta$. That is, for $X \sim P_\theta$ and $X' \sim P_{\theta^*}$ with $\theta\leq \theta^*$, we have $P(T(X)>x)\leq P(T(X')>x)$ for all $x$, which is denoted as $T(X) \preceq_{\text{st}} T(X')$ (see e.g., \cite[Chapter 3.4]{lehmann1986testing}) and hence,
\[
\frac{q_{1,\epsilon}(X)}{q_{0,\epsilon}(X)} \preceq_{\text{st}} \frac{q_{1,\epsilon}(X')}{q_{0,\epsilon}(X')}.
\]
Therefore, it immediately follows that
\[
\mathbb E_{P_\theta}\left(\frac{q_{1,\epsilon}(X)}{q_{0,\epsilon}(X)}\right) \leq\mathbb E_{P_{\theta^*}}\left(\frac{q_{1,\epsilon}(X)}{q_{0,\epsilon}(X)}\right), \forall \theta\in\Theta_0.
\]
Thus, \eqref{eq:sup-ripr} holds.
\end{proof}

\begin{proof}[Proof of \Cref{thm:gr-lb-comp-null}]
By SLLN,
\begin{equation}
 \frac{\log R_{n,\epsilon}^{\text{RIPr}}}{n}\to r^{Q,\epsilon}_{\text{RIPr}} \text{ almost surely, }
\end{equation}
where $r^{Q,\epsilon}_{\text{RIPr}}=\mathbb E_{Q}\log\frac{q_{1,\epsilon}(X)}{q_{0,\epsilon}(X)}-\log\left(\sup_{P\in \mathcal{P}_0}\mathbb E_{X\sim P}\left[\frac{{q}_{1,\epsilon}(X)}{q_{0,\epsilon}(X)}\right]+(c^{\prime \prime}-c^{\prime})\epsilon\right).$

Note that $\tv(Q_{1,\epsilon},Q)<2\epsilon$, so
\begin{equation}
    \left|\mathbb E_{Q}\log\frac{q_{1,\epsilon}(X)}{q_{0,\epsilon}(X)}- \mathbb E_{Q_{1,\epsilon}}\log\frac{q_{1,\epsilon}(X)}{q_{0,\epsilon}(X)}\right|\leq 2(\log c^{\prime \prime}-\log c^{\prime})\epsilon.
\end{equation}
Therefore,
\begin{equation*}
    r^{\epsilon}_{\text{RIPr}}=\inf_{Q\in H_1^\epsilon}r^{Q,\epsilon}_{\text{RIPr}}\geq \kl(Q_{1,\epsilon},Q_{0,\epsilon})-2(\log c^{\prime \prime}-\log c^{\prime})\epsilon-\log\left(\sup_{P\in \mathcal{P}_0}\mathbb E_{P}\frac{{q}_{1,\epsilon}(X)}{q_{0,\epsilon}(X)}+(c^{\prime \prime}-c^{\prime})\epsilon\right).
\end{equation*}
\end{proof}

\begin{proof}[Proof of \Cref{thm:asymp-gr-comp-null}]

From \eqref{eq:c2-comp-null}, we get $(1-\epsilon)(k+\frac{1}{c^{\prime \prime}})\geq k$, which implies $kc^{\prime \prime}\leq \frac{1}{\epsilon}-1$. Similarly, from  \eqref{eq:c1}, we get $kc^{\prime}\geq \frac{\epsilon}{1-\epsilon}$. Therefore, $(\log c^{\prime \prime}-\log c^{\prime})\epsilon\to 0$, as $\epsilon\to 0$. From \cref{lem:c2-c1}, we have $E_{P_1}\log\frac{q_{1,\epsilon}(X)}{q_{0,\epsilon}(X)}\to \kl(P_1,P_0)$, as $\epsilon\to 0$.
Note that $\tv(P_1,Q)\leq\epsilon$, so
\begin{equation}
\label{eq:a}
    \left|\mathbb E_{Q}\log\frac{q_{1,\epsilon}(X)}{q_{0,\epsilon}(X)}- \mathbb E_{P_1}\log\frac{q_{1,\epsilon}(X)}{q_{0,\epsilon}(X)}\right|\leq (\log c^{\prime \prime}-\log c^{\prime})\epsilon.
\end{equation} 
Therefore, $\mathbb E_{Q}\log\frac{q_{1,\epsilon}(X)}{q_{0,\epsilon}(X)}\to  \kl(P_1,P_0)$, as $\epsilon\to 0$.
From \cref{lem:c2eps} and \eqref{eq:b}, we have 

$\log\left(\sup_{P\in \mathcal{P}_0}\mathbb E_{X\sim P}\left[\frac{{q}_{1,\epsilon}(X)}{q_{0,\epsilon}(X)}\right]+(c^{\prime \prime}-c^{\prime})\epsilon\right)\to 0.$ Hence,
\begin{equation}
    r^{Q,\epsilon}_{\text{RIPr}}=\mathbb E_{Q}\log\frac{q_{1,\epsilon}(X)}{q_{0,\epsilon}(X)}-\log\left(\sup_{P\in \mathcal{P}_0}\mathbb E_{X\sim P}\left[\frac{{q}_{1,\epsilon}(X)}{q_{0,\epsilon}(X)}\right]+(c^{\prime \prime}-c^{\prime})\epsilon\right) \to  \kl(P_1,P_0).
\end{equation}
Thus, $r^{\epsilon}_{\text{RIPr}}=\inf_{Q\in H^\epsilon_1}r^{Q,\epsilon}_{\text{RIPr}}\to  \kl(P_1,P_0)$, as $\epsilon\to 0.$
\end{proof}

\begin{proof}[Proof of \Cref{prop:one-param-exp-fam}]
 Let, $P_{\theta^*}$ be the RIPr of $P_{\theta_1}$ on $\mathcal{P}_0=\{P_{\theta}: \theta\in[a,b] \text{ for some } -\infty\leq a\leq b\leq \infty\}$ and $\theta_1\notin[a,b]$. Then we shall show that $P_{\theta^*}$ has density $p_{\theta^*}$, where $\theta^*$ is the closest element in $[a,b]$ from $\theta_1$.

It is enough to show that $\mathbb E_{\theta_1}(p_{\theta}/p_{\theta^*})\leq 1$, for all $\theta\in \theta_0$ \cite[Theorem 4.7]{larsson2024numeraire}.
Now, 
\begin{align*}
    \mathbb E_{\theta_1}(p_{\theta}(X)/p_{\theta^*}(X))=&\mathbb E_{\theta_1}(\exp\{(\theta_1-\theta^*)T_k(X)-A(\theta)+A(\theta^*)\})\\
    =&\int \exp\{(\theta_1-\theta^*)T_k(x)-A(\theta)+A(\theta^*)+\theta_1T(x)-A(\theta_1)\}h(x)dx\\
    =& \exp\{A(\theta_1-\theta^*+\theta_1)-A(\theta)+A(\theta^*)-A(\theta_1)\}
\end{align*}
Since A is convex, its derivative $A^\prime$ is increasing and either $\theta_1<\theta^*\leq\theta$ or $\theta_1>\theta^*\geq\theta$. So,
\begin{align*}
    A(\theta_1-\theta^*+\theta_1)-A(\theta_1)&=\int_0^1 (\theta-\theta^*)A^\prime(\theta_1+t(\theta-\theta^*))dt\\
    &\leq \int_0^1 (\theta-\theta^*)A^\prime(\theta_1+t(\theta^*-\theta^*))dt\\
    &=A(\theta)-A(\theta^*).
\end{align*}
Therefore, $\mathbb E_{\theta_1}(p_{\theta}/p_{\theta^*})\leq 1$, for all $\theta\in \theta_0$.

Since $\hat{\theta}_n\to{\theta}_1(H)$, $\hat\theta^*_n$ converges to either $a$ or $b$, we call the limit $\theta_0(H)$. We have $\hat{\theta}^*_n\to{\theta}_0(H)$. We also have shown above that the RIPr $P_0^H$ of $P_1^H$ has density $p_0^H=p_{{\theta}_0(H)}$. 
So, we now have that $\hat p_{0,n}\to p_0^H$. Therefore, \Cref{assmp-ripr-conv} holds.
\end{proof}

\begin{proof}[Proof of \Cref{thm:gr-exp-family}]
By the above lemma, its RIPr would have density $\hat p_{0,n}=p_{\hat{\theta}^*_n}$, 
where $\hat\theta^*_n$ is the nearest element in $[a,b]$ from $\hat\theta_{n}\notin[a,b]$ (so $\hat\theta^*_n$ is either $a$ or $b$). Since $\hat{\theta}_n\to{\theta}_1(H)$, $\hat\theta^*_n$ converges to either $a$ or $b$, we call the limit $\theta_0(H)$. We have $\hat{\theta}^*_n\to{\theta}_0(H)$. By lemma, we also have that the RIPr $P_0^H$ of $P_1^H$ has density $p_0^H=p_{{\theta}_0(H)}$. 
Therefore, we now have that $\hat p_{0,n}\to p_0^H$, and $\hat p_{1,n}\to p_1^H$. 
Define,
$$B_\epsilon^H(x)=\frac{\frac{{q}^H_{1,\epsilon}(x)}{q^H_{0,\epsilon}(x)}}{\sup_{P\in \mathcal{P}_0}\mathbb E_{X\sim P}\left[\frac{{q}^H_{1,\epsilon}(X)}{q^H_{0,\epsilon}(X)}\right]+(c^{\prime \prime}_H-c^{\prime}_H)\epsilon}.$$
Since $c^{\prime \prime}_n\to c^{\prime \prime}_H, c^{\prime}_n\to c^{\prime}_H$, we have $\hat{q}_{n,1,\epsilon}/\hat q_{n,0,\epsilon}\to{q}^H_{1,\epsilon}/q^H_{0,\epsilon}$ almost surely as $n\to\infty$ and that would immediately imply $\log \hat B_{\epsilon,i}(X_i)-\log B^H_\epsilon(X_i)\to 0$ almost surely as $n\to\infty$.
\begin{align*}
 \frac{1}{n}\log R_{n,\epsilon}^{\text{plug-in,RIPr}}=\frac{1}{n}\sum_{i=1}^n\left(\log \hat B_{n,\epsilon}(X_i)-\log B^H_\epsilon(X_i) \right)+\frac{1}{n}\sum_{i=1}^n \log B^H_\epsilon(X_i) \to r^{H,\epsilon}_{\text{RIPr,plug-in}},
\end{align*}
since the first term converges to $0$ and the second term converges to $r^{Q,\epsilon}_{\text{RIPr,plug-in}}=\mathbb E_H \log B^H_\epsilon(X)$, by SLLN. Imitating the proof in \Cref{thm:gr-lb-comp-null},
\begin{align*}
    r^{H,\epsilon}_{\text{RIPr,plug-in}}&=\mathbb E_H \log B^H_\epsilon(X)\\
    &\geq \mathbb E_{P_{\theta_1}}\log(q_{1,\epsilon}^H/q_{0,\epsilon}^H)-(\log c^{\prime \prime}_H-\log c^{\prime}_H)\epsilon-\log\left(\sup_{P\in \mathcal{P}_0}\mathbb E_{P}\frac{{q}^H_{1,\epsilon}(X)}{q^H_{0,\epsilon}(X)}+(c^{\prime \prime}_H-c^{\prime}_H)\epsilon\right).
\end{align*}

\end{proof}

{

\subsection{Figures with error bars}

We consider the (pre-contaminated) null $P_0$ to be $N(0,1)$, the alternative to be $P_1=N(1,1)$. For simple null vs simple alternative, we employ the robust test described in \Cref{sec:lfd-contamination-simp}, and the non-robust test is the classical SPRT.  \Cref{fig:error-bar} is the average of $10$ independent simulations, along with error bars.

\begin{figure*}[!htb]

\centering
\centering
\subfloat[Robust test]{\includegraphics[width=0.49\linewidth,height=0.3\linewidth]{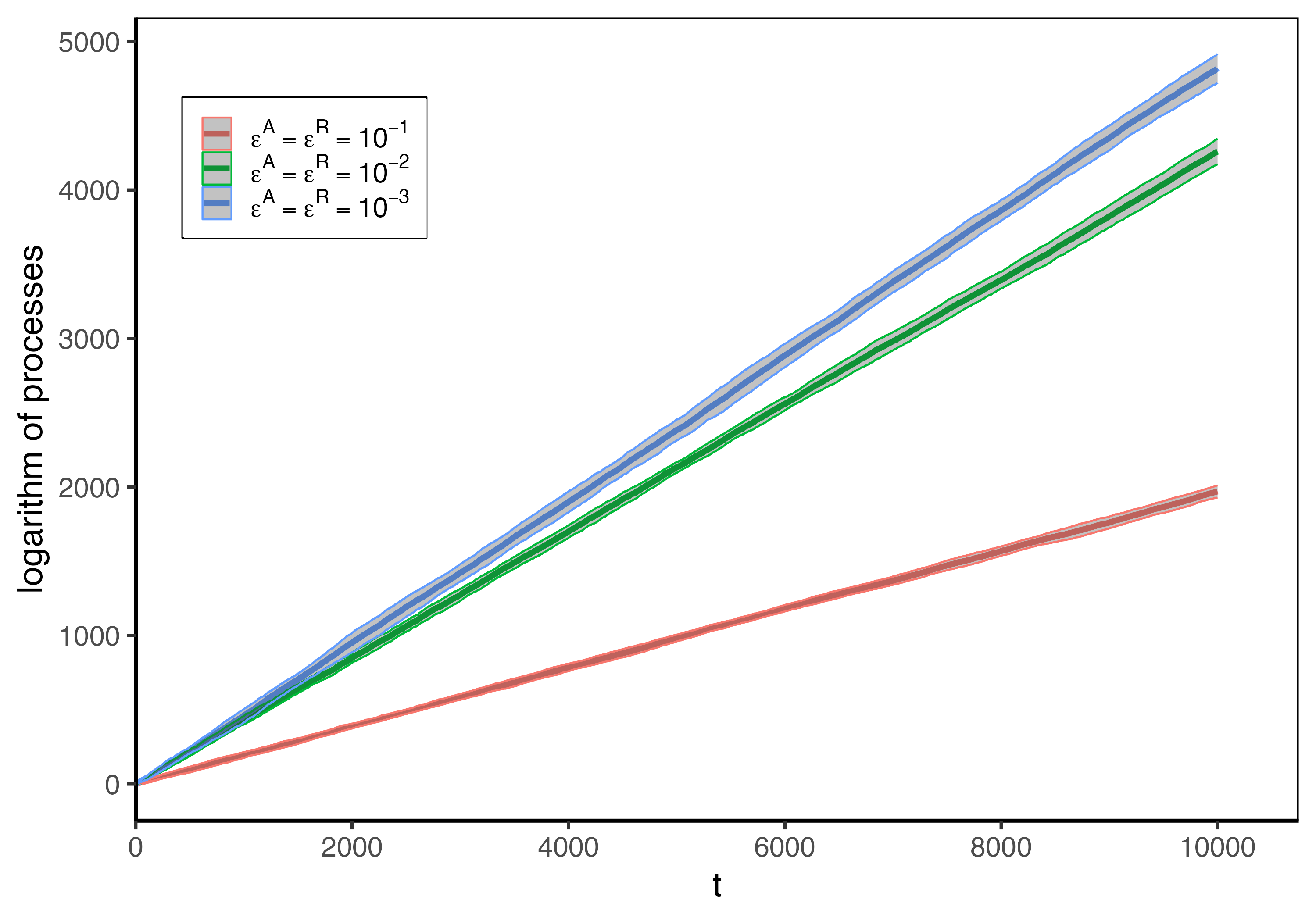}} 
\subfloat[Non-robust test]{\includegraphics[width=0.49\linewidth,height=0.3\linewidth]{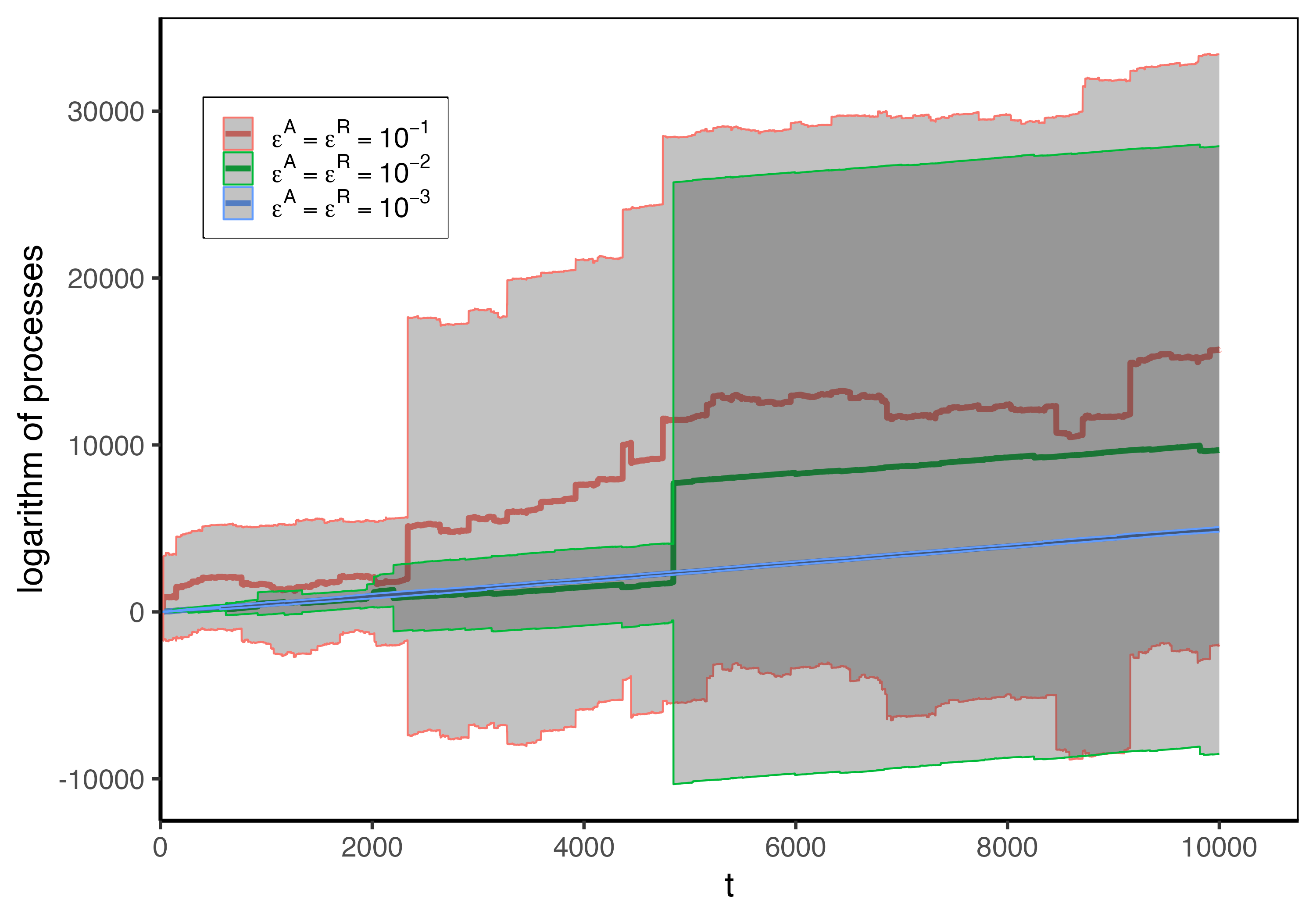}}
\caption[]{Data is drawn from $(1-\epsilon^R) \times N(1,1) + \epsilon^R \times\text{Cauchy}(-1,10)$ and $P_0=N(0,1), \mu_1=1$, $\epsilon^A=\epsilon^R=0.1,0.01,0.001$. The growth rate of our robust tests increases as $\epsilon$ decreases.} 
\label{fig:error-bar}  
\end{figure*}

Now, we consider the (pre-contaminated) null $P_0$ to be $N(0,1)$, the alternative to be $P_1=\{N(\mu,1):\mu\neq 0\}$. We employ the robust predictable plug-in method test described in \Cref{sec:comp-general} since there is no LFD. For both the non-robust predictable plug-in method and our robustified predictable plug-in method for the composite alternative, we use the sample median as an estimate of $\mu$. \Cref{fig:error-bar-comp} is the average of $10$ independent simulations, along with error bars.

\begin{figure*}[!htb]

\centering
\centering
\subfloat[Robust test]{\includegraphics[width=0.49\linewidth,height=0.3\linewidth]{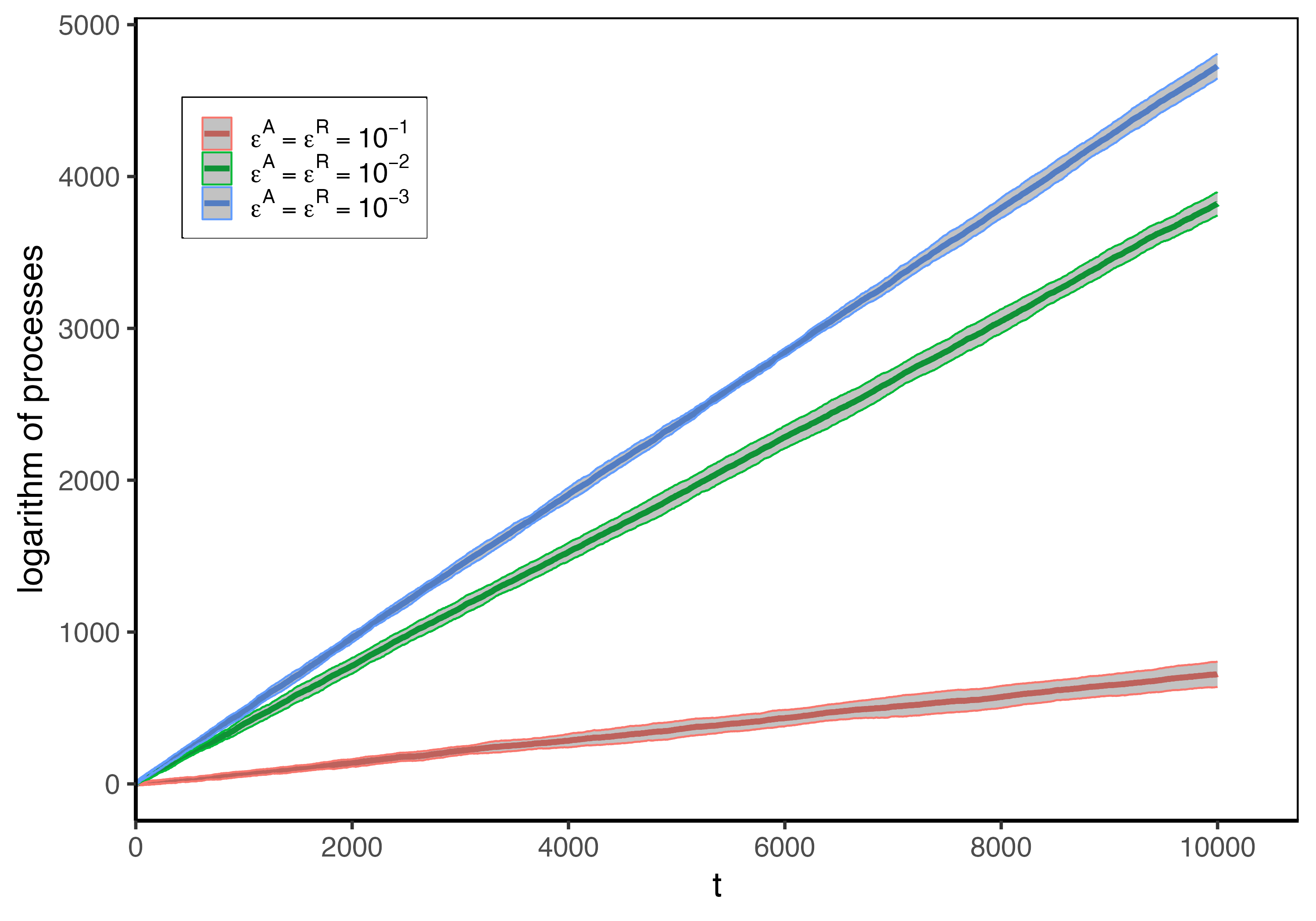}} 
\subfloat[Non-robust test]{\includegraphics[width=0.49\linewidth,height=0.3\linewidth]{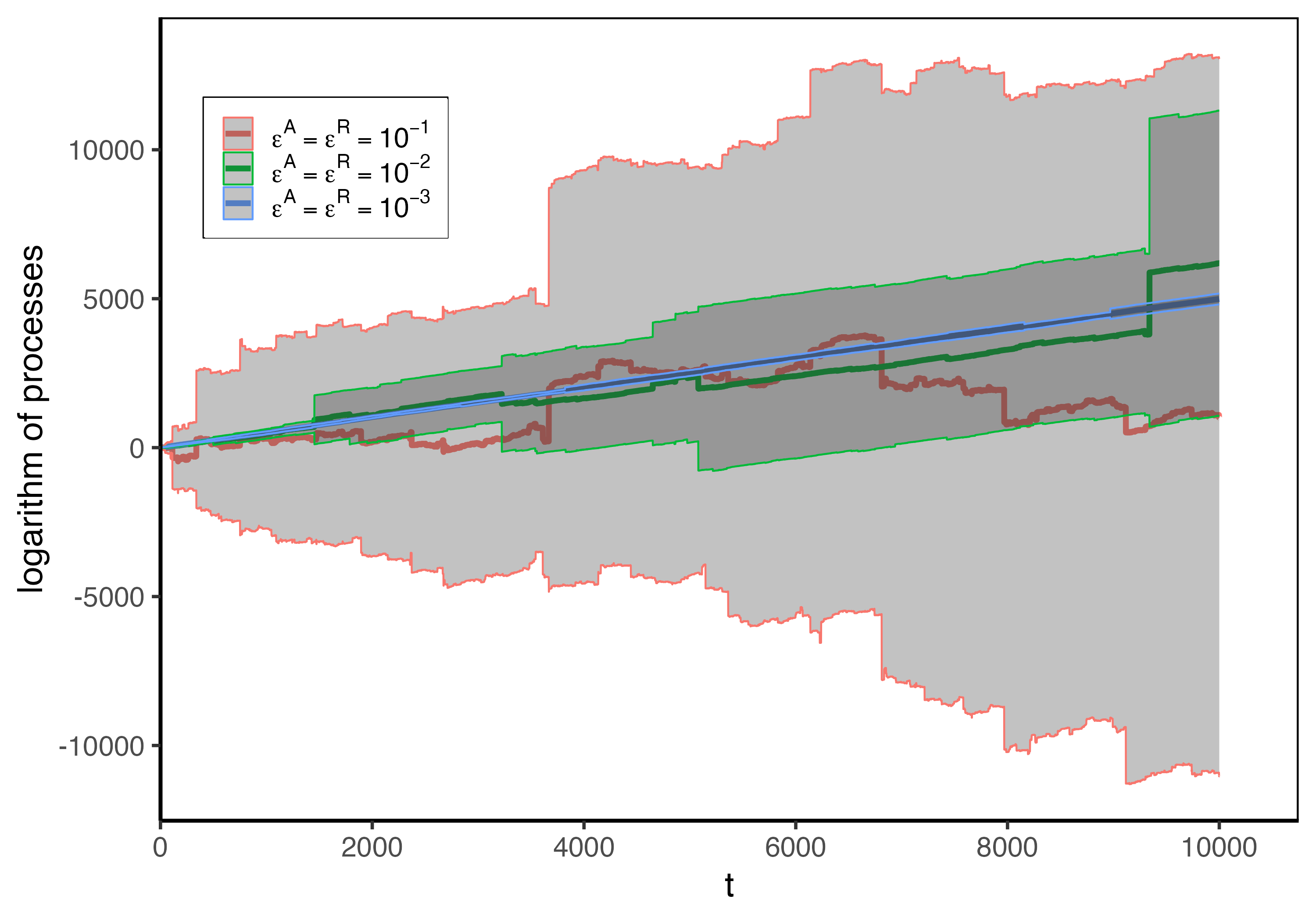}}
\caption[]{Data is drawn from $(1-\epsilon^R) \times N(1,1) + \epsilon^R \times\text{Cauchy}(-1,10)$ and $P_0=N(0,1), \mu_1=1$, $\epsilon^A=\epsilon^R=0.1,0.01,0.001$. The growth rate of our robust tests increases as $\epsilon$ decreases.} 
\label{fig:error-bar-comp}  
\end{figure*}

 We observe that the robust test exhibits very low variability across runs, resulting in narrow error bars even with only $10$ repetitions, whereas the non-robust test displays substantially higher variance. 
These figures correspond to \Cref{fig:different-eps}, but the robust and non-robust tests are displayed in separate subfigures to highlight the difference in variability better. The other figures in \Cref{sec:expt} can similarly be split; however, we omit them here because their behavior is nearly identical. The key takeaway is that the robust tests consistently exhibit very small variance, whereas the non-robust tests show huge variability.

\subsection{Empirical type-I error and power}
We conduct three experiments to compare the type I error and power of our robust tests and classical non-robust tests. 
\begin{itemize}
    \item Experiment 1: Pre-contaminated null and alternative are $\mathcal{P}_0=N(0,1)$ and $\mathcal{P}_1=N(1,1)$ respectively. 
     \item Experiment 2: Pre-contaminated null and alternative are $\mathcal{P}_0=\{N(\mu,1):-0.5\leq\mu\leq 0.5\}$ and $\mathcal{P}_1=N(1,1)$ respectively.
    \item Experiment 3: Pre-contaminated null and alternative are $\mathcal{P}_0=\{N(\mu,1):\mu\leq 0\}$ and $\mathcal{P}_1=\{N(\mu,1):\mu\geq 1\}$ respectively. 
     
\end{itemize}

In all experiments, we operate under the $\epsilon$-contamination model, and null samples are generated from $(1-\epsilon) \times N(0,1) + \epsilon \times\text{Cauchy}(-1,10)$ and alternative samples are drawn from $(1-\epsilon) \times N(1,1) + \epsilon \times\text{Cauchy}(-1,10)$  with $\epsilon=0.01$. For the robust tests, we apply likelihood ratio tests constructed using the LFDs of $\mathcal{P}_0^\epsilon$ versus $\mathcal{P}_1^\epsilon$ (with the same $\epsilon = 0.01$). For comparison, the non-robust tests are the classical likelihood ratio tests based on the LFDs of the uncontaminated model classes $\mathcal{P}_0$ versus $\mathcal{P}_1$.

Note that under the null, the robust tests are unlikely to stop, making a direct estimation of the type-I error $\mathbb{P}_{P_0}[\tau < \infty]$ impractical. However, note that it can be written as $\mathbb P_{P_0}[\tau<\infty]=\mathbb E_{P_1}[L_\tau\mathds{1}(\tau<\infty)]$, where $L_t:=\prod_{i=1}^t\frac{dP_0}{dP_1}(X_i)$ is the likelihood ratio process. So, we draw alternative samples and compute the empirical mean of $L_\tau\mathds{1}(\tau<\infty)$ (because under the alternative, we have $\tau<\infty$ almost surely). However, the non-robust tests tend to stop under both the null and alternative, so for these procedures we directly estimate type-I error and power by imposing a maximum sample size of $50{,}000$.

We report the results of $500$ simulation runs in \Cref{tab:simple}. Our robust tests have empirical type-I errors remaining below the nominal level $\alpha=0.05$, and achieve power $1$ in all settings. However, the type-I errors of the non-robust tests are very high.

 \begin{table}[!ht]
    \centering
    \caption{
Empirical type-I error and power (averaged over $500$ runs) of the robust and non-robust tests with target level $0.05$.
    }
\label{tab:simple}
    \resizebox{0.4\linewidth}{!}{
    \begin{tabular}{c|cccc}
    \toprule
    \addlinespace
  &\multicolumn{2}{c}{Type-I error} &
\multicolumn{2}{c}{Power} \\
\cmidrule(lr){2-3} \cmidrule(lr){4-5} &
Robust  &  Non-robust  & Robust  &  Non-robust \\
\midrule
\addlinespace 
Experiment 1 &    0.048 &  0.992    & 1 & 1\\
Experiment 2  & 0.044  & 1  & 1 & 0.998\\
Experiment 3  & 0.047  & 0.986 & 1 & 1\\
\bottomrule
 \end{tabular}}
 \end{table}
 }

\bibliography{ref}

\end{document}